\DeclareRobustCommand{\stirling}{\genfrac\{\}{0pt}{}}
\let\othelstnumber=\thelstnumber
\def\createlinenumber#1#2{
    \edef\thelstnumber{%
        \unexpanded{%
            \ifnum#1=\value{lstnumber}\relax
              #2%
            \else}%
        \expandafter\unexpanded\expandafter{\thelstnumber\othelstnumber\fi}%
    }
    \ifx\othelstnumber=\relax\else
      \let\othelstnumber\relax
    \fi
}
\lstdefinelanguage{xc}{
  keywords={int,if,return}
}[keywords]
\definecolor{check}{RGB}{0,0,150}
\definecolor{true}{RGB}{0,150,0}
\definecolor{false}{RGB}{150,0,0}
\definecolor{checked}{RGB}{0,100,0}
\definecolor{commentgreen}{RGB}{2,112,10}
\definecolor{eminence}{RGB}{108,48,130}
\definecolor{weborange}{RGB}{200,130,0}
\definecolor{frenchplum}{RGB}{129,20,83}
\newcommand{\prettylstc}[0]{
\lstset {
    language=c,
    frame=ltrb,
    rulecolor=\color{blue},
    tabsize=4,
    showstringspaces=false,
    commentstyle=\color{commentgreen},
    keywordstyle=\color{eminence},
    stringstyle=\color{red},
    basicstyle=\footnotesize\ttfamily,
    emph={int,char,double,float,unsigned,void,public,native,bool,pragma,check,if,return},
    emphstyle={\color{blue}},
    escapechar=\&,
    classoffset=1,
    otherkeywords={\#,>,<,.,;,-,\},\{,!,||,=,~,*,-,/,+,:},
    morekeywords={\#,>,<,.,;,-,\},\{,!,||,=,~,*,-,/,+,:},
    keywordstyle=\color{weborange},
    classoffset=0
}}
\newcommand{\prettylstciao}[0]{
\lstset{language=Prolog,
        frame=ltrb,
        rulecolor=\color{check},
        tabsize=4,
        showstringspaces=false,
        breaklines=true,breakatwhitespace=true,
        showlines=true,
        showspaces=false,showtabs=false,
    commentstyle=\color{commentgreen},
    keywordstyle=\color{eminence},
    stringstyle=\color{red},
    basicstyle=\footnotesize\ttfamily, 
    keywordstyle=\color{weborange},
    emphstyle={\color{check}},
    emph={pred,prop,trust,check,checked,true,rsize,cardinality,not_fails,module,exp,cost,costb,
      steps_ub,steps_lb,size_ub,size_lb,covered,mut_exclusive,cost,use_module,int,calls,success,head_cost,literal_cost,
      mshare,trust_default,int,atm,term,comp,
      is_det,num,var,list,ground,length,terminates,steps_o,resource,entry,impl_defined},
    otherkeywords={:,>,<,>=,=<,.,;,-,!,=,~,*,\&,+,:-,[,],|,->,:=},
    morekeywords= {:,>,<,>=,=<,.,;,-,!,=,~,*,\&,+,:-,[,],|,->,:=},
    escapechar=@,
    escapeinside=~~
      }}
\newcommand{\inputlanguage}{input language\xspace}
\newcommand{\resourceusage}{resource usage\xspace}
\newcommand{\specialization}{specialization\xspace}
\newcommand{\includeapproxdef}[2]{\ifthenelse{\equal{\approxdef}{true}}{#1}{#2}}
\newcommand{\approxdef}{false}
\newcommand{\assign}[2]{#1 $\leftarrow$ #2}
\newcommand{\return}{\textbf{return}}
\newcommand{\lsem}{\mbox{$\lbrack\hspace{-0.3ex}\lbrack$}}
\newcommand{\rsem}{\mbox{$\rbrack\hspace{-0.3ex}\rbrack$}}
\newcommand{\Inten}{\mbox{$ I$}}
\newcommand{\sem}[1]{\lsem #1 \rsem}
\newcommand{\prog}{p}
\newcommand{\p}{{\sem{\prog}}}
\newcommand{\nmetric}{nat}
\newcommand{\underapprox}{\p_{\alpha-}}
\newcommand{\overapprox}{\p_{\alpha+}}
\def\con{\wedge}
\newcommand{\kbd}[1]{\mbox{\tt #1}}
\newcommand{\nt}[1]{\mbox{\it #1}}
\newcommand{\level}{level\xspace}
\newcommand{\levels}{levels\xspace}
\newcommand{\hcir}{HC IR\xspace}
\newcommand{\llvmir}{LLVM IR\xspace}
\newcommand{\llvm}{LLVM\xspace}
\newcommand{\tool}{tool\xspace}
\newcommand{\ial}{\ciao assertion language\xspace}
\newcommand{\ialassertion}{\ciao assertion\xspace}
\newcommand{\ialassertions}{\ciao assertions\xspace}
\newcommand{\isa}{ISA\xspace}
\newcommand{\Isa}{ISA\xspace}
\newcommand{\xc}{XC\xspace}
\definecolor{lightgrey}{rgb}{0.95,0.95,0.95}
\newcommand{\summation}[3]{\sum_{#1}^{#2} #3}
\newcommand{\intoperator}{\Sigma \ \mathrm{d}x}
\newcommand\terminal[1]{`{\tt #1}'}
\newcommand{\gis}{$::=\ $}
\newcommand{\gor}{$|\ $}
\newcommand{\sym}[1]{$\langle #1 \rangle$}
\definecolor{darkgreen}{rgb}{0,0.75,0}
\definecolor{darkblue}{rgb}{0,0,0.75}
\definecolor{lightgreen}{rgb}{0.85,1,0.85}
\definecolor{lightblue}{rgb}{0.85,0.85,1}
\definecolor{grey85}{rgb}{0.45,0.45,0.45}
\definecolor{lightblue2}{rgb}{0.75,0.75,1}
\definecolor{lightblue3}{rgb}{0.70,0.70,0.92}
\definecolor{magenta}{rgb}{1,0,1}
\definecolor{sienna}{rgb}{1,0,1}
\definecolor{maroon4}{rgb}{0.55,0.11,0.38}
\definecolor{chocolate}{rgb}{0.82,0.41,0.12}
\definecolor{red4}{rgb}{0.55,0,0}
\definecolor{darkred}{rgb}{0.75,0,0}
\newcommand\rsd{{\mathbb R}}
\newcommand{\realintervalset}{IS} 
\newcommand{\realinterval}{I} 
\newcommand{\natintervalset}{IS} 
\newcommand{\natinterval}{I}
\newcommand\sizdom[1]{{\mathbb N}^{#1}}
\newcommand\sizdomsingle{{\mathbb N}}
\renewcommand\vec\bar
\newcommand\varx{\textup{\tt x}} 
\newcommand\vecx{\vec \varx}
\newcommand{\ciao}{Ciao\xspace}
\newcommand{\ciaopp}{CiaoPP\xspace}
\def\imp{\hbox{${\tt \ :\!-\ }$}}
\newcommand{\optionaltext}[1]{#1}
\renewcommand{\optionaltext}[1]{ }
\newtheorem{definition}{Definition}
\newtheorem{example}{Example} 
\newenvironment{examplebox}{\begin{example}}{\hfill $\Box$\end{example}} 
\newtheorem{theorem}{Theorem}
\newtheorem{lemma}{Lemma} 
\newtheorem{corollary}{Corollary} 
\title[Interval-based Resource Usage Verification] {Interval-based
  Resource Usage Verification\\ [-1mm]
  by Translation into Horn Clauses\\ [-1mm]
and an Application to Energy Consumption$^{\small
  \thanks{
    Research partially funded by EU FP7
      318337 \emph{ENTRA}, Spanish MINECO TIN2012-39391
      \emph{StrongSoft}, and TIN2015-67522-C3-1-R \emph{TRACES}
      projects, and the Madrid M141047003 \emph{N-GREENS}
      program. \newline}}$
}
\author[P. LOPEZ-GARCIA et al.] {P. LOPEZ-GARCIA$^{1,3}$ ~~ L. DARMAWAN$^{1}$ ~~ M. KLEMEN$^{1,2}$ ~~ \\ {\normalsize \emph{U. LIQAT$^{1,2}$ ~~ F. BUENO$^{2}$ ~~ M.V. HERMENEGILDO$^{1,2}$}} \\
\\ 
   $^1$IMDEA Software Institute \\
   \email{\{pedro.lopez,luthfi.darmawan,maximiliano.klemen,umer.liqat,manuel.hermenegildo\}@imdea.org} \\ 
   $^2$Universidad Polit\'ecnica de Madrid (UPM) \\
   \email{bueno@fi.upm.es} \\
   $^3$Spanish Council for Scientific Research  (CSIC) \\
   \vspace*{-1mm}
}
\newcommand{\secbeg}{\vspace*{-3mm}}
\newcommand{\secend}{}
\newcommand{\ltintervals}{<_{f}}
\newcommand{\leqintervals}{\leq_{f}}
\newcommand{\infinitecalculus}{infinite calculus\xspace}
\newcommand{\finitecalculus}{finite calculus\xspace}
\begin{document}

\label{firstpage}

\maketitle

\begin{abstract}
\ \\ [-12mm]

 In many applications it is important to ensure conformance with
 respect to specifications that constrain the use of resources such as
 execution time, energy, bandwidth, etc.  We present a configurable
 framework for static resource usage verification where specifications
 can include data size-dependent resource usage functions, expressing
 both lower and upper bounds.  Ensuring conformance with respect to
 such specifications is an undecidable problem. Therefore, our
 framework infers resource usage functions (of the same type as the
 specifications, i.e., data-size dependent, and providing upper and
 lower bounds), which \emph{safely approximate} the actual resource
 usage of the program, and which are safely compared against the
 specification.  We start by reviewing how this framework is
 parametric with respect to the programming language by a) translating
 programs to an intermediate representation based on Horn clauses, and
 b) using the configurability of the framework to describe the
 resource semantics of the input language.  We then provide a more
 detailed formalization of the approach and extend the framework so
 that the outcome of the static checking of assertions can generate
 \emph{intervals} of the input data sizes for which assertions hold or
 not, i.e., a given specification can be proved for some intervals but
 disproved for others.  We also generalize the specifications to
 support preconditions expressing intervals within which the input
 data size of a program is supposed to lie.  Most importantly, we
 provide new techniques which extend the classes of resource usage
 functions that can be checked, such as functions containing
 logarithmic or summation expressions, or some functions with multiple
 variables.  We also report on and provide results from an
 implementation within the \ciao/\ciaopp~framework, as well as on a
 practical tool built by instantiating this framework for the
 verification of energy consumption specifications for
 imperative/embedded programs written in the \xc language and running
 on the XS1-L architecture. Finally, we illustrate with an example how
 embedded software developers can use this tool, in particular for
 determining values for program parameters that ensure meeting a given
 energy budget while minimizing the loss in quality of service.

\vspace*{-2mm}
\end{abstract}

\begin{keywords}
\begin{small}
  Static Analysis, Resource Usage Analysis and Verification, Horn
  Clause-based Analysis and Verification, Energy Consumption,
  Program Verification and Debugging.
\vspace*{-5mm}
\end{small}
\end{keywords}

\secbeg
\section{Introduction and Motivation}
\label{sec:intro}
\secend

The conventional understanding of software correctness is absence of
errors or bugs, expressed in terms of conformance of all possible
executions of the program with a functional specification (like type
correctness) or behavioral specification (like termination or possible
sequences of actions).
However, in an increasing number of computing applications,
ranging from those running on devices with limited resources (e.g.,
the ones used in \emph{Internet of Things} applications, sensors,
smart watches, smart phones, portable/implantable medical devices, or
mission critical systems), to large data centers and high-performance
computing systems,
it is also important and sometimes essential to ensure conformance
with respect to specifications expressing 
non-functional global properties such as energy consumption, maximum
execution time, memory usage, or user-defined resources.
For example, in a real-time application, a program completing an
action later than required is as erroneous as a program not computing
the correct answer. The same applies to an embedded application in a
battery-operated device (e.g., a portable or implantable medical
device, an autonomous space vehicle, or even a mobile phone) if the
application makes the device run out of batteries earlier than
required, making the whole system useless in practice.
In general, high performance embedded systems must control, react to,
and survive in a given environment, and this in turn establishes
constraints about the system's performance parameters including energy
consumption and reaction times. Therefore, a mechanism is necessary in
these systems in order to prove correctness with respect to
specifications about such non-functional global properties.

In previous work we have developed a general 
approach to automated verification 
based on a novel combination of assertion-based partial
specifications, static analysis, run-time checking, and testing
~\cite{aadebug97-informal,prog-glob-an,assrt-theoret-framework-lopstr99,ciaopp-sas03-journal-scp,testchecks-iclp09},
and which has been implemented in the \ciaopp~framework.
In addition to different functional properties
(supported by ``pluggable'' \emph{abstract domains}\footnote{By
  pluggable abstract domains we refer to the fact that in CiaoPP new
  abstract domains can be integrated easily as modules implementing
  a well-defined interface. This interface connects each abstract
  domain to the built-in abstract interpretation algorithms (the
  ``fixpoints''), giving
  rise to different program analyzers.
  The same interface also connects the domains to other parts of the
  system that are based on abstractions, such as, e.g., the abstract
  partial evaluators.  }), such as types,
modes, or groundness, this framework can also deal with a large class
of properties related to resource usage, including upper and lower
bounds on execution time, memory, energy, and, in general,
user-definable resources (the latter in the sense
of~\cite{resource-iclp07,resources-bytecode09}).
Such bounds are given as functions on input data sizes
(see~\cite{resource-iclp07}
for the different metrics that can be used to measure data sizes, such
as list length, term depth, or term size).

\begin{figure}[t]
\centering

\begin{tikzpicture}[
   node distance=0.7cm and 0.5cm,
   data/.style={ 
     black, draw, align=center,drop shadow,
     chamfered rectangle,
     chamfered rectangle corners=north west, black, fill=orange!10, draw,
     align=center, text width=2.15cm, 
     font=\scriptsize
   },
   component/.style={
      black, draw, align=center,drop shadow,
      shape=rectangle, fill=darkgreen!10, text width=2.4cm, 
      rounded corners=4pt,
      font=\footnotesize \bf
   },
   wide_component/.style={
      black, draw, align=center,drop shadow,
      shape=rectangle, fill=darkgreen!10, text width=4cm, 
      rounded corners=4pt,
      font=\footnotesize 
   },
   every edge/.style={<-,thick,draw,>=latex},
 ]


\node[data] (JAVASOURCE) {Java Source\\+Assertions};
\node[component] (JAVAC) 
   [below=of JAVASOURCE] {javac};
\node[data] (JAVABYTECODE) 
   [below=of JAVAC] {Java Bytecode};
\node[component] (BYTECODE_TRA) 
   [below=of JAVABYTECODE] {Transformation\\ (\textit{soot})};
   
\node[component] (JAVA_TRA) 
   [right=of BYTECODE_TRA] {Transformation\\ (\textit{java parser})};
   
\node[data] (CIAO) 
   [right=4cm and 2.5cm of JAVASOURCE] {Ciao Source\\+Assertions};

\node[data] (XC) 
   [right=of CIAO] {XC Source\\+Assertions};
\node[component] (XCC) 
   [below=of XC] {xcc};
\node[data] (ISALLVM) 
   [below=of XCC] {ISA/LLVM/...};

\node[component] (ISALLVM_TRA) 
   [below=of ISALLVM] {Transformation\\ (\textit{llvm/isa})};
   
\node[black, draw, align=center,drop shadow,
     chamfered rectangle,
     chamfered rectangle corners=north west, black, fill=orange!10, draw,
     align=left, text width=10cm, 
     font=\footnotesize] (HCIR) 
   [below left=5cm and -5.5cm of CIAO] {\hcir\\ (Horn clauses)\\ \quad};
   
\node[data] (CODE) [right=3cm and 3cm of HCIR.west]{Code};   
\node[data] (ASSRTS) [right= 1.5cm and 1.5cm of CODE]{Assertions};


\node[wide_component] (ANALYSIS) 
[below=of HCIR] {{\bf Static Analyzers}\\
  (e.g. \emph{Resource Analysis})

};
   
\node[data] (ANALYSIS_RESULT) 
[below=of ANALYSIS] { 
Analysis Results};

\node[component] (VERIFICATION) 
   [right=of ANALYSIS] {Static\\ Comparator};
   
\node[data] (VERIFICATION_RESULT) 
[below=of VERIFICATION] { 
Verification Results};   
   
   

\path[->] 
 (JAVAC) edge  node[pos=0.5]  {}  (JAVASOURCE)
 (JAVABYTECODE) edge  node[pos=0.5]  {}  (JAVAC)
 (BYTECODE_TRA) edge  node[pos=0.5]  {}  (JAVABYTECODE) 
 
 (XCC) edge  node[pos=0.5]  {}  (XC)
 (ISALLVM) edge  node[pos=0.5]  {}  (XCC)

 (ISALLVM_TRA) edge  node[pos=0.5]  {}  (ISALLVM)
 (JAVA_TRA) edge[bend right]  node[pos=0.5]  {}  (JAVASOURCE)
 
 (HCIR.20) edge  node {}  (ISALLVM_TRA)
 (HCIR) edge  node[pos=0.5]  {}  (JAVA_TRA)
 (HCIR) edge  node[pos=0.5]  {}  (BYTECODE_TRA)
 (HCIR.31) edge  node[pos=0.5]  {}  (CIAO)  
  
 (ANALYSIS) edge  node[pos=0.5]  {}  (HCIR)
 (VERIFICATION) edge  node[pos=0.5]  {}  (ASSRTS)
 (VERIFICATION) edge  node[pos=0.5]  {}  (ANALYSIS)

(VERIFICATION_RESULT) edge  node[pos=0.5]  {}  (VERIFICATION)
(ANALYSIS_RESULT) edge  node[pos=0.5]  {}  (ANALYSIS)
 ;

 

 \end{tikzpicture}


\caption{Overview of the framework for analysis and verification of
different input programming languages, using Horn clauses as
intermediate language.
}
\label{fig:hcpipeline_ir}
\end{figure}
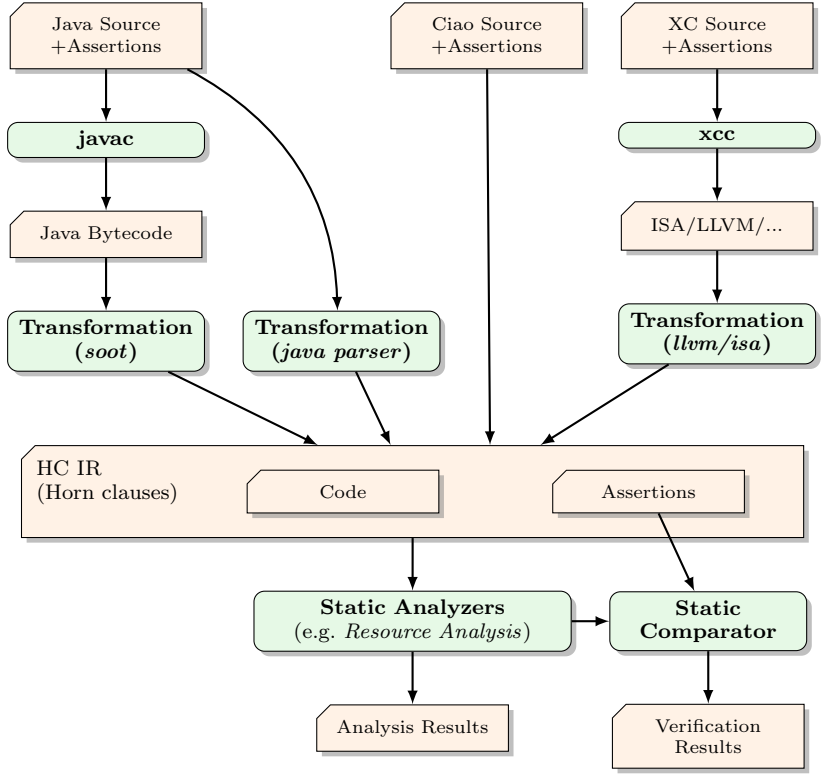

In order to
make our
framework parametric with respect to programming languages and
program representations at different compilation \levels,
each \emph{\inputlanguage} supported (e.g., Java source, Java
bytecode, \xc source, \ciao, LLVM intermediate representation
--\llvmir, or Instruction Set Architecture --\isa) is translated into
an \emph{intermediate
  program representation which is based on Horn
  clauses}~\cite{decomp-oo-prolog-lopstr07} --see
Fig.~\ref{fig:hcpipeline_ir}.
All analysis and verification is
performed on this Horn clause-based representation, that we will refer
to as ``\hcir'' from now on.
I.e., given program $p$ in an \inputlanguage
$L_p$ plus a definition of the semantics of $L_p$, $p$ is translated
into 
a set of Horn clauses capturing the semantics of the program,
$\sem{p}$, or an abstraction of it, $\sem{p}_\alpha$ (see
Sect.~\ref{sec:framework} for a description of this notation).
A Horn clause (HC) is a first-order predicate logic formula of the
form $\forall (S_1 \wedge \ldots \wedge S_n \rightarrow S_0)$ where
all variables in the clause are universally quantified over the whole
formula, and $S_0,S_1,\ldots,S_n$ are atomic formulas, also called
literals.  It is usually written $S_0 \imp S_1,\ldots,S_n$.  This
\hcir~consists of a set 
of connected code \emph{blocks}, each block
represented by a \emph{Horn clause}:
\mbox{$<block\_id>( <params> ) \imp \ S_1, \ \ldots \ ,S_n$}. Each
such block has an entry point, that we call the \emph{head} of the
block (to the left of the $\imp$ symbol), with a number of parameters
$<params>$, and a sequence of steps (the \emph{body}, to the right of
the $\imp$ symbol). Each of these $S_i$ steps (or \emph{literals}) is
either (the representation of) a \emph{call} to another (or the same)
block or
an operation. Such operations depend on the \inputlanguage
represented, i.e., they can be bytecode instructions (from a Java
bytecode program), \isa instructions (from an \isa program), calls to
built-ins or constraints (from a logic program), \llvm instructions,
etc.  The semantics of each bytecode, instruction, built-in, etc. is
provided compositionally to the analyzers by means of \emph{trust}
assertions (see~\ref{sec:ciao-assertion-language}). In the case of
resources, the set of these assertions constitutes the \emph{resource
model} (see Fig.~\ref{fig:hcpipeline_ir} and
Fig.~\ref{fig:energymodel}).
The \hcir~representation offers a good number of features that make it
very convenient for analysis such as supporting naturally Static
Single Assignment (SSA) and recursive forms, making all variable
scoping explicit,
reducing the semantics of all constructs (loops, conditionals,
switches, etc.) to a simple form,
etc.~\cite{decomp-oo-prolog-lopstr07}.

The CiaoPP analyzers handle the \hcir uniformly, regardless of its
origin.  In particular, the resource analysis infers \resourceusage
functions in terms of input data sizes, for all the predicates in the
\hcir program, which are then reflected back to the input language or
representation also as assertions.  This analysis can infer different
classes of \resourceusage functions such as, e.g., polynomial,
exponential, summation, or logarithmic, using the techniques
of~\cite{granularity,caslog,low-bou-sas94,resource-iclp07,NMHLFM08,plai-resources-iclp14}.
\emph{Verification} implies comparing specifications (in our case,
the resource consumption specifications, given in the form of assertions)
against analysis results.
Our focus in this paper is on this comparison process, rather than on
the resource
analysis, which is described
in~\cite{resource-iclp07,plai-resources-iclp14} and its references.
We do not cover the debugging aspect either, i.e., process of finding
the cause of an assertion violation.
Since both static analysis and verification are in general undecidable
our techniques used are necessarily approximate. Nevertheless, such
approximations are \emph{safe}, in the sense that they are guaranteed
to be correct considering all possible executions, i.e., they provide
correct answers or return ``unknown.''

\begin{examplebox}
\label{ex:fact}

\begin{small}
\begin{figure}[t]
\prettylstc
\begin{lstlisting}
#pragma check fact(n) 
        : (1 <= n) ==> (6.0 <= energy_nJ <= 2.3*n+9.0)

int fact(int N) {
  if (N <= 0) return 1;
  return N * fact(N - 1);
}
\end{lstlisting}
\vspace{-2.5mm}
\caption{An XC source (factorial) function.}
\label{fig:xcfactorial}
\vspace{-2.5mm}
\end{figure}
\end{small}

\begin{small}
\begin{figure}[t]
\noindent\begin{minipage}{.38\textwidth}
\prettylstc
\begin{lstlisting}[numbers=left,  numberblanklines=false]
.
.
.
.

<fact>:
001: entsp 0x2	
002: stw   r0, sp[0x1]
003: ldw   r1, sp[0x1]
004: ldc   r0, 0x0
005: lss   r0, r0, r1
006: bf    r0, <008>



007: bu    <010>  
010: ldw   r0, sp[0x1]
011: sub   r0, r0, 0x1
012: bl    <fact>

013: ldw   r1, sp[0x1]
014: mul   r0, r1, r0
015: retsp 0x2


008: mkmsk r0, 0x1
009: retsp 0x2

\end{lstlisting}
\end{minipage}\hfill
\begin{minipage}{.50\textwidth}
\bgroup
\createlinenumber{12}{12a}
\createlinenumber{13}{12b}
\createlinenumber{19}{19a}
\createlinenumber{20}{19b}

\prettylstciao
\begin{lstlisting}[escapechar=@,numbers=left, numberblanklines=false]
:- check pred fact(N, Ret)
   : intervals(@\nmetric@(N),[i(1,inf)])
   + costb(energy_nJ,6.0,
           2.3*@\nmetric@(N)+9.0).

fact(R0,R0_3) :-	
   entsp(0x2),
   stw(R0,Sp0x1),
   ldw(R1,Sp0x1),
   ldc(R0_1,0x0),
   lss(R0_2,R0_1,R1),
   bf(R0_2,0x8),
   fact_aux(R0_2,Sp0x1,R0_3,R1_1).

fact_aux(1,Sp0x1,R0_4,R1) :-
   bu(0x0A),
   ldw(R0_1,Sp0x1),
   sub(R0_2,R0_1,0x1),
   bl(fact),
   fact(R0_2,R0_3),
   ldw(R1,Sp0x1),
   mul(R0_4,R1,R0_3),
   retsp(0x2).

fact_aux(0,Sp0x1,R0,R1) :-
   mkmsk(R0,0x1),
   retsp(0x2).
\end{lstlisting}
\egroup
\end{minipage}
\vspace{-2.5mm}
\caption{\Isa program for Fig.~\ref{fig:xcfactorial} (left) and its
  Horn-clause representation (right).}
\label{fig:asmfactorial}
\end{figure}
\end{small}

\begin{figure*}[t]
\centerline{\includegraphics[scale=0.65,clip,trim=0 15 0 0]{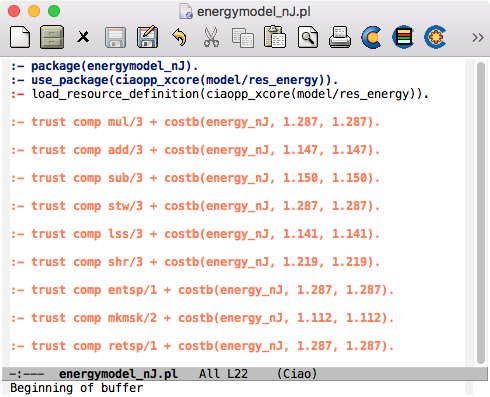}}
\caption{A simple energy model, expressed in the \ial.}
\label{fig:energymodel}
\end{figure*}

Assume that we are interested in verifying specifications about
\emph{energy consumption}.  Consider for 
example the recursive factorial function definition \texttt{fact} in
Fig.~\ref{fig:xcfactorial}, written in the XC C-style
language~\cite{Watt2009}.
The \isa program corresponding to it is generated using the XC
compiler, XCC (left hand side of Fig.~\ref{fig:asmfactorial}). The
resulting \isa program is passed to a translator (see
Fig.~\ref{fig:hcpipeline_ir}) which generates the associated Horn
clauses (right hand side of Fig.~\ref{fig:asmfactorial}). Such \hcir
program, together with the information contained in the energy models
at the \isa level (represented also by using assertions, see
Fig.~\ref{fig:energymodel} for a simple example),
is passed to the resource analysis (as represented in
Fig.~\ref{fig:hcpipeline_ir}), which outputs the energy consumption
analysis results and the verification results for all procedures in
the \hcir program.  
More specifically, the energy model provides the information on the
energy consumed by basic operations 
(\isa instructions in this case). This information is
taken (trusted) by the static analyzer which propagates it,
during the abstract interpretation of the program, through code
segments, conditionals, loops, recursions, etc. , mimicking the
actual execution of the program with symbolic ``abstract'' data
instead of concrete data, in order to infer energy consumption
functions for higher-level entities, such as procedures and functions
in the program.  The analysis of recursive procedures gives rise to
recurrence equations, whose closed form solutions are the resource
usage functions, which depend on input data sizes, resulting from the
analysis.
The \xc assertion:
\begin{center}
\begin{small}
\begin{tabbing}
\texttt{\#pragma \textcolor{check}{check} fact(n) : (1 <= n) ==> (6.0 <= energy\_nJ <= 2.3*n+9.0).}
\end{tabbing}
\end{small}
\end{center}

\noindent is a resource usage specification which also gets translated
into the \hcir representation to be checked by \ciaopp (the
\ial~\cite{assert-lang-disciplbook,hermenegildo11:ciao-design-tplp}),
as shown in lines 1-3 in Fig.~\ref{fig:asmfactorial}
(right):\footnote{ See Sect.~\ref{sec:application-energy} for further
  details on specifications in XC syntax
  and Sect.~\ref{sec:ciao-assertion-language} 
  for their counterpart in the \hcir.}
\begin{center}
\begin{small}
\begin{tabbing}
:-falsecompnr\=\kill
\texttt{:- \textcolor{check}{check} pred fact(N,Ret) : intervals(\nmetric(N),[i(1,inf)])} \\
   \> \texttt{+ costb(energy\_nJ,6.0,2.3*\nmetric(N)+9.0).} 
\end{tabbing}
\end{small}
\end{center}

\noindent
The assertion expresses
that the cost of \texttt{fact(N,Ret)}, in terms of the resource 
``energy in nano-Joules,''\footnote{$1$~nano-Joule = $10^{-9}$~Joules}
must lie in the interval
$[6.0, 2.3*\mathtt{\nmetric(N)}+9.0]$~nJ.
In the \hcir representation, the return values of functions are
represented as additional arguments (\texttt{Ret} as second argument
to \texttt{fact}).
The assertion uses the \texttt{costb/3} property for expressing both a
lower and an upper bound, in the second and third arguments respectively, on a
cost given in terms of a particular resource, in the first argument.
The \texttt{intervals/2} property specifies the set of input sizes,
under a particular metric, for which the assertion has to be checked.
The first argument indicates the input argument that is being
considered, together with the corresponding size metric. The second
argument indicates the set of values as a union of intervals,
represented by a list of \texttt{i/2} properties, which in this example
contains only one interval, $(1,\infty)$.
It provides bounds on the energy to be consumed by
\texttt{fact(N,Ret)} given as functions on the size of the input
argument \texttt{N}. 
Since such argument is numeric, the size metric used is its 
``non-negative value'', 
defined as $\mathtt{\nmetric(N)}
\overset{\mathrm{def}}{=} max(0,\mathtt{N})$.
The $\nmetric(N)$ size metric is applied to a numeric variable $N$,
not to arithmetic expressions. However, our size analysis understands
arithmetic expressions, and can give the size of an output argument as
an arithmetic function that depends on the $\nmetric(N)$ values of
variables that represent input arguments.
\end{examplebox}

\begin{figure}[t]
\includegraphics[width=\textwidth]{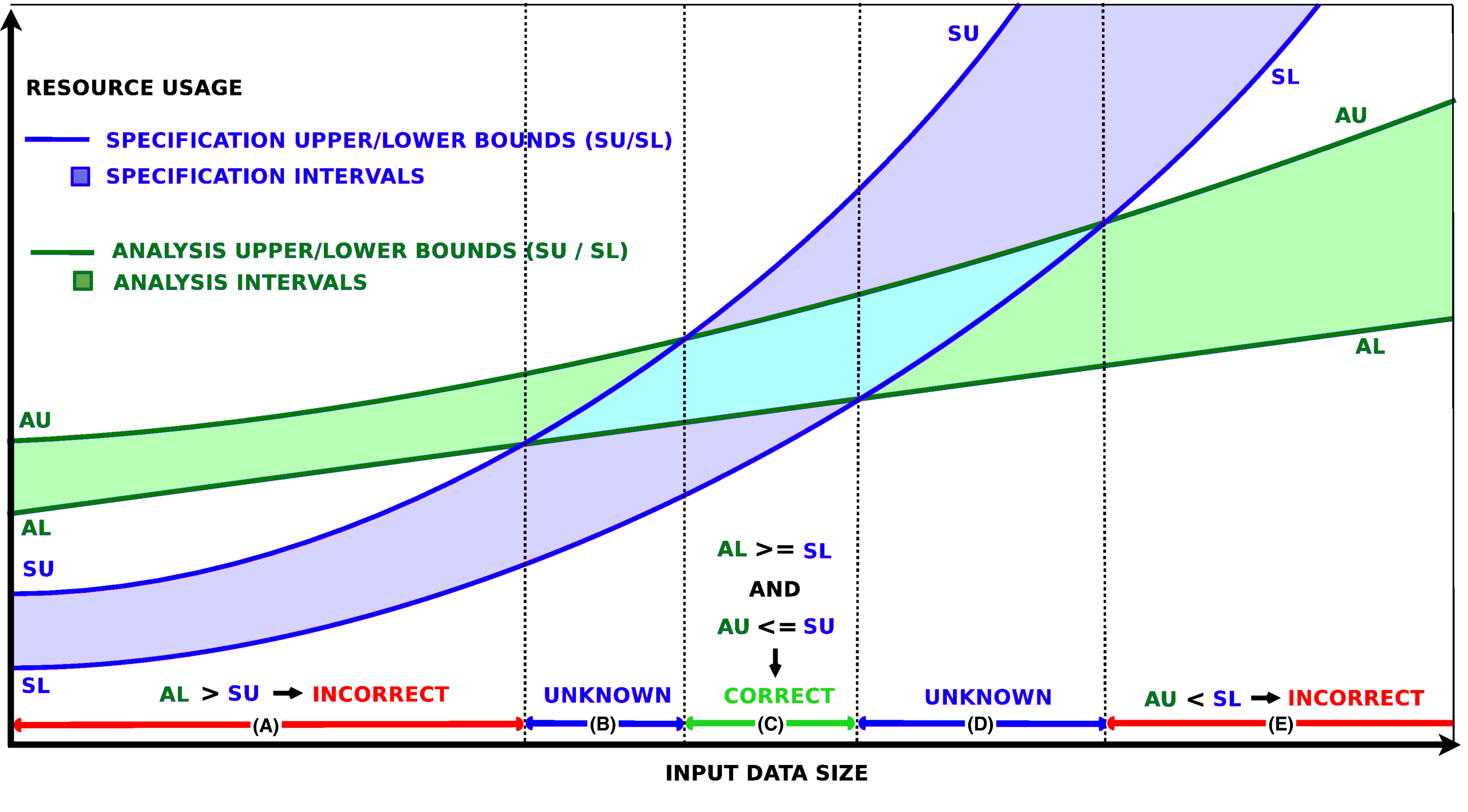}
\vspace*{-10mm}
\caption{Interval-based resource usage verification.}
\label{fig:interval_res_ver}
\end{figure}

As mentioned before, 
the verification of \resourceusage specifications is performed by
comparing the 
abstract intended semantics (i.e., the \resourceusage specifications) with the
safe approximation of the concrete semantics.
inferred by the resource analysis. We say that a program
property $\phi^\#$ is a \emph{safe approximation} of a property
$\phi$, if the set of program traces where $\phi$ holds is included in
the set of program traces where $\phi^\#$ holds. The idea of using
safe approximations is further explained in
Sect.~\ref{sec:framework}.
In our original work on resource usage verification, reported, e.g.,
in~\cite{ciaopp-sas03-journal-scp} and previous papers, for each
property expressed in 
an assertion, the possible outcomes are
\emph{true} (property proved to hold), \emph{false} (property proved
not to hold), and \emph{unknown} (the analysis cannot prove true or
false).
However, it is very common for the cost functions involved in the
comparisons to have intersections, so that for some
input data sizes one of them is smaller than the other one, and for
others it is the other way around.  The first major contribution
of this paper is to generalize our approach so that the answers of the
comparison
process can now
include \emph{conditions} under which the truth or falsity of the
property can be proved. Such conditions can be parameterized by
attributes of inputs, such as input data sizes or value ranges.
In particular, the outcome of the comparison process can now be that
the original specification holds for input data sizes that lie within
a given set of intervals, does not hold for
other intervals, and the result may 
be unknown for some others.
This is illustrated in Fig.~\ref{fig:interval_res_ver}. We can see
that the specification gives both a lower and upper bound cost
function, so that for any input data size $n$ (ordinate axis), the
specification expresses that the resource usage of the computation
with input data of that size
must lie in the interval determined by both functions (which depend on
$n$). Similarly, the bound cost functions inferred by the static
analysis determine a resource usage interval for any $n$, in which the
resource usage of the computation (with input data of size $n$) is
granted to lie. We can see that in 
the (input data size) interval \emph{C}
in the ordinate axis, the program is correct (i.e., it meets
the specification), because for any $n$ in such interval, the resource
usage intervals inferred by the analysis are included in those
expressed by the specification.  In contrast, the program is incorrect in the 
data size intervals \emph{A} and \emph{E}
because the resource usage
intervals inferred by the analysis and those expressed by the
specification are disjoint.  In 
interval \emph{A},
this is proved by
the sufficient condition that says that the lower bound cost function
inferred by the analysis is greater than the upper bound cost function
expressed in the specification (in that interval). A similar reasoning
applies to the 
interval \emph{E}
(using the upper bound of the analysis
and the lower bound of the specification). However, nothing can be
ensured for the 
intervals \emph{B} and \emph{D}.
This is because for any
data size $n$ in such intervals, the resource usage of the computation
for some input data of size $n$ may lie within the interval expressed
by the specification; but for other input data of the same size, the
resource usage may lie outside the interval expressed by the
specification.

Furthermore, intervals can now also appear in specifications, i.e.,
our approach can check specifications that 
include preconditions expressing intervals of input data sizes.  In
that case, the data size intervals automatically generated by the
system 
are 
sub-intervals of the ones given in the specification by the
user.

\begin{examplebox}
\label{ex:fact-cont}

Continuing with Example~\ref{ex:fact}, using the techniques proposed
herein (and the prototype implemented) the outcome of static checking
for the assertion in Figs.~\ref{fig:xcfactorial}
and~\ref{fig:asmfactorial} is the following set of assertions: \\

\begin{center}
\begin{small}
\begin{tabbing}
:-falsecompnr\=\kill
\texttt{:- \textcolor{false}{false} pred fact(N,Ret) : intervals(\nmetric(N),[i(1,1),i(13, inf)])} \\
    \> \texttt{+ ( costb(energy\_nJ, 6.0, 2.3*\nmetric(N)+9.0) ).} \\ \\
\texttt{:- \textcolor{checked}{checked} pred fact(N,Ret) : intervals(\nmetric(N),[i(2,12)])} \\
   \>  \texttt{\ \ + ( costb(energy\_nJ, 6.0, 2.3*\nmetric(N)+9.0)  ).}
\end{tabbing}
\end{small}
\end{center}

\noindent
meaning that the specification does not hold for values of $n$
belonging to the interval $[1,1] \cup [13,\infty]$, and that it does
hold for values of $n$ in the interval $[2,12]$, where $n =
\mathtt{\nmetric(N)}$.
In order to produce that outcome, first \ciaopp's resource analysis
infers the upper and lower bound functions for
the 
energy consumption of the factorial program, 
which in this particular case
are both the same: the function
$(2.845 \ n + 1.94)$~nJ, which obviously implies that this is 
the \emph{exact} cost
function for \texttt{fact/2}.  It is
depicted as a continuous line in Fig.~\ref{fig:fact-plot}.  Thus, the
resource usage of the computation of \texttt{fact/2} with input data
of a given size $n$, is granted to lie in the resource usage
interval
$[2.845 \ n + 1.94,  2.845 \ n + 1.94]$.%
\footnote{As mentioned before, we refer the reader
  to~\cite{resource-iclp07,plai-resources-iclp14} for more details on
  the user-definable version of the resource analysis and
  references.} 
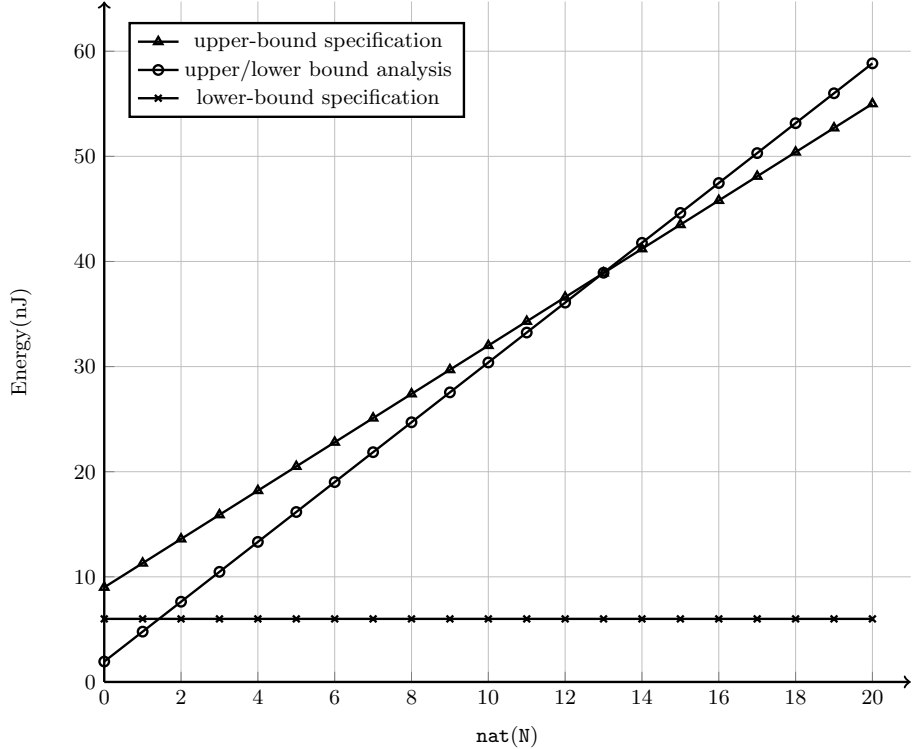
\begin{figure}[t]
\begin{tikzpicture}[scale=0.9]
  \begin{axis}[
    axis x line*=bottom,
    axis y line*=left,
    x axis line style=->,
    y axis line style=->,
    legend style={legend pos=north west,font=\small},
    grid=both,
    grid style={line width=.1pt, draw=gray!10},
    major grid style={line width=.1pt,draw=gray!50},
    xlabel=$\mathtt{nat(N)}$,
    xmin = 0,
    ymin = 0,
    xmax=21,
    samples at = {0,1,2,3,4,5,6,7,8,9,10,11,12,13,14,15,16,17,18,19,20},
    ylabel={Energy(nJ)},
    tick label style={font=\small},
    label style={font=\small},
    line width=1.0pt,
    smooth,    
    width=\linewidth,
  ] 
    \addplot [color=black, mark=triangle] {2.300000*x+9.000000};
    \addlegendentry{upper-bound specification}    

	\addplot [color=black, mark=o]{2.845229*x+1.940746}; 
    \addlegendentry{upper/lower bound analysis}    

    \addplot [color=black, mark=x] {6.000000};    
    \addlegendentry{lower-bound specification}
   
  \end{axis}
\end{tikzpicture}
\caption{Resource usage functions for the factorial program:
  specification and analysis results.}
\label{fig:fact-plot}
\end{figure}
These functions are then compared against the specification resource
functions, depicted in Fig.~\ref{fig:fact-plot} as dashed lines.
For any value $n$ (ordinate axis) of the input data size in
the interval $[2,~ 12]$, the resource usage interval inferred by
\ciaopp~(i.e., $[2.8 \ n + 1.9, 2.8 \ n + 1.9]$) is
included in
the resource usage interval expressed by the
specification, namely: $[6.0, 2.3 \ n + 9.0]$. Therefore, after
performing the resource usage function comparison, using the
techniques that we present, \ciaopp's output indicates that the
assertion is \emph{checked} in that data size interval. Conversely,
the assertion is reported as \emph{false} for $n = 1$ or
$n \in [13,~ \infty]$, because for this interval the
lower bound resource usage function inferred by the analysis is
greater than the upper bound resource usage function expressed in the
specification (and consequently, the corresponding resource usage
intervals are disjoint).
\end{examplebox}

The process of checking of
\resourceusage specifications against
the 
analysis information obviously involves the comparison of arithmetic
functions. In our previous work (again,
see~\cite{ciaopp-sas03-journal-scp} and its references), the approach
to cost function comparison was relatively simple, basically
consisting on performing function normalization and then using some
syntactic and asymptotic comparison rules.  The second major
contribution of this work is to
provide stronger techniques for this purpose, extending the types of
functions that can be dealt with in the specifications and in the
analysis results to a much larger class. 
We also provide
benchmarking results for the proposed interval based, function
comparison techniques.

As a final contribution, and in order to illustrate the usefulness of
the techniques developed, we report on a specialization of the
proposed framework for a practical application: verifying energy
consumption specifications, i.e., comparing inferred energy bound
functions and specifications.
We study the particular case of 
programs written in the \xc language and running on the XMOS XS1-L
architecture, already illustrated in the previous examples.  However,
using our Horn-clause translation approach, the proposed approach and
its implementation in \ciaopp\ 
are general and can be applied to the resource verification of other
programming languages and architectures.
We also illustrate through a case study, 
how embedded software developers can use the tool
developed, in particular for determining values for program parameters
that ensure meeting a given energy budget while minimizing the loss in
quality of service.

\medskip

This paper unifies, improves, and extends our previous work
in~\cite{resource-verif-2012,resource-verif-iclp2010,energy-verification-hip3es2015},
specially by adding operations that allow dealing with a richer set of
usage functions, including summation, exponential and logarithmic cost
functions, as well as multi-variable functions (see
Sect.~\ref{sec:function-comparison}).  We also present a more detailed
formalization
than in~\cite{resource-verif-2012,resource-verif-iclp2010-short}.

\medskip

The overall contributions of this work 
can be summarized as follows:

\vspace*{-2mm}
\begin{itemize}

\item We have developed a configurable framework for static resource
  usage verification where specifications can include data
  size-dependent resource usage functions, expressing both lower and
  upper bounds.

\item We have extended the criteria of software correctness to
  resource usage specifications. In particular, we have defined a
  resource usage semantics and its approximation, and devised
  sufficient conditions for program correctness/incorrectness based on
  such semantics.

\item We have defined operations to check such sufficient conditions,
  that compare the (possibly abstract)
  intended semantics of a program with
  approximated semantics inferred by static analysis. Such comparison
  can deal with a rich class of resource usage functions (polynomial,
  summation, exponential, logarithmic), as well as multi-variable
  functions.

\item Our framework produces a refined output of the assertion
  checking process, that may determine a partition of the set of
  possible input values (by inferring intervals for input data sizes),
  in place of a unique interval of values. Each sub-interval of such
  partition may correspond to different outcomes of the verification.

\item Our framework also
deals with \emph{specifications} containing assertions that include
preconditions expressing intervals for the input data
sizes.

\item We have implemented a prototype and provided experimental
  results.

\item We have specialized our framework for its application to the
  energy consumption verification of imperative (\xc)
  programs.

\end{itemize}

In the rest of the paper
Sect.~\ref{sec:framework} provides an overview of the foundations of
the \ciaopp~verification framework, and of the Ciao assertion language
used for specifications.
Then,
Sect.~\ref{sec:resources-framework} describes how
this traditional framework is extended for the data size, 
\emph{interval-dependent} verification of resource usage properties, 
presenting also the formalization of the framework. In particular, we
define an abstract semantics for resource usage properties and
operations to compare the (approximated) intended semantics of a
program
with approximated semantics inferred by static analysis.
Sect.~\ref{sec:function-comparison} presents our extended techniques
for 
the comparison of (arithmetic) resource usage functions.
Sect.~\ref{sec:implementation} reports on the implementation of our
techniques within the \ciao/\ciaopp~system, providing experimental
results.
Sect.~\ref{sec:application-energy} describes a \specialization of
the framework for its application to the energy consumption analysis
of \xc programs, and 
explains
how embedded software developers can use this tool in the case study
already mentioned. 
Finally, Sect.~\ref{sec:relatedworks} discusses related work and
Sect.~\ref{sec:conclusions} summarizes our conclusions.

\section{Basics of the Verification Framework}
\label{sec:framework}

This section 
summarizes some relevant parts of our previous work
in~\cite{ciaopp-sas03-journal-scp} and previous
papers~\cite{aadebug97-informal,prog-glob-an,assrt-theoret-framework-lopstr99},
that together form the basis for the resource usage verification
techniques described in the following sections, which are the
contributions of this paper.  The section is based mostly
on~\cite{aadebug97-informal}, which provides a basic introduction to
abstract verification from a conceptual point of view. A more detailed
description of the verification framework can be found
in~\cite{assrt-theoret-framework-lopstr99}.

As mentioned before, the verification
framework of~\ciaopp~uses analyses, based on the abstract
interpretation technique, which are provably correct and also
practical, in order to statically compute
safe approximations of the program semantics. These safe
approximations are compared with
specifications, in the form of assertions that are written by the
programmer, in order to prove such specifications correct or incorrect.
In the following we restrict ourselves 
to the important class of fixpoint semantics. Under these assumptions,
the meaning of a program $\prog$, i.e., its 
\emph{concrete semantics}, denoted by 
$\p$, is the least fixpoint of a monotonic operator associated with the
program $\prog$, denoted $S_{\prog}$, i.e., $\p = {\rm lfp}(S_{\prog})$. Such
operator is a function defined on a 
domain $D$, 
which we assume to be a complete lattice.
We will refer to $D$ as the \emph{concrete} domain.
We will assume for simplicity that the elements of $D$ are sets and that
the order relation in $D$ is set inclusion. 

In the abstract interpretation technique, a domain $D_\alpha$ is
defined, called the {\em abstract} domain, which also has a lattice
structure and is \emph{simpler} than the domain $D$. 
In particular, $D$ is finite or, 
if the lattice contains infinite ascending chains, the abstract domain
defines operations that accelerate the convergence of the fixpoint
computation, ensuring termination.
The concrete and abstract domains are related via a pair of monotonic
mappings: {\em abstraction} $\alpha: D\mapsto D_\alpha$, and {\em
  concretization} $\gamma: D_\alpha\mapsto D$, which relate the two
domains by a Galois
connection~\cite{Cousot77}.  Abstract operations over $D_\alpha$ are
also defined for each of the (concrete) operations over $D$.  The
abstraction of a program $\prog$ is obtained by replacing the
(concrete) operators in $\prog$ by their abstract counterparts.  The
{\em abstract semantics} of a program $\prog$, i.e., its semantics
w.r.t.\ the abstract domain $D_\alpha$, is computed (or approximated)
by interpreting the abstraction of the program $\prog$ over the
abstract domain $D_\alpha$. One of the fundamental results of abstract
interpretation is that an abstract semantic operator
$S_{\prog}^\alpha$ for a program $\prog$ can be defined which is
correct w.r.t.\ $S_{\prog}$ in the sense that $\gamma({\rm
  lfp}(S_{\prog}^\alpha))$ is an approximation of $\p$, and, if
certain conditions hold, then the computation of ${\rm
  lfp}(S_{\prog}^\alpha)$ (i.e., the analysis of $\prog$) terminates
in a finite number of steps. We will denote ${\rm
  lfp}(S_{\prog}^\alpha)$, i.e., the result of abstract interpretation
for a program $\prog$, its abstract semantics, as $\p_\alpha$.

Typically, abstract interpretation guarantees that $\p_\alpha$ is a safe 
\emph{over}-approxima\-tion of the abstraction of the 
concrete semantics
of $\prog$ ($\alpha(\p)$), i.e., $\alpha(\p) \subseteq
\p_\alpha$. When $\p_\alpha$ meets such a condition we 
denote it as $\p_{\alpha^+}$.  Alternatively, the analysis can be
designed to safely \emph{under}-approximate the abstraction of the
concrete semantics of $\prog$, i.e., to meet the condition $\p_\alpha\subseteq
\alpha(\p)$. In this case, we use the notation $\p_{\alpha^-}$ to
express that the result of the analysis, $\p_\alpha$, meets such a
condition.

Program verification 
compares the 
{\em concrete semantics} 
$\p$ of a program $\prog$ with an
{\em intended semantics} for the same program, which we will denote by
$\Inten$.  This intended semantics embodies the user's requirements,
i.e., it is an expression of the user's expectations.  In
Table~\ref{tab:corr-comp} we summarize the classical understanding of
some verification problems in a set-theoretic formulation as simple
relations between $\p$ and $\Inten$.
Using the 
concrete or intended semantics for automatic verification
is in general not realistic, since the
concrete semantics 
is typically only partially known, infinite, too
expensive to compute, etc.  
Since the technique of abstract interpretation allows computing
safe approximations of the program semantics, the key idea of the
\ciaopp~approach~\cite{aadebug97-informal,prog-glob-an,assrt-theoret-framework-lopstr99}
is to use the abstract approximation $\p_\alpha$ directly in program
verification
tasks (and in an integrated way with other techniques such as run-time
checking and with the use of assertions).

\begin{table}[t]
  \begin{center}
     \begin{tabular}{l|l}\hline\hline
     {\bf Property}& {\bf Definition}
\\ \hline\hline
     $\prog$ is partially  correct w.r.t.\ $\Inten$& $ \p \subseteq \Inten$
\\
\hline
     $\prog$ is complete w.r.t.\ $\Inten$ & $\Inten\ \subseteq\  \p$
\\
\hline
   $\prog$ is not partially correct w.r.t.\ $\Inten$ & $\p \not\subseteq \Inten$
\\
\hline
     $\prog$ is incomplete w.r.t.\ $\Inten$& $\Inten\ \not\subseteq \p$
\\
 \hline\hline
    \end{tabular}
  \end{center}

  \caption{Set theoretic formulation of verification problems.}
  \label{tab:corr-comp}
\end{table}

\subsection{\textbf{Abstract Verification}}
In the \ciaopp~framework the abstraction
$\p_\alpha$  of the
concrete semantics $\p$ of the program is actually computed and
compared directly to the 
abstract intended semantics, 
which is given
in terms of \emph{assertions}~\cite{assert-lang-disciplbook},
following almost directly the scheme of Table~\ref{tab:corr-comp}.
A program specification $\Inten_\alpha$ is
an abstract value $\Inten_\alpha\in D_\alpha$, where $D_\alpha$ is the
abstract domain of computation.  Program verification is then
performed by comparing $\Inten_\alpha$ and $\p_\alpha$.
Table~\ref{tab:app-corr-comp} shows sufficient conditions for
correctness and completeness w.r.t.\ $\Inten_\alpha$, which can be
used when $\p$ is approximated.  Several instrumental conclusions can
be drawn from these relations.

Analyses which over-approximate the 
concrete
semantics (i.e., those
denoted as $\p_{\alpha^{+}}$), are specially suited for proving
partial correctness and incompleteness with respect to the abstract
specification $\Inten_\alpha$. It will also be sometimes possible to
prove incorrectness in the case in which the semantics
inferred for the program is incompatible with the abstract
specification, i.e., when $\p_{\alpha^{+}}\cap \Inten_\alpha =
\emptyset$. 
On the other hand, we use $\underapprox$ to denote the (less frequent)
case in which analysis under-approximates the 
concrete
semantics. In
such case, it will be possible to prove completeness and
incorrectness. 

\begin{table}[t]
  \begin{center}
    \begin{tabular}{l|c|c}\hline\hline
     {\bf Property}& {\bf Definition}&{\bf Sufficient condition}
\\ \hline\hline
     $\prog$ is partially  correct w.r.t.\ $\Inten_\alpha$& $ \alpha(\p) \subseteq 

\Inten_\alpha$ &$ \overapprox
     \subseteq \Inten_\alpha$ \\
\hline
     $\prog$ is complete w.r.t.\ $\Inten_\alpha$ & $\Inten_\alpha \subseteq 
\alpha(\p)$& $
\Inten_\alpha \subseteq \underapprox$ \\ \hline
$\prog$ is not partially correct
w.r.t.\ $\Inten_\alpha$ & 
$\alpha(\p) \not\subseteq \Inten_\alpha$ & $\underapprox \not\subseteq \Inten_\alpha$, or\ \\
& & 
$ \overapprox \ \cap\ \Inten_\alpha\ =\ \emptyset \wedge \overapprox\neq\emptyset 
\wedge \underapprox\neq\emptyset$ \\ \hline
$\prog$ is incomplete w.r.t.\ $\Inten_\alpha$  & 
$\Inten_\alpha \not\subseteq \alpha(\p)$ & 
$\Inten_\alpha \not\subseteq \overapprox$ \\\hline\hline
\end{tabular}
\end{center}
\caption{Verification problems using approximations.}
\label{tab:app-corr-comp}
\end{table}

Since most of the properties being inferred are in general undecidable,
the technique used to infer such properties, in our
case abstract interpretation, is necessarily approximate.
Nevertheless, such approximations are also always
guaranteed to be safe, in the sense that they are never incorrect,
i.e., they are strict over- (conversely under-) approximations of a
property for the set of all possible program behaviors.

\subsection{Expressing $\Inten_\alpha$: A Relevant Subset of the \ciao~Assertion Language}
\label{sec:ciao-assertion-language}

In order to instantiate the language used to express the intended
semantics, $\Inten_\alpha$, and, in particular, resource usage
properties, we introduce the assertion language that we will use
throughout the paper.  These assertions are part of the
\ciao~assertion language.  For brevity, we only introduce here the
class of ``\texttt{pred}'' assertions, since they suffice for our
purposes.  We refer the reader
to~\cite{assert-lang-disciplbook,ciaopp-sas03-journal-scp,hermenegildo11:ciao-design-tplp}
and their references for a full description of the \ciao~assertion
language.

\paragraph{Pred assertions:} These assertions follow the schema:\\
\centerline{
  \fbox{
    \kbd{:- pred} \nt{Pred} [\kbd{:}
    \nt{Precond\/}] [\kbd{=>} \nt{Postcond\/}] [\kbd{+}
    \nt{Comp-Props\/}]\kbd{.}
  }
}\\
\noindent
where \nt{Pred} is a predicate symbol applied to distinct free
variables,\footnote{We do not consider assertion syntactic sugar such as \emph{modes} for
simplicity.} and \nt{Precond} and \nt{Postcond} are logic formulae about
execution states. An execution state is defined by 
the set of variable/value bindings associated with a given execution step.
The assertion indicates that
in any call to \nt{Pred}, if \nt{Precond} holds in the
calling state and the computation of the call succeeds, then
\nt{Postcond} should hold in the success state.
Also, the set of \nt{Precond}s for all the \nt{pred} assertions for a given  
\nt{Pred} describes all the possible call states, i.e.,
for any call state for a predicate, 
there must be at least one \nt{pred} assertion for
that predicate whose \nt{Precond} holds in that state.

A new
property we introduce in this work and use throughout the paper 
is the following (see
Sect.~\ref{sec:comparing-anstract-semantics} for further details):

\centerline{
  \fbox{
    \textbf{intervals(\nt{$Size_A$},
      [$Int_1$, \ldots, $Int_n$])}
  }
}
\noindent
which expresses that the size $Size_A$ for a given 
argument $A$ belongs to some of the intervals in the list [$Int_1$,
  \ldots, $Int_n$], where $Int_j = \mathtt{i}(Lo,Up)$, $j \geq 1$ and
$\{Lo,Up\} \in \rsd \cup \{\mathtt{inf}\}$.  Finally, the
\nt{Comp-Props} field (appearing after the ``\texttt{+}'' operator) is
used to describe properties of the whole computation for calls to
predicate \nt{Pred} that meet \nt{Precond}. In our application the
\nt{Comp-Props} are precisely the resource usage properties.  As
already shown in Example~\ref{ex:fact}, another
global non-functional property we introduce in this paper is
\texttt{costb/3}, which expresses such resource usages, and follows
the schema:

\centerline{
  \fbox{
    \textbf{costb(\nt{Res\_Name},
      \nt{Low\_Arith\_Expr, \nt{Upp\_Arith\_Expr})} }
  }
}

\noindent
where \nt{Res\_Name} is a user-provided identifier for the resource
the assertion refers to, \nt{Low\_Arith\_Expr} and
\nt{Upp\_Arith\_Expr} are arithmetic functions that map input data
sizes to resource usages, representing respectively lower and upper
bounds on the resource consumption.
Similarly to \texttt{costb/3}, the \texttt{cost/3} property allows
expressing only one resource usage function on input data sizes that
follows this schema:
\centerline{
  \fbox{
    \textbf{cost(\nt{Bound\_Type},
      \nt{Res\_Name}, \nt{Arith\_Expr})}
  }
}
\noindent
where \nt{Res\_Name} is the same as in \texttt{costb/3}, \nt{Arith\_Expr} is
similar to \nt{Low\_Arith\_Expr} and \nt{Upp\_Arith\_Expr} in \texttt{costb/3},
but it can be either upper or lower bound depending on the value of \nt{Bound\_Type}
which are \texttt{lb} for lower bounds and \texttt{ub} for upper bounds. 
This is illustrated in Example~\ref{ex:nrevcomp}.

\begin{figure}[t]
\prettylstciao

\begin{lstlisting}
:- pred append(A, B, C) 
   : ( list(A), list(B), var(C),
       intervals(length(A),[1,inf])
     ) 
  => list(C)
   + costb(steps, length(A)+1, length(A)+1).
\end{lstlisting}
\caption{An example Ciao \emph{resource} assertion for
  \texttt{append/3}.}
\label{fig:ciao_assertion_example}
\end{figure}


\begin{examplebox}
\label{ex:append_assertion}
Fig.~\ref{fig:ciao_assertion_example} shows an assertion for a typical
\texttt{append/3} predicate.   
The assertion states that for any call to predicate \texttt{append/3}
with the first and second arguments bound to lists and the third one
unbound, where the length of the first list
lies in the interval $[1,\infty]$, it holds that
if the call succeeds, then the third argument will also be bound to a
list.
It also states that \texttt{length(A) + 1} is both a lower and upper
bound on the number of resolution steps required to execute any of
such calls.  The property \texttt{length/1} represents
a size metric, in particular,
the length of a list.  In this case, the assertion
expresses an exact cost, since the lower- and upper-bound cost
functions coincide.

\end{examplebox}

\paragraph{Assertion status:} Each assertion 
has an associated \emph{status}, marked with one of the following
prefixes, placed just before the \texttt{pred} keyword: \texttt{check}
(indicating that the assertion is to be checked), \texttt{checked}
(the assertion has been checked and proved correct by the system),
\texttt{false} (it has been checked and proved incorrect by the
system; a compile-time error is reported in this case), \texttt{trust}
(the assertion provides information coming from the programmer in
order to guide the analyzer, and it will be trusted), or \texttt{true}
(the assertion is a result of static analysis and thus correct, i.e.,
it is a safe approximation of the concrete semantics).  The default
status, i.e., if no status appears before \texttt{pred}, is
\texttt{check}.

\section{Extending  the Framework to Data Size-Dependent Resource Usage Verification}
\label{sec:resources-framework}

As mentioned before, our data size-dependent resource usage
verification framework is characterized by being able to deal with
specifications that include both lower and upper bound resource usage
functions (i.e., specifications that express intervals where the
resource usage is supposed to be included in), and, in an extension of
the classical model~\cite{aadebug97-informal,ciaopp-sas03-journal-scp}
and~\cite{resource-verif-iclp2010}, that include preconditions
expressing intervals within which the input data size of a program is
supposed to lie~\cite{resource-verif-2012}.

We start by providing a formalization of our data size-dependent
resource usage verification framework, assuming that the programs that
we are dealing with are written in the \hcir language (i.e., they are
logic programs). However, as mentioned before, the techniques apply to
other languages, by applying our transformation to Horn
clauses. Furthermore, the concepts are in fact also applicable
directly to other languages, with some adaptations and changes in
terminology.

\subsection{Resource usage semantics}
\label{sec:resource-usage-semantics}

Given a program $\prog$, let ${\cal C}_p$ be the set of all calls to $\prog$.
The concrete resource usage semantics of a program $\prog$, for a
particular resource of interest, $\p$, is a set of pairs $(p(\bar
t), r)$ such that $\bar t$ is a tuple of terms (not necessarily ground), $p(\bar t) \in {\cal
  C}_p$ is a call to 
$\prog$ with actual parameters $\bar t$,
and $r$ is a number expressing the amount of resource usage of the
computation of the call $p(\bar t)$. Such a semantic object can be
computed by a suitable operational semantics, such as SLD-resolution,
adorned with the computation of the resource usage. We abstract away
such computation, since it will in general be dependent on the
particular resource $r$ it refers to. 
The concrete resource usage semantics can be defined as a relation $\p
\subseteq {\cal C}_p \times \rsd$,
where $\rsd$ is the set of real numbers (note that depending on the
type of resource we can take another set of numbers, e.g., the set of
natural numbers).  Such relation is usually a function.  In other
words, the domain $D$ of the concrete semantics is $2^{{\cal C}_p
  \times \rsd}$, so that $\p \in D$.
Recall that, as described in Sect.~\ref{sec:framework},
$D$ is
a complete lattice, and the abstract domain, $D_\alpha$ has also a
lattice structure.
The concretization and abstractions functions ($\gamma$ and $\alpha$
respectively) are mappings that relate both domains, altogether
composing a Galois
connection~\cite{Cousot77}.

We define an abstract domain $D_\alpha$ whose elements are sets of
pairs of the form $(p(\bar v):c(\bar v), \Phi)$, where $p(\bar
v):c(\bar v)$, is an abstraction of a set of calls and $\Phi$ is an
abstraction of the resource usage of such calls.  We refer to such
pairs as \emph{call-resource} pairs.  
Specifically, $\bar v$ is a
tuple of variables and $c$ is a property on terms, so that $p(\bar
v):c(\bar v)$ represents the set of all calls $p(\bar t)$ such that
$\bar v = \bar t \rightarrow c(\bar v)$ holds.

The abstraction $c(\bar v)$ is some subset of the abstract domains
available for the analyzer, i.e., those loaded in the \ciaopp~system,
expressing program states.  An example of $c(\bar v)$ (in fact, the
one used in Sect.~\ref{sec:implementation} in our experiments) is a
combination of properties which are in the domain of the regular type
analysis, \emph{eterms}~\cite{eterms-sas02}, and properties such as
groundness and freeness present in the \emph{shfr} abstract
domain~\cite{ai-jlp}.  For conciseness, we refer to such combination
as the mode/type abstract domain.
A regular type is a set of terms which is the language accepted by a
(possibly non-deterministic) \emph{finite tree automaton}, although
regular types can be expressed using several type representations. Internally, the
\emph{eterms} regular type analysis~\cite{eterms-sas02}
uses a representation based on \emph{regular term grammars}, equivalent
to~\cite{Dart-Zobel} but with some adaptations. This analysis produces
abstractions, represented by using \emph{regular term grammars}, that
overapproximate the set of terms that can occur at all program
points. Such abstractions are presented to the user in the form
of predicates, as will be illustrated later.

We refer to $\Phi$ as a {\em resource usage interval function} for
$\prog$, defined as follows:

\begin{definition}
\label{def:bound-function}
A \emph{resource usage bound function} for $\prog$ is a monotonic
arithmetic function, $\Psi_p: S \mapsto \rsd_{\infty}$, for a
given subset 
$S \subseteq \sizdom{k}$,
where 
$\sizdomsingle$ is 
is the set of 
natural
numbers, $k$ is the number of input arguments to predicate
$\prog$, and $\rsd_{\infty}$ is the set of real numbers augmented with
the special symbols $\infty$ and $-\infty$. We use such functions to
express lower and upper bounds on the resource usage of predicate $\prog$
depending on its input data sizes.
\end{definition}

\begin{definition}
\label{def:interval-function}
A \emph{resource usage interval function} for $\prog$ is an arithmetic
function, $\Phi: S \mapsto {\cal RI}$, where $S$ is defined as before
and ${\cal RI}$ is the set of intervals of real numbers, such that
$\Phi(\bar n) = [\Phi^{l}(\bar n), \Phi^{u}(\bar n)]$ for all $\bar n
\in S$, where $\Phi^l(\bar n)$ and $\Phi^u(\bar n)$ are {\em resource
  usage bound functions} that denote the lower and upper endpoints of
the interval $\Phi(\bar n)$ respectively for the tuple of input data
sizes $\bar n$.\footnote{Although $\bar n$ is typically a tuple of
  natural numbers, we do not restrict the framework to this case.}  We
require that $\Phi$ be well defined so that $\forall \bar n
\ (\Phi^l(\bar n) \leq \Phi^u(\bar n))$.
\end{definition}

\noindent
Intuitively, $\Phi$ defines a resource usage band, and $\Phi(\bar n) =
[\Phi^{l}(\bar n), \Phi^{u}(\bar n)]$ is resource usage interval.

In order to relate the elements $p(\bar v):c(\bar v)$ and $\Phi$ in
a call-resource pair as the one described previously, we assume the
existence of two functions $input_p$ and $size_p$ associated with each
predicate $\prog$ in the program.  Assume that $\prog$ has $k$ arguments and
$i$ input arguments ($i \leq k$). The function $input_p$ takes a
$k$-tuple of terms $\bar t$ (the actual arguments of a call to $\prog$)
and returns a tuple with the input arguments to $\prog$. This function is
generally inferred by using existing 
analysis that infer groundness, freeness and sharing information, but
can also be given by the user by means of assertions.  The function
$size_p(\bar w)$ takes a $i$-tuple of terms $\bar w$ (the actual input
arguments to $\prog$) and returns a tuple with the sizes of those
terms under a given metric. The metric used for measuring the size of
each argument of $\prog$ is automatically inferred (based on type
analysis information), but again can also be given by the user by
means of assertions~\cite{resource-iclp07}.

\begin{small}
\begin{figure}[t]
\centering
\prettylstciao
\begin{lstlisting}
:- module(rev, [nrev/2], [assertions,regtypes,
                          nativeprops,predefres(res_steps)]).

:- entry nrev(A,B) : (list(A, gnd), var(B)).
:- check pred nrev(A,B)  
     + costb(steps, length(A), 10*length(A)).

nrev([],[]).
nrev([H|L],R) :- nrev(L,R1), append(R1,[H],R).
\end{lstlisting}
\vspace*{-4mm}
  \caption{A module for the naive reverse program.}
  \label{fig:nrev}
\end{figure}
\end{small}

\begin{examplebox}
\label{ex:size-input}
Consider for example the naive reverse (\ciao) Prolog program in Fig.~\ref{fig:nrev},
with the classical definition of predicate \texttt{append}.
The first argument of \texttt{nrev/2} is declared input, and the two first 
arguments of \texttt{append} are consequently inferred to be also input.
The size measure for all of them is inferred to be \emph{list-length}.
Then, we have that: \\
$input_{nrev}((x, y)) = (x)$, $input_{app}((x, y, z)) = (x, y)$, \\ 
$size_{nrev}((x)) = (length(x))$ and $size_{app}((x, y)) = (length(x), length(y))$.
\end{examplebox}

\noindent
We define the concretization function $\gamma: D_\alpha \mapsto D$ as
follows: \\ \centerline{$\forall E \in D_\alpha, \gamma(E) =
  \bigcup_{e \in E} \gamma_1(e)$}

\noindent
where $\gamma_1$ is another concretization function, applied to
call-resource pairs $e$'s
of the form $(p(\bar v):c(\bar v), \Phi)$.
We define: \\
\centerline{$\gamma_1((p(\bar v):c(\bar v), \Phi)) = \{(p(\bar t), r) \ | \ \bar t \in
  \gamma_m(c(\bar v)) \con \bar n = size_{p}(input_{p}(\bar t)) \con r \in [\Phi^{l}(\bar n), \Phi^{u}(\bar n)]\}$}

\noindent
where $\gamma_m$ is the concretization function of the mode/type
abstract domain. We use the subscript $m$ as a short name for such a
mode/type domain for conciseness.
The concretization function $\gamma_1$ returns a set of concrete pairs
$(p(\bar t), r)$. As already stated, each such set is an element of the
concrete domain $D = 2^{{\cal C}_p \times \rsd}$, where $\bar t$ is a
tuple of terms, $p(\bar t) \in {\cal C}_p$ is a call to predicate
$\prog$ with actual parameters $\bar t$, and $r$ is a number
expressing the amount of resource usage of the complete computation of
the call $p(\bar t)$.

\begin{examplebox}
\label{ex:concretization}
Assume that $p$ is the predicate \texttt{nrev} in Fig.~\ref{fig:nrev},
$\bar v$ is $(x, y)$, and $c(\bar v)$ is the property defined as the
conjunction $\texttt{list}(x) \land \texttt{var}(y)$, represented as
$(\texttt{list}(x), \texttt{var}(y))$ in the assertions, since we use
the comma (,) as the symbol for the conjunction operator.  The
property $\texttt{list}(\_)$ is a \emph{regular type}, which can be inferred
by \ciaopp by performing the analysis with the \emph{eterms} abstract
domain~\cite{eterms-sas02}, and is represented as a predicate:

\begin{small}
\centering
\prettylstciao
\begin{lstlisting}
list([]).
list([H|R]) :- list(R).
\end{lstlisting}
\end{small}

\noindent
The property $\texttt{var}(\_)$ can also be inferred by \ciaopp, with the
\emph{shfr} abstract domain~\cite{ai-jlp}.

Under these assumptions, $\gamma_m(c(\bar v))$ is the infinite set:\\
\centerline{$\gamma_m(c(\bar v)) = \gamma_m(\texttt{list}(x) \land \texttt{var}(y)) =
  \{([], y), ([a], y), ([a, b], y), ([a, b, c], y), \ldots\}$.}
\noindent
Assume also that $input_{nrev}((x, y)) = (x)$ and $size_{nrev}((x)) =
(length(x))$, as explained in Example~\ref{ex:size-input}. Let
$\{e_\alpha \} \in D_\alpha$, such that:\\ \centerline{$e_\alpha
  \equiv ((nrev(x, y):(\texttt{list}(x) \land \texttt{var}(y))), [\Phi_{nrev}^{l},
    \Phi_{nrev}^{u}])$,}
\noindent
where the resource usage bound functions $\Phi_{nrev}^{l}$ and
$\Phi_{nrev}^{u}$ are defined as:\\
\centerline{$\Phi_{nrev}^{l}(n) = 2 \times n$, and $\Phi_{nrev}^{u}(n) = 1+n^2$.}
\noindent
We have that $([a,b,c], y) \in \gamma_m(\texttt{list}(x) \land \texttt{var}(y))$ and
$size_{nrev}(input_{nrev}([a,b,c], y)) = size_{nrev}([a,b,c]) =
length([a,b,c]) = 3$. Thus, $\Phi_{nrev}^{l}(3) = 2 \times 3 = 6$ and
$\Phi_{nrev}^{u}(3) = 1 + 3^2 = 10$, which means that any pair
$(nrev([a,b,c], y), r)$ such that $r \in [6,10]$, belongs to
$\gamma_1(e_\alpha)$, e.g., $(nrev([a,b,c], y), 6) \in
\gamma_1(e_\alpha)$ and $(nrev([a,b,c], y), 7) \in
\gamma_1(e_\alpha)$.

\noindent
Therefore, we have that $\gamma_1(e_\alpha) = e$, where $e \in D$ is
the infinite set:

$
\begin{array}{l}
e = \{ (nrev([], y), 0), (nrev([], y), 1), (nrev([a], y), 2),
(nrev([a, b], y), 4),  \\ 
(nrev([a, b], y), 5), (nrev([a, b, c], y),
6), (nrev([a, b, c], y), 7), (nrev([a, b, c], y), 10) \ldots\}
\end{array}
$

\noindent
Finally, $\gamma(\{e_\alpha \}) = \gamma_1(e_\alpha) = e$.
\end{examplebox}

The definition of the abstraction function $\alpha: D\mapsto D_\alpha$
is straightforward, given the definition of the concretization
function $\gamma$ above.

\paragraph{\bf Intended meaning.}
As already mentioned, the intended semantics is an expression of the user's
expectations, and is typically only partially known. For this reason
it is in general not realistic to use the exact intended
semantics and we use an approximated intended semantics 
instead.  We define the approximated intended semantics
$\Inten_\alpha$ of a program as a set of \emph{call-resource} pairs
$(p(\bar v):c(\bar v), \Phi)$, identical to those previously used in
the abstract semantics definition. However, the \emph{call-resource}
pairs defining the approximated intended semantics are provided by
the user 
by means of the 
\ciao~assertion language,
introduced in Sect.~\ref{sec:ciao-assertion-language},
while the pairs corresponding to the approximated 
semantics
of the program are automatically inferred by \ciaopp's analysis
tools. In particular, each one of such pairs is represented as a
resource usage assertion for predicate $\prog$ in the program.

As mentioned in Sect.~\ref{sec:ciao-assertion-language}, we will be
using \texttt{pred} assertions.  The most common syntactic schema of a
\texttt{pred} assertion that describes resource usage and its
correspondence to the \emph{call-resource} pair it represents is the
following:

\begin{center}
\fbox{\kbd{:- pred} \nt{$p(\bar v)$} \kbd{:} \nt{$c(\bar v)$}\  \kbd{+} \nt{$\Phi$}\kbd{.} }
\end{center}
which expresses that for any call to predicate \nt{$\prog$}, if
(precondition) \nt{$c(\bar v)$} is satisfied in the calling state,
then the resource usage of the computation of the call is in the
interval represented by \nt{$\Phi$}. Note that \nt{$c(\bar v)$} is a
conjunction of program execution state properties, i.e., properties
about the terms to which program variables are bound to.
As already said, we use the
comma (,) as the symbol for the conjunction operator. If the
precondition \nt{$c(\bar v)$} is omitted, then it is assumed to be the
``top'' element of the lattice representing calls, i.e., the one that
represents any call to predicate $\prog$. The syntax used to express
the resource usage interval function \nt{$\Phi$} is a conjunction of
\texttt{costb/3} or \texttt{cost/3}
properties. 

Assuming that $\Phi(\bar n) = [\Phi^{l}(\bar n),
\Phi^{u}(\bar n)]$, where $\bar n = size_{p}(input_{p}(\bar
v))$, $\Phi$ can be represented in the resource usage assertion as the
conjunction: \\
\centerline{(\texttt{cost}(lb, $r$, $\Phi^{l}(\bar n)$),
\texttt{cost}(ub, $r$, $\Phi^{u}(\bar n)$))} 

\noindent
or, alternatively, using the \texttt{costb/3} property: \\
\centerline{\texttt{costb}($r$, $\Phi^{l}(\bar n)$, $\Phi^{u}(\bar n)$)} 

\noindent
We use Prolog syntax for variable names (variables start with
uppercase letters).

\begin{examplebox}
\label{ex:nrevcomp}
In the program of Fig.~\ref{fig:nrev} one could use the assertion:

\begin{center}
\begin{small}
\begin{tabbing}
\texttt{:- pred} \= \texttt{nrev(A,B) : ( list(A, gnd), var(B) )}\\ 
\> \texttt{+ ( cost(lb, steps, 2 * length(A)),} \\ 
\> \texttt{\ \ \ \ cost(ub, steps, 1 + exp(length(A), 2) )).}
\end{tabbing}
\end{small}
\end{center}

\noindent
to express that for any call to \texttt{nrev(A,B)} with the first
argument bound to a ground list and the second one a free variable, a
lower (resp.\ upper) bound on the number of resolution \texttt{steps}
performed by the computation is $2 \times length(A)$
(resp.\ $1+length(A)^2$). The property $\texttt{list}(\_, \_)$ is
represented as a higher order predicate:

\begin{small}
\centering
\prettylstciao
\begin{lstlisting}
list([], T).
list([H|R], T) :- T(H), list(R).
\end{lstlisting}
\end{small}

\noindent
and the property $\texttt{gnd}(\_)$, expressing ``groundness'', can
also be inferred by \ciaopp, with the \emph{shfr} abstract
domain~\cite{ai-jlp}.

\noindent
In this example, $\prog$ is $nrev$, $\bar v$ is \texttt{(A, B)}, $c(\bar
v)$ is \texttt{( list(A, gnd), var(B) )}, $\bar n =
size_{nrev}(input_{nrev}((A, B))) = (length(A))$, where the functions
$size_{nrev}$ and $input_{nrev}$ are those defined in
Example~\ref{ex:size-input}, and the interval $\Phi_{rev}(\bar n)$
approximating the number of resolution steps is $[2 \times length(A),
1+length(A)^2]$ (in other words, we are assuming that
$\Phi_{nrev}^{l}(x) = 2 \times x$ and  $\Phi_{nrev}^{u}(x) = 1+x^2$).
If we omit the \texttt{cost} property expressing the lower bound
(\texttt{lb}) on the resource usage, the minimum of the interval is
assumed to be zero (since the number of resolution steps cannot be
negative). If we assume that the resource usage can be negative, the
interval would be $(-\infty, 1+n^2]$. This information can be given by
  the user when providing the assertions that constitute the
  definition of a particular resource and its cost model (which
  expresses the resource usage of basic elements of a
  program/language). A detailed description of our user-definable
  resource analysis framework is given in~\cite{resource-iclp07}.
Similarly, if the upper bound (\texttt{ub}) is omitted, the upper
limit of the interval is assumed to be $\infty$.
\end{examplebox}

\begin{examplebox}
The assertion in Example~\ref{ex:nrevcomp}
is applicable for the following concrete semantic pairs:
\begin{verbatim}
( nrev([a,b,c,d,e,f,g],X), 35 )       ( nrev([],Y), 1 )
\end{verbatim}
but it is not applicable to the following ones:
\begin{verbatim}
( nrev([A,B,C,D,E,F,G],X), 35 )       ( nrev(W,Y), 1 )
( nrev([a,b,c,d,e,f,g],X), 53 )       ( nrev([],Y), 11 )
\end{verbatim}
Those in the first line above 
do not meet the
assertion's precondition $c(\bar v)$: the leftmost one because
\texttt{nrev/2} is called with the first argument bound to a list of
unbound variables (denoted by using uppercase letters), and the other
one because the first argument of \texttt{nrev/2} is an unbound
variable. The concrete semantic pairs on the second line will never
occur during execution because they violate the assertion, i.e., they
meet the precondition \nt{$c(\bar v)$}, but the resource usage of
their execution is not within the limits expressed by \nt{$\Phi$}.
\end{examplebox}

\subsection{Comparing Abstract Semantics: Correctness}
\label{sec:comparing-anstract-semantics}

The definition of partial correctness has been given by the condition
$\p \subseteq \Inten$ in Table~\ref{tab:corr-comp}. However, we
have already argued that we are going to use an approximation
$I_\alpha$ of the intended semantics $\Inten$, where $I_\alpha$ is
given as a set of \emph{call-resource} pairs of the form $(p(\bar v):c(\bar v),
\Phi)$.

\begin{definition}[Input-size set]
Let $e_\alpha$ be a call-resource abstract pair $(p(\bar v):c(\bar v),
\Phi)$. We define the \emph{input-size set} of $e_\alpha$, denoted
$input\_size\_set(e_\alpha)$ as the set $\{\bar n \ | \ \exists \ \bar
t \in \gamma_m(c(\bar v)) \con \bar n = size_{p}(input_{p}(\bar
t))\}$.  The input-size set is represented as an interval (or a union
of intervals).  We obviously require that $input\_size\_set(e_\alpha)
\subseteq Dom(\Phi)$ for any call-resource abstract pair $e_\alpha$,
where $Dom(\Phi)$ denotes de domain of function $\Phi$.  $\Box$
\end{definition}

\begin{definition}
\label{def:correct-pair}
We say that $\prog$ is partially correct with respect to a
call-resource pair $(p(\bar v):c_I(\bar v), \Phi_I)$ if for all
$(p(\bar t), r) \in \p$ (i.e., $p(\bar t) \in {\cal C}_p$ and $r$ is
the amount of resource usage of the computation of the call $p(\bar
t)$), it holds that: if $\bar t \in \gamma_m(c_I(\bar v))$ and $\bar n
= size_{p}(input_{p}(\bar t))$, then $r \in \Phi_I(\bar n)$, where
$\gamma_m$ is the concretization function of the mode/type abstract
domain.
\end{definition}

\begin{lemma}
  $\prog$ is partially correct with respect to $\Inten_{\alpha}$,
  i.e. $\p \subseteq \gamma(\Inten_{\alpha})$ if:

\begin{itemize}
\item For all $(p(\bar t), r) \in \p$,
  there is a pair $(p(\bar v):c_I(\bar v), \Phi_I)$ in
  $\Inten_{\alpha}$ such that $\bar t \in \gamma_m(c_I(\bar v))$, and

\item $\prog$ is partially correct with respect to every pair in $\Inten_{\alpha}$.
\end{itemize}

\end{lemma}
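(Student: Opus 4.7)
The plan is to prove the subset relation $\p \subseteq \gamma(\Inten_{\alpha})$ directly by pointwise inclusion, exploiting the fact that $\gamma$ distributes over the set $\Inten_{\alpha}$ via $\gamma_1$. First I would fix an arbitrary concrete semantic element $(p(\bar t), r) \in \p$ and argue that it belongs to $\gamma_1(e)$ for some $e \in \Inten_{\alpha}$; then, since $\gamma(\Inten_{\alpha}) = \bigcup_{e \in \Inten_{\alpha}} \gamma_1(e)$ by definition, the conclusion follows immediately.

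To locate the witness $e$, I would invoke the first hypothesis: there exists a pair $e = (p(\bar v):c_I(\bar v), \Phi_I) \in \Inten_{\alpha}$ such that $\bar t \in \gamma_m(c_I(\bar v))$. With this $e$ in hand, I set $\bar n = size_{p}(input_{p}(\bar t))$, which is well-defined since $input_p$ and $size_p$ are total on the appropriate tuples, and I would note that $\bar n \in input\_size\_set(e)$, hence $\bar n \in Dom(\Phi_I)$, so $\Phi_I(\bar n)$ is meaningful. Next I would invoke the second hypothesis: $\prog$ is partially correct with respect to $e$, so by Definition~\ref{def:correct-pair} applied to the pair $(p(\bar t), r) \in \p$ together with $\bar t \in \gamma_m(c_I(\bar v))$, we obtain $r \in \Phi_I(\bar n) = [\Phi_I^{l}(\bar n), \Phi_I^{u}(\bar n)]$.

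Having established $\bar t \in \gamma_m(c_I(\bar v))$, $\bar n = size_{p}(input_{p}(\bar t))$, and $r \in [\Phi_I^{l}(\bar n), \Phi_I^{u}(\bar n)]$, the pair $(p(\bar t), r)$ satisfies exactly the defining conjunction of $\gamma_1((p(\bar v):c_I(\bar v), \Phi_I))$ given earlier in the text, so $(p(\bar t), r) \in \gamma_1(e)$. Since $e \in \Inten_{\alpha}$, we get $(p(\bar t), r) \in \bigcup_{e' \in \Inten_{\alpha}} \gamma_1(e') = \gamma(\Inten_{\alpha})$. Because $(p(\bar t), r)$ was arbitrary in $\p$, this yields $\p \subseteq \gamma(\Inten_{\alpha})$, which is exactly the stated partial correctness.

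The argument is essentially a careful unfolding of three definitions (the concretization $\gamma$, the concretization $\gamma_1$ of a call-resource pair, and partial correctness w.r.t.\ a single pair), and the only mild obstacle is bookkeeping: making sure the size tuple $\bar n$ computed from $\bar t$ is consistent with the one used when evaluating $\Phi_I$, and that the existential witness provided by the first hypothesis is the \emph{same} pair used to invoke the second hypothesis. No fixpoint reasoning or approximation soundness beyond what is already encoded in Definition~\ref{def:correct-pair} is needed.
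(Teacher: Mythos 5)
Your proof is correct. The paper states this lemma \emph{without} any accompanying proof, and your argument---fixing an arbitrary $(p(\bar t), r) \in \p$, taking the witness pair $e = (p(\bar v):c_I(\bar v), \Phi_I)$ supplied by the first hypothesis, applying Definition~\ref{def:correct-pair} via the second hypothesis to that same $e$ to get $r \in \Phi_I(\bar n)$ with $\bar n = size_{p}(input_{p}(\bar t))$, and concluding $(p(\bar t), r) \in \gamma_1(e) \subseteq \bigcup_{e' \in \Inten_{\alpha}} \gamma_1(e') = \gamma(\Inten_{\alpha})$---is precisely the direct unfolding of $\gamma$, $\gamma_1$, and pairwise partial correctness that the authors evidently intend, with your well-definedness remark ($\bar n \in input\_size\_set(e) \subseteq Dom(\Phi_I)$) correctly grounded in the paper's stated requirement on input-size sets.
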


Note that the notion of $\prog$ being partially correct with respect
to a call-resource pair $(p(\bar v):c_I(\bar v), \Phi_I)$ is different
from the notion of $\prog$ being partially correct with respect to a
singleton set $\{(p(\bar v):c_I(\bar v), \Phi_I)\}$, i.e., an 
intended semantics: if for all $(p(\bar t), r) \in \p$ it holds that
$\bar t \not\in \gamma_m(c_I(\bar v))$, then $\prog$ is partially
correct with respect to
$(p(\bar v):c_I(\bar v), \Phi_I)$ but $\prog$ is not partially correct
with respect to
$\{(p(\bar v):c_I(\bar v), \Phi_I)\}$.

As mentioned before, we use a safe over-approximation of the program
semantics $\p$, that we denote $\overapprox$, and is automatically
computed by the static analysis
in~\cite{resource-iclp07,plai-resources-iclp14} as a set of
\emph{call-resource} pairs of the form $(p(\bar v):c(\bar v), \Phi)$.
For simplicity, we assume that
$\overapprox$ is a set made up of a single call-resource pair.
The description of how the resource usage bound functions appearing in
$\overapprox$ are computed is out of the scope of this paper, and it
can be found in~\cite{resource-iclp07,plai-resources-iclp14} and its
references.  The safety of such resource usage analysis can
be expressed as follows:

\begin{lemma}[Safety of the static resource usage analysis]
\label{lem:analysis-safety}
Let $e_\alpha = (p(\bar v):c(\bar v), \Phi)$ and $\overapprox = \{
e_\alpha \}$.  For all $(p(\bar t), r) \in \p$, it holds that:
$\bar t \in \gamma_m(c(\bar v))$,
$input\_size\_set(e_\alpha) \subseteq Dom(\Phi)$, and $r \in \Phi(\bar
n)$, where $\bar n = size_{p}(input_{p}(\bar t))$.  $\Box$
\end{lemma}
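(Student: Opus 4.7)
The plan is to establish the three conclusions separately, reducing each to a soundness property of a component of the \ciaopp{} analysis pipeline as described in Sect.~\ref{sec:framework} and in~\cite{resource-iclp07,plai-resources-iclp14}. Throughout, fix $(p(\bar t), r) \in \p$ and let $\bar n = size_{p}(input_{p}(\bar t))$.

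First, I would dispatch $\bar t \in \gamma_m(c(\bar v))$ by appealing to the soundness of the mode/type abstract interpretation (the combination of \emph{eterms} and \emph{shfr}) that generates the call-pattern component $c(\bar v)$ of $e_\alpha$. Since $\overapprox$ is produced by the abstract interpreter as a safe over-approximation of $\p$, every concrete call to $\prog$ that actually arises is covered by the abstracted call pattern; this is a direct instance of $\alpha(\p) \subseteq \overapprox$ projected onto the call-pattern component via the Galois connection.

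For the second condition, $input\_size\_set(e_\alpha) \subseteq Dom(\Phi)$, I would argue that it is an invariant of the construction of $\Phi$: the resource analysis in~\cite{resource-iclp07,plai-resources-iclp14} sets up recurrence equations parameterized exactly on input sizes induced by $c(\bar v)$ via $size_p \circ input_p$, and its closed-form solution $\Phi = [\Phi^l, \Phi^u]$ is defined on (at least) this input-size set. Combined with the first item, $\bar n$ must therefore lie in $Dom(\Phi)$, so $\Phi^l(\bar n)$ and $\Phi^u(\bar n)$ are well-defined.

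For the third condition, $r \in \Phi(\bar n)$, the natural approach is induction on the SLD-derivation for $p(\bar t)$ decorated with resource costs. In the base cases (facts, builtins, and \isa{}/\llvm{} primitives), the consumed resource is bounded by the \emph{trust} assertions constituting the resource model (Fig.~\ref{fig:energymodel}). In the inductive step for a clause $p(\bar v) \imp L_1,\ldots,L_n$, the total cost decomposes into a head cost plus the sum of body-literal costs, each of which is inductively bracketed by the corresponding $\Phi_{L_i}^l$ and $\Phi_{L_i}^u$. Using the monotonicity of $\Phi^l, \Phi^u$ (Def.~\ref{def:bound-function}) and the size relationships between head and body arguments inferred by the size analysis, these pointwise bounds combine into the recurrence associated with $\prog$. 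The closed-form bounds $\Phi^l$ and $\Phi^u$ produced by the recurrence solver are then, by the solver's soundness, lower and upper bounds of any solution of that recurrence, yielding $\Phi^l(\bar n) \leq r \leq \Phi^u(\bar n)$.

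The main obstacle is the third step, specifically the bookkeeping in the inductive argument: one must ensure that size metrics propagate consistently from heads to body arguments (so that the inductive hypothesis is applicable with the right $\bar n$), and that the closed-form solver preserves both lower and upper bounds simultaneously when the recurrence is non-deterministic or piecewise. Both points are established in~\cite{resource-iclp07,plai-resources-iclp14}, so the proof largely amounts to assembling those cited soundness results under the present notation; no new mathematical content beyond Def.~\ref{def:bound-function} and Def.~\ref{def:interval-function} is required.
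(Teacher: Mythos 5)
The paper offers no proof of this lemma for you to diverge from: the statement is terminated with a $\Box$ and is explicitly treated as the assumed safety contract of the external resource analysis, since the sentence immediately preceding it says that the description of how the resource usage bound functions appearing in $\overapprox$ are computed ``is out of the scope of this paper'' and defers to~\cite{resource-iclp07,plai-resources-iclp14}. Your plan is therefore not in conflict with the paper; it is a reasonable reconstruction of how the cited analyses would discharge the three conclusions, and it correctly identifies that the lemma reduces to their soundness results (call-pattern soundness of the mode/type abstraction, well-definedness of $\Phi$ over the input-size set, and soundness of the cost recurrences and their closed forms) rather than to any new argument in this paper. One substantive caveat on your third step: you treat the lower and upper bounds symmetrically (``inductively bracketed''), but lower-bound cost analysis is not the mirror image of the upper-bound case --- an SLD computation may fail partway through a clause body, so summing body-literal lower bounds along a derivation is unsound without extra information such as non-failure or covering/determinacy, which is exactly the additional machinery of~\cite{low-bou-sas94} that the paper cites. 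Since your plan ultimately imports the cited soundness theorems wholesale, this does not break it, but a worked-out proof would need that asymmetric treatment of $\Phi^l$ versus $\Phi^u$ rather than a single uniform induction.
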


Let $c_1(\bar v)$ and $c_2(\bar v)$ be two elements of the mode/type
abstract domain already mentioned, each one representing a set of
calls. 
The inclusion operator $\sqsubseteq_m$
is the order relation in such abstract domain, and meets the condition:
$
\begin{array}{c}
c_1(\bar v) \sqsubseteq_m c_2(\bar v) \text{ \ if and only if \ }
\gamma_m(c_1(\bar v)) \subseteq \gamma_m(c_2(\bar v)) \text{.}
\end{array}
$
In our case, we use the comparison operator $\sqsubseteq_m$
implemented in the \ciaopp system, which uses finer grain comparison
operators for program state properties. In particular, it uses the
type comparison operator of the \emph{eterms} abstract
domain~\cite{eterms-sas02} (based on adaptations of the type inclusion
operations of~\citeS{Dart-Zobel}) and the mode comparison operator of
the \emph{shfr} abstract domain~\cite{ai-jlp} (which
represents \emph{groundness} and \emph{freeness} properties).

\begin{examplebox}
Let $c_1(\bar v)$ be $\texttt{list}(x, \texttt{gnd}) \land
\texttt{var}(y)$, and $c_2(\bar v)$ be $\texttt{list}(x) \land
\texttt{var}(y)$. We have that $c_1(\bar v) \sqsubseteq_m c_2(\bar
v)$, but $c_2(\bar v) \not\sqsubseteq_m c_1(\bar v)$.  Similarly,
$(\texttt{list}(x, \texttt{gnd}) \land \texttt{var}(y)) \sqsubseteq_m
(\texttt{gnd}(x) \land \texttt{var}(y))$,
but $(\texttt{gnd}(x) \land \texttt{var}(y)) \not\sqsubseteq_m
(\texttt{list}(x, \texttt{gnd}) \land \texttt{var}(y))$.
\end{examplebox}

\begin{definition}
Let $\Phi_1$ and $\Phi_2$ be two resource usage interval functions
i.e., $\Phi_{1}: Dom(\Phi_{1}) \mapsto {\cal RI}$, and $\Phi_{2}:
Dom(\Phi_{2}) \mapsto {\cal RI}$, where $Dom(\Phi_{1}) \subseteq
\rsd^{k}$ and $Dom(\Phi_{2}) \subseteq \rsd^{k}$.  Let $S$ be a set
such that $S \subseteq Dom(\Phi_{1})$ and $S \subseteq Dom(\Phi_{2})$.
We define the inclusion relation $\sqsubseteq_S$ and the intersection
operation $\sqcap_S$ as follows:

  \begin{itemize}
   \item $\Phi_1 \sqsubseteq_S \Phi_2$ if and only if for all $\bar n
     \in S$,
$\Phi_1(\bar n) \subseteq \Phi_2(\bar
  n)$. 

  \item We say that $\Phi_1 \sqcap_S \Phi_2 = \Phi_3$ if and only if for all $\bar
    n \in S$,
$\Phi_1(\bar n) \cap \Phi_2(\bar n) = \Phi_3(\bar n)$.

\end{itemize}

\end{definition}

\begin{definition}
\label{def:inclusion}

Let $e_I$ be a pair $(p(\bar v):c_I(\bar v), \Phi_I)$ in the
intended
meaning $\Inten_\alpha$, and $e_\alpha$ the pair $(p(\bar v):c(\bar
v), \Phi)$ in the computed abstract semantics $\overapprox$. For
simplicity, we assume the same tuple of variables $\bar v$ in all
abstract objects.
We say that $e_\alpha \sqsubseteq e_I$
iff $c_I(\bar v) \sqsubseteq_m c(\bar v)$ and $\Phi \sqsubseteq_S
\Phi_I$, where $S = input\_size\_set(e_I)$.  $\Box$
\end{definition}

Note that the condition $c_I(\bar v) \sqsubseteq_m c(\bar v)$ is
needed to ensure that we select resource analysis information that can
safely be used to verify the assertion corresponding to the pair
$(p(\bar v):c_I(\bar v), \Phi_I)$.  If $c_I(\bar v) \sqsubseteq_m
c(\bar v)$, then $input\_size\_set(e_I) \ \subseteq
\ input\_size\_set(e_\alpha)$.

\begin{definition}
\label{def:intersection}
  We say that $(p(\bar v):c(\bar v), \Phi) \sqcap (p(\bar v):c_I(\bar
  v), \Phi_I) = \emptyset$ if: \\ 
\centerline{$c_I(\bar v) \sqsubseteq_m c(\bar v)$
  and $\Phi \sqcap_S \Phi_I = \Phi_{\emptyset}$,} 
\noindent where
  $\Phi_{\emptyset}$ represents the constant function identical to
  the empty interval.
\end{definition}

\begin{theorem}
\label{lem:partial-correctness}
Let $e_\alpha = (p(\bar v):c(\bar v), \Phi)$ and $\overapprox = \{
e_\alpha \}$. Let $e_I = (p(\bar v):c_I(\bar v), \Phi_I)$. If
$e_\alpha \sqsubseteq e_I$ then $\prog$ is partially correct with
respect to $e_I$.
\end{theorem}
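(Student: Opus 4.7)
The plan is to unfold both hypotheses and the conclusion into their definitions and then chain together the three ingredients: monotonicity of $\gamma_m$ with respect to $\sqsubseteq_m$, safety of the resource analysis (Lemma~\ref{lem:analysis-safety}), and the pointwise interval inclusion $\Phi \sqsubseteq_S \Phi_I$. Concretely, I would fix an arbitrary pair $(p(\bar t), r) \in \p$ and assume $\bar t \in \gamma_m(c_I(\bar v))$, set $\bar n = size_p(input_p(\bar t))$, and aim to derive $r \in \Phi_I(\bar n)$, which is exactly the condition required by Definition~\ref{def:correct-pair}.

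First, from $e_\alpha \sqsubseteq e_I$ (Definition~\ref{def:inclusion}) I extract the two conjuncts $c_I(\bar v) \sqsubseteq_m c(\bar v)$ and $\Phi \sqsubseteq_S \Phi_I$, where $S = input\_size\_set(e_I)$. The first conjunct, by the defining property of $\sqsubseteq_m$, gives $\gamma_m(c_I(\bar v)) \subseteq \gamma_m(c(\bar v))$, so in particular $\bar t \in \gamma_m(c(\bar v))$. Next, by definition of the input-size set, $\bar n \in input\_size\_set(e_I) = S$. Applying Lemma~\ref{lem:analysis-safety} to the pair $e_\alpha$ and the concrete observation $(p(\bar t), r) \in \p$ with $\bar t \in \gamma_m(c(\bar v))$, I obtain $r \in \Phi(\bar n)$. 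Finally, since $\bar n \in S$ and $\Phi \sqsubseteq_S \Phi_I$ means $\Phi(\bar n) \subseteq \Phi_I(\bar n)$ pointwise on $S$, I conclude $r \in \Phi_I(\bar n)$, closing the argument.

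The main technical point to be careful about, rather than a true obstacle, is to make sure the premise of Lemma~\ref{lem:analysis-safety} is indeed invoked with the correct call--resource pair: the lemma requires $\bar t \in \gamma_m(c(\bar v))$ (the analyzer's precondition), not $\gamma_m(c_I(\bar v))$, which is why the step $c_I(\bar v) \sqsubseteq_m c(\bar v)$ is essential and cannot be dropped. An additional subtlety is that $\Phi \sqsubseteq_S \Phi_I$ holds only on $S = input\_size\_set(e_I)$, so one has to check that the specific $\bar n$ arising from $\bar t$ lies in $S$; this is immediate from the definition of input-size set together with $\bar t \in \gamma_m(c_I(\bar v))$. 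No other case analysis or approximation-theoretic machinery is needed, since $\overapprox$ is assumed to consist of the single pair $e_\alpha$.
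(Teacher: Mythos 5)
Your proposal is correct and follows essentially the same route as the paper's proof: unfold Definition~\ref{def:inclusion} to get $c_I(\bar v) \sqsubseteq_m c(\bar v)$ and $\Phi \sqsubseteq_S \Phi_I$, use $\gamma_m(c_I(\bar v)) \subseteq \gamma_m(c(\bar v))$ to place $\bar t$ in the analyzer's call abstraction, apply Lemma~\ref{lem:analysis-safety} to get $r \in \Phi(\bar n)$, and conclude $r \in \Phi_I(\bar n)$ via the interval inclusion on $S = input\_size\_set(e_I)$. Your explicit check that $\bar n \in S$ (via the definition of input-size set) is a slightly more careful rendering of the step the paper handles through $input\_size\_set(e_I) \subseteq input\_size\_set(e_\alpha)$, but the argument is the same.
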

\begin{proof}
If $e_\alpha \sqsubseteq e_I$
then $c_I(\bar v) \sqsubseteq_m c(\bar v)$ (by
Definition~\ref{def:inclusion}), what implies that $\gamma_m(c_I(\bar
v)) \subseteq \gamma_m(c(\bar v))$ and hence $input\_size\_set(e_I)
\ \subseteq \ input\_size\_set(e_\alpha)$. We are going to prove that
the condition of Definition~\ref{def:correct-pair} holds.  For all
$(p(\bar t), r) \in \p$, it holds that: if $\bar t \in
\gamma_m(c_I(\bar v))$ then $\bar t \in \gamma_m(c(\bar v))$ (because
$\gamma_m(c_I(\bar v)) \subseteq \gamma_m(c(\bar v))$),
and thus $r \in \Phi(\bar n)$, where $\bar n = size_{p}(input_{p}(\bar
t))$
(by Lemma~\ref{lem:analysis-safety}). Since $\Phi \sqsubseteq_S
\Phi_I$, where $S = input\_size\_set(e_I)$
(Definition~\ref{def:inclusion}), and $input\_size\_set(e_I)
\ \subseteq \ input\_size\_set(e_\alpha)$, we have that $r \in
\Phi_I(\bar n)$.
\end{proof}

Similarly, we have the following result:

\begin{theorem}
\label{lem:incorrectness}
  If $(p(\bar v):c(\bar v), \Phi) \sqcap (p(\bar v):c_I(\bar
  v), \Phi_I) = \emptyset$ and $(p(\bar v):c(\bar v), \Phi) \neq \emptyset$ then $\prog$ is 
not partially correct
w.r.t. $(p(\bar v):c_I(\bar v), \Phi_I)$. $\Box$
\end{theorem}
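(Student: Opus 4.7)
The plan is to proceed by contradiction, mirroring the structure of the proof of Theorem~\ref{lem:partial-correctness}. Suppose, toward a contradiction, that $\prog$ is partially correct with respect to $e_I = (p(\bar v):c_I(\bar v), \Phi_I)$. From the hypothesis $(p(\bar v):c(\bar v), \Phi) \sqcap e_I = \emptyset$, Definition~\ref{def:intersection} supplies two facts: (i) $c_I(\bar v) \sqsubseteq_m c(\bar v)$, and (ii) $\Phi \sqcap_S \Phi_I = \Phi_{\emptyset}$, where $S = input\_size\_set(e_I)$, so $\Phi(\bar n) \cap \Phi_I(\bar n) = \emptyset$ for every $\bar n \in S$. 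Fact (i), via the Galois connection underlying the mode/type domain, yields $\gamma_m(c_I(\bar v)) \subseteq \gamma_m(c(\bar v))$, and hence $S \subseteq input\_size\_set(e_\alpha)$, so that both $\Phi$ and $\Phi_I$ are defined on $S$.

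The second step is to extract a witness. The non-emptiness hypothesis $(p(\bar v):c(\bar v), \Phi) \neq \emptyset$---read, analogously to the ``$\overapprox \neq \emptyset$'' requirement in Table~\ref{tab:app-corr-comp}, as asserting that some computation in $\p$ actually matches the call pattern described by the abstract pair, in particular on inputs within $\gamma_m(c_I(\bar v))$---produces some $(p(\bar t), r) \in \p$ with $\bar t \in \gamma_m(c_I(\bar v))$. Let $\bar n = size_p(input_p(\bar t))$, so $\bar n \in S$. By Lemma~\ref{lem:analysis-safety} applied to this concrete pair, $r \in \Phi(\bar n)$, and by the assumed partial correctness of $\prog$ with respect to $e_I$ (Definition~\ref{def:correct-pair}), $r \in \Phi_I(\bar n)$. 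Therefore $r \in \Phi(\bar n) \cap \Phi_I(\bar n)$, contradicting~(ii), so the assumption fails and $\prog$ is not partially correct with respect to $e_I$.

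The main obstacle is pinning down the intended meaning of ``$(p(\bar v):c(\bar v), \Phi) \neq \emptyset$'' precisely enough that the witness it provides lies in $\gamma_m(c_I(\bar v))$, not merely in $\gamma_m(c(\bar v))$; without this, partial correctness with respect to $e_I$ could hold vacuously whenever $\prog$ never runs on inputs satisfying $c_I$. Once the non-emptiness condition is read along these lines (essentially the joint analogue of the ``$\overapprox \neq \emptyset$'' and ``$\underapprox \neq \emptyset$'' conditions of Table~\ref{tab:app-corr-comp}, specialized to the intended call pattern), the remainder is a direct combination of safety (Lemma~\ref{lem:analysis-safety}) and the unfolding of Definitions~\ref{def:correct-pair} and~\ref{def:intersection}, with no further technical subtleties.
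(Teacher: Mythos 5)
The paper gives no proof of this theorem: it is stated with a terminal $\Box$ right after the remark ``Similarly, we have the following result,'' the intent being an argument mirroring that of Theorem~\ref{lem:partial-correctness}. Your proposal supplies exactly that mirrored argument, and its core is correct: unfolding Definition~\ref{def:intersection} yields $c_I(\bar v) \sqsubseteq_m c(\bar v)$ and pointwise disjointness $\Phi(\bar n) \cap \Phi_I(\bar n) = \emptyset$ on $S$ (your gloss that the unspecified subscript $S$ is $input\_size\_set(e_I)$, as in Definition~\ref{def:inclusion}, is the right one); a witness $(p(\bar t), r) \in \p$ with $\bar t \in \gamma_m(c_I(\bar v))$ has $\bar n = size_p(input_p(\bar t)) \in S$, lands in $\Phi(\bar n)$ by Lemma~\ref{lem:analysis-safety}, and hence lies outside $\Phi_I(\bar n)$, violating Definition~\ref{def:correct-pair}. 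The detour through contradiction is harmless, though the direct argument (exhibit the witness, conclude $r \notin \Phi_I(\bar n)$) is a line shorter.

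The obstacle you flag is genuine, and your resolution is the one needed for the theorem to be true at all. Read literally, $(p(\bar v):c(\bar v), \Phi) \neq \emptyset$ says only that the abstract descriptor is not the bottom element; since $\overapprox$ \emph{over}-approximates $\p$, a non-empty descriptor is compatible with $\p$ containing no call matching $c_I(\bar v)$, in which case $\prog$ is \emph{vacuously} partially correct with respect to $e_I$ --- by the paper's own remark following Definition~\ref{def:correct-pair} --- and the conclusion fails. So the hypothesis must be strengthened, exactly as you do, to assert the existence of some $(p(\bar t), r) \in \p$ with $\bar t \in \gamma_m(c_I(\bar v))$ (not merely $\bar t \in \gamma_m(c(\bar v))$). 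This is the pair-level analogue of the $\underapprox \neq \emptyset$ conjunct in the incorrectness row of Table~\ref{tab:app-corr-comp}; note there too that $\overapprox \neq \emptyset$ alone would not suffice, which corroborates your reading. With that reading made explicit, your proof is complete and is, in substance, the proof the paper leaves implicit.
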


In order to prove or disprove program partial correctness 
we compare
call-resource pairs by using
Theorems~\ref{lem:partial-correctness} and~\ref{lem:incorrectness}
(thus ensuring the sufficient conditions given in
Table~\ref{tab:app-corr-comp}).
This means that whenever $c_I(\bar v) \sqsubseteq_m c(\bar v)$ we have
to determine whether $\Phi \sqsubseteq_S \Phi_I$ or $\Phi \sqcap_S
\Phi_I = \Phi_{\emptyset}$. To do this in practice, we compare
resource usage bound functions in the way expressed by
the following Corollary~\ref{func-comparison-corollary} of
Theorems~\ref{lem:partial-correctness} and~\ref{lem:incorrectness}.

\begin{corollary}
\label{func-comparison-corollary}
Let $(p(\bar v):c_I(\bar v), \Phi_I)$ be a pair in the intended
abstract semantics $\Inten_\alpha$ (given in a specification), and
$\overapprox = \{(p(\bar v):c(\bar v), \Phi)\}$ the abstract semantics
inferred by analysis.
Let $S$ be the input-size set of $(p(\bar v):c_I(\bar v), \Phi_I)$.
Assume that $c_I(\bar v) \sqsubseteq_m c(\bar v)$.  Then, we have
that:

\begin{enumerate} 

\item \label{suffcond1}
If $\forall \bar n \in S: (\Phi_{I}^{l}(\bar n) \leq
  \Phi^{l}(\bar n) \land \Phi^{u}(\bar n) \leq \Phi_{I}^{u}(\bar n))$,
  then $\prog$ is partially correct with respect to $(p(\bar v):c_I(\bar
  v), \Phi_I)$.

\item \label{suffcond2} If $\forall \bar n \in S: (\Phi^{u}(\bar n) <
  \Phi_{I}^{l}(\bar n) \lor \Phi_{I}^{u}(\bar n) < \Phi^{l}(\bar n))$,
  then $\prog$ is not partially correct
with respect to $(p(\bar v):c_I(\bar v),
  \Phi_I)$.

\end{enumerate}

\end{corollary}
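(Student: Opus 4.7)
The plan is to derive both items of the corollary as straightforward specializations of Theorems~\ref{lem:partial-correctness} and~\ref{lem:incorrectness}, where the only real work is to rephrase the interval-level relations $\sqsubseteq_S$ and $\sqcap_S$ in terms of the endpoint bound functions $\Phi^l, \Phi^u, \Phi_I^l, \Phi_I^u$. Note that the mode/type containment hypothesis $c_I(\bar v) \sqsubseteq_m c(\bar v)$ is already part of the assumptions of the corollary, so in both cases the only thing left to verify is the interval-function side of Definitions~\ref{def:inclusion} and~\ref{def:intersection}.

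For item~\ref{suffcond1}, I would argue as follows. By Definition~\ref{def:interval-function}, for every $\bar n \in S$ we have $\Phi(\bar n) = [\Phi^l(\bar n), \Phi^u(\bar n)]$ and $\Phi_I(\bar n) = [\Phi_I^l(\bar n), \Phi_I^u(\bar n)]$, both well-defined (endpoints ordered). The standard characterization of interval inclusion on the real line gives
\[
[\Phi^l(\bar n), \Phi^u(\bar n)] \subseteq [\Phi_I^l(\bar n), \Phi_I^u(\bar n)]
\iff \Phi_I^l(\bar n) \leq \Phi^l(\bar n) \text{ and } \Phi^u(\bar n) \leq \Phi_I^u(\bar n),
\]
which is exactly the pointwise hypothesis of item~\ref{suffcond1}. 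Quantifying over $\bar n \in S$ yields $\Phi \sqsubseteq_S \Phi_I$, and together with the assumed $c_I(\bar v) \sqsubseteq_m c(\bar v)$ we obtain $e_\alpha \sqsubseteq e_I$ in the sense of Definition~\ref{def:inclusion}. Theorem~\ref{lem:partial-correctness} then delivers partial correctness of $\prog$ with respect to $(p(\bar v):c_I(\bar v),\Phi_I)$.

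For item~\ref{suffcond2}, I would use the dual fact about intervals on the real line: two nonempty intervals $[a,b]$ and $[c,d]$ are disjoint iff $b < c$ or $d < a$. Applying this pointwise with $[a,b] = [\Phi^l(\bar n),\Phi^u(\bar n)]$ and $[c,d] = [\Phi_I^l(\bar n),\Phi_I^u(\bar n)]$ for every $\bar n \in S$ yields $\Phi(\bar n) \cap \Phi_I(\bar n) = \emptyset$, i.e.\ $\Phi \sqcap_S \Phi_I = \Phi_\emptyset$. Combined with the assumption $c_I(\bar v) \sqsubseteq_m c(\bar v)$, Definition~\ref{def:intersection} gives $(p(\bar v):c(\bar v),\Phi) \sqcap (p(\bar v):c_I(\bar v),\Phi_I) = \emptyset$. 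Since the pair $(p(\bar v):c(\bar v),\Phi)$ obtained from analysis is nonempty (a call-resource pair inferred by the analyzer, hence admitting at least one concretization), Theorem~\ref{lem:incorrectness} then yields that $\prog$ is not partially correct with respect to $(p(\bar v):c_I(\bar v),\Phi_I)$.

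There is no real obstacle here: both items are essentially a translation between the interval formulation used in the theorems and the endpoint-wise formulation used in the corollary, which is what makes the statement useful in practice, since the latter is what the function comparison machinery of Sect.~\ref{sec:function-comparison} actually checks. The only subtlety worth flagging explicitly in the write-up is that $\Phi$ and $\Phi_I$ must be evaluated on the \emph{common} input-size set $S = input\_size\_set(e_I)$, for which well-definedness of both functions is guaranteed by the containment $input\_size\_set(e_I) \subseteq input\_size\_set(e_\alpha) \subseteq Dom(\Phi)$ derived from $c_I(\bar v) \sqsubseteq_m c(\bar v)$ together with Lemma~\ref{lem:analysis-safety}.
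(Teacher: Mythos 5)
Your proof is correct and follows essentially the same route as the paper, which states the corollary without a separate proof beyond the remark that condition~\ref{suffcond1} (resp.~\ref{suffcond2}) pointwise implies $\Phi \sqsubseteq_S \Phi_I$ (resp.\ $\Phi \sqcap_S \Phi_I = \Phi_{\emptyset}$), whence Theorems~\ref{lem:partial-correctness} and~\ref{lem:incorrectness} apply directly. Your explicit treatment of the nonemptiness hypothesis of Theorem~\ref{lem:incorrectness} (which the corollary's statement leaves implicit) and of the well-definedness of $\Phi$ and $\Phi_I$ on $S$ via $input\_size\_set(e_I) \subseteq input\_size\_set(e_\alpha) \subseteq Dom(\Phi)$ only makes the argument more complete than the paper's own justification.
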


Note that the sufficient condition~\ref{suffcond1}
(resp.,~\ref{suffcond2}) above implies that $\Phi \sqsubseteq_S
\Phi_I$ (resp.\ $\Phi \sqcap_S \Phi_I = \Phi_{\emptyset}$, where, as
already said, $\Phi_{\emptyset}$ represents the constant function
identical to the empty interval.
In practice, we also use the condition $(\forall \bar n \in S:
\Phi^{u}(\bar n) < \Phi_{I}^{l}(\bar n)) \lor (\forall \bar n \in S:
\Phi_{I}^{u}(\bar n) < \Phi^{l}(\bar n))$, although it is stronger
than condition~\ref{suffcond2}.
When $\Phi_{I}^{u}$ (resp., $\Phi_{I}^{l}$) is not present in a
specification, we assume that $\forall \bar n$ $(\Phi_{I}^{u}(\bar n)=
\infty)$ (resp., $\Phi_{I}^{l}=-\infty$ or $\Phi_{I}^{l}(\bar n) = 0$,
depending on the resource). With this assumption, one of the resource
usage bound function comparisons in the sufficient
condition~\ref{suffcond1} (resp.,~\ref{suffcond2}) above is always
true (resp., false) and the truth value of such conditions depends on 
the other comparison.

\paragraph{\textbf{Inferring Preconditions on Data Sizes for Different Verification Outcomes.}}

If none of the conditions~\ref{suffcond1} or~\ref{suffcond2} in
Corollary~\ref{func-comparison-corollary} hold for the input-size set
$S$ of the pair $(p(\bar v):c_I(\bar v), \Phi_I)$, our proposal is to
partition $S$ in a number of $n_S$ subsets $S_j$, $1 \leq j \leq n_S$,
for which either condition holds. Thus, as a result of the
verification of $(p(\bar v):c_I(\bar v), \Phi_I)$ we produce a set of
pairs
$(p(\bar v):c_{I}^j(\bar v), \Phi_I)$, $1 \leq j \leq n_S$, whose
input-size set is $S_j$. Such pairs will be represented as assertions
in the output of our implementation prototype.

For the particular case where resource usage bound functions depend on
one argument, the element $c_{I}^j(\bar v)$ (in the assertion
precondition) is of the form $c_I(\bar v) \con d_j$, where $d_j$
defines an interval for the input data size $n$ to $p$.  This allows
us to give intervals $d_j$ of input data sizes for which a program
$\prog$ is (or is not) partially correct.

The definition of \emph{input-size set} can be extended to deal with
data size intervals $d_j$'s in a straightforward way: \\
\centerline{$S_j = \{n \ | \ \exists \ \bar t \in \gamma_m(c(\bar v))
  \con n = size_{p}(input_{p}(\bar t)) \con n \in d_j\}$.}

From the practical point of view, in order to represent properties
like $n \in d_j$, we have
added to the 
\ial
a new \texttt{intervals(A, B)} property, which expresses that
the input data size \texttt{A} belongs to
some of the intervals in
the list \texttt{B}. To this end, in order to show the result of the
assertion checking process to the user, we group all the $(p(\bar
v):c_{I}^j(\bar v), \Phi_I)$ pairs that meet the above sufficient
condition~\ref{suffcond1} (applied to the set $S_j$) and, assuming
that $d_{f_1}, \ldots, d_{f_b}$ are the computed input data size
intervals for such pairs, an assertion with the following syntactic
schema is produced as output:

\begin{center}
\fbox{\kbd{:- \textcolor{checked}{checked} pred} $p(\bar
  v):c_{I}^j(\bar v)$,\kbd{intervals(}$size_{p}(input_{p}(\bar
  v)),$\kbd{[}$d_{f_1}, \ldots, d_{f_b}$\kbd{])} \kbd{+} $\Phi_I$
  \kbd{.}}
\end{center}

\noindent 
Similarly, the pairs meeting the sufficient condition~\ref{suffcond2}
are grouped and the following assertion is produced:

\begin{center}
\fbox{\kbd{:- \textcolor{false}{false} pred} $p(\bar v):c_{I}^j(\bar v)$,\kbd{intervals(}$size_{p}(input_{p}(\bar v)),$\kbd{[}$d_{g_1}, \ldots, d_{g_e}$\kbd{])}
         \kbd{+} $\Phi_I$ \kbd{.}}
\end{center}

Finally, if there are intervals complementary to the previous ones
w.r.t. $S$ (the input-size set of the original assertion), say
$d_{h_1}, \ldots, d_{h_q}$, the following assertion is produced: 

\begin{center}
\fbox{\kbd{:- \textcolor{check}{check} pred} $p(\bar v):c_{I}^j(\bar
  v)$,\kbd{intervals(}$size_{p}(input_{p}(\bar v)),$ \kbd{[}$d_{h_1},
    \ldots, d_{h_q}$\kbd{])} \kbd{+} $\Phi_I$ \kbd{.}}
\end{center}

The description of how the input data size intervals $d_j$'s are
computed is given in Sect.~\ref{sec:function-comparison}.  

\paragraph{\textbf{Dealing with Preconditions Expressing Input Data Size Intervals.}}

So far, we have seen that a call-resource pair in the 
intended semantics 
$I_\alpha$ has the form $(p(\bar v):c_I(\bar v),
\Phi_I)$, where $c_I(\bar v)$ is a conjunction of type and mode
properties that is used to represent a set of calling data to $p$.
In order to allow checking assertions which include preconditions
expressing intervals within which the input data size of a program is
supposed to lie (i.e., using the \texttt{intervals(A, B)} property),
we also allow adding conjuncts to $c_I(\bar v)$ that are constraints
over the sizes of the data represented by $c_I(\bar v)$. Such
constraints can represent intervals for such data sizes. Accordingly,
we replace the concretization function $\gamma_m$ by an extended
version $\gamma'_m$.
To this end, given an abstract call-resource pair: $(p(\bar
v):c_I(\bar v) \con d, \Phi_I)$,
\noindent
where $d$ represents an interval, or the union of several intervals,
for the input data sizes to $p$, we define:
\\ \centerline{$\gamma'_m(c_I(\bar v) \con d) = \{\bar t \ | \ \bar t
  \in \gamma_m(c_I(\bar v)) \con size_{p}(input_{p}(\bar t)) \in
  d\}$.} We also extend the definition of the $\sqsubseteq_m$ relation
accordingly.  With these extended operations, all the previous results
in Sect.~\ref{sec:resources-framework} are applicable.

In the case where there are multi-variable resource usage bound
functions, instead of intervals represented as pairs of numbers, we
use arithmetic expressions that represent more general \emph{size
  constraints} (see Sect.~\ref{sec:multi-var-function-comparison}),
usually inequalities. In this case, the interval $d$ above will be replaced by the
set of values that satisfy such size constraints.

\secbeg
\section{Resource Usage Bound Function Comparison}
\label{sec:function-comparison}
\secend

Fundamental to our approach to verification are the operations that
compare two cost bound functions. In particular, sufficient
conditions~\ref{suffcond1} and~\ref{suffcond2} of
Corollary~\ref{func-comparison-corollary} for proving and disproving program
correctness and incorrectness respectively, involve comparisons of a
cost bound function inferred by the static analysis with another given
in a specification as an assertion present in the program.

Since our resource analysis is able to infer different types of
functions (e.g., polynomial, exponential, summation, logarithmic,
factorial, etc.),
it is also desirable to be able to compare as many classes as possible
of these functions.

Assume that we have to compare two cost functions $f(\vecx)$ and
$g(\vecx)$ that depend on input data sizes $\vecx \in S$ for a given
input-size set $S$.
Also, given a function $f(\vecx)$, let $f^{l}(\vecx)$ and
$f^{u}(\vecx)$ denote a lower and an upper bound on $f(\vecx)$
respectively, i.e., $\forall \vecx \in S: f^{l}(\vecx) \leq f(\vecx)$
and $\forall \vecx \in S: f(\vecx) \leq f^{u}(\vecx)$.  In the cases
in which the techniques we will describe in the following sections
cannot be applied to give sound results for a given comparison, say
$\forall \vecx \in S: f(\vecx) \le g(\vecx)$, then we replace any of
the functions by an upper or lower bound on it,
in a way that 
ensures obtaining sufficient conditions for such comparison.
This is expressed by the following lemma.

\begin{lemma}
\label{lemma:basic-safety-comparison}
Let be $f(\vecx)$ and $g(\vecx)$ be cost functions and $S$ an input-size set. Then 
\begin{enumerate} 
\item if any of the conditions:

$
\begin{array}{l}
\forall \vecx \in S: f^{u}(\vecx) \le g^{l}(\vecx) \text{,} \\
\forall \vecx \in S: f^{u}(\vecx) \le g(\vecx) \text{, or} \\ 
\forall \vecx \in S: f(\vecx) \le g^{l}(\vecx)
\end{array}
$

\noindent
holds, then $\forall \vecx \in S: f(\vecx) \le g(\vecx)$ holds; and

\item if $\forall \vecx \in S:f^{u}(\vecx) \neq f(\vecx)$ and $\forall \vecx \in S: g^{l}(\vecx) \neq
  g(\vecx)$, then any of the conditions above is also a sufficient
  condition for $\forall \vecx \in S: f(\vecx) < g(\vecx)$.
\end{enumerate}
\end{lemma}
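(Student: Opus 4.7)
The lemma follows by straightforward transitivity from the defining properties of the bounds involved. First, I would recall the standing assumptions that, for all $\vecx \in S$, $f^{l}(\vecx) \le f(\vecx) \le f^{u}(\vecx)$ and $g^{l}(\vecx) \le g(\vecx) \le g^{u}(\vecx)$, which are exactly the definitions of lower/upper bounds used throughout the paper. With these pointwise inequalities in hand, part~1 reduces to three short transitivity chains, one per sufficient condition listed in the lemma.

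For the first condition of part~1, at each $\vecx \in S$ I would chain $f(\vecx) \le f^{u}(\vecx) \le g^{l}(\vecx) \le g(\vecx)$, where the middle step uses the hypothesis $f^{u}(\vecx) \le g^{l}(\vecx)$ and the outer steps use the bound definitions. For the second condition, the chain is $f(\vecx) \le f^{u}(\vecx) \le g(\vecx)$; for the third, $f(\vecx) \le g^{l}(\vecx) \le g(\vecx)$. Since each link holds at every $\vecx \in S$, the conclusion $\forall \vecx \in S: f(\vecx) \le g(\vecx)$ follows by universal generalization.

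For part~2, the extra assumptions upgrade the relevant non-strict bound inequalities to strict ones: combining $f(\vecx) \le f^{u}(\vecx)$ with $f^{u}(\vecx) \neq f(\vecx)$ yields $f(\vecx) < f^{u}(\vecx)$, and similarly $g^{l}(\vecx) < g(\vecx)$. Plugging these strict inequalities into each of the three chains from part~1 turns at least one $\le$ into $<$ (by monotonicity of the order under mixed $<$/$\le$ composition), producing $f(\vecx) < g(\vecx)$ pointwise on $S$ in all three cases.

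There is no real obstacle here; the proof is a pointwise application of transitivity of $\le$ together with the standard rule that $a \le b < c$ or $a < b \le c$ implies $a < c$. The only detail worth making explicit in the write-up is that every inequality used (both the bound relations and the hypotheses of the lemma) is assumed to hold uniformly over $\vecx \in S$, so the universally quantified conclusions follow immediately and no auxiliary uniformity argument is needed.
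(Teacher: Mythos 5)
Your proof is correct: the paper actually states Lemma~\ref{lemma:basic-safety-comparison} without any proof, treating it as immediate, and your pointwise transitivity chains (together with the upgrade to strict inequality via $a \le b < c$ or $a < b \le c$ implies $a < c$) are precisely the intended justification. Nothing is missing; your explicit note that all inequalities hold uniformly over $\vecx \in S$ is the only detail worth recording, and you recorded it.
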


\subsection{Single-Variable Cost Function Comparison}
\label{sec:single-var-fun-comparison}

We define two operations for comparing cost functions, namely
$\ltintervals$ and $\leqintervals$.  The definition of
$\ltintervals(\Psi_{1}, \Psi_{2}, S)$ is described in
Fig.~\ref{func:less-than-intervals} as a function. Function
$\leqintervals$ is similar to $\ltintervals$, 
but it uses the condition $\Psi_{1}(n) \leq \Psi_{2}(n)$, which
implies that there are endpoints of the intervals in
Step~\ref{step:build-intervals-from-roots} that are closed. As already
said, $S$ is a subset of natural numbers, $S \subseteq \sizdomsingle$,
and usually $S = \sizdomsingle$, which is extracted from the
specification, taking into account its precondition. In general, $S$
is given as a union of intervals of natural numbers. However, the cost
bound functions $\Psi_{1}$ and $\Psi_{2}$ are continuous functions
defined over a subset of real numbers, i.e., $Dom(\Psi_i) \subseteq
\rsd$ and $S \subset Dom(\Psi_i)$ for $i=1,2$. Thus, for simplicity,
in the definition of $\ltintervals$ and $\leqintervals$, we first
infer intervals of real numbers (see
Steps~\ref{step:substract-functions}-\ref{step:filter-intervals}
of Fig.~\ref{func:less-than-intervals}), and, from them, we produce
the intervals of natural numbers with the appropriate endpoints, as
described in
Steps~\ref{step:round-to-nat-intervals}-\ref{step:intersect-with-precondition}.
Note that in Step~\ref{step:find-roots}
we ignore the negative roots of $f(x)$ because they cannot be
endpoints of any interval of natural numbers.  Since $\Psi_{1}$ and
$\Psi_{2}$ are continuous, in Step~\ref{step:filter-intervals} we have
that $\forall (a, b) \in \natintervalset_2: (\forall x \in (a, b):
(\Psi_{1}(x) < \Psi_{2}(x)))$. Then, in
Step~\ref{step:round-to-nat-intervals} we generate intervals of
natural numbers, and it holds that for any interval of real numbers
$(a, b) \in \natintervalset_2$, we have that $(\left\lceil a
\right\rceil, \left\lfloor b \right\rfloor)$ is the largest interval
of natural numbers included in $(a, b)$, and hence it holds that
$\forall n \in [\left\lceil a \right\rceil, \left\lfloor b
  \right\rfloor] : (\Psi_{1}(n) < \Psi_{2}(n))$.

\begin{figure}
\figrule
\begin{tabbing}
123\=\kill
$\ltintervals(\Psi_{1}, \Psi_{2}, S)$ \\
\> Takes two single-variable cost bound functions, $\Psi_{1}$ and
  $\Psi_{2}$, and an input-size set $S$,
$S \subseteq \sizdomsingle$. \\
\> Returns a set $\natintervalset$ of intervals such that
$\forall \natinterval \in \natintervalset : (\forall n \in \natinterval: (\Psi_{1}(n) < \Psi_{2}(n)  \land  n \in S))$.
\end{tabbing}
\begin{enumerate}

\item \label{step:substract-functions} 

Let $f(x) = \Psi_{2}(x)-\Psi_{1}(x)$, and assume that $Dom(f)
\subseteq \rsd$;

\item \label{step:find-roots} 

Let $x_1, \ldots, x_m$ be the non-negative real roots of equation
$f(x)=0$, i.e.:

$\forall i (1 \leq i \leq m): (x_i \in \rsd \land x_i \geq 0 \land f(x_i) = 0)$;

\item \label{step:build-intervals-from-roots} 

Let $\realintervalset_1 = \{ [0, x_1), (x_1, x_2), \ldots, (x_{m-1},
  x_m), (x_m, \infty)\}$;

\item 
\label{step:filter-intervals}
Let $\natintervalset_2 = \{ I \mid I \in \realintervalset_1 \land f(v) > 0, \text{
  for an arbitrary value } v \in I \}$;

\item 
\label{step:round-to-nat-intervals}
Let $\natintervalset_3 = \{ [\left\lceil a \right\rceil, \left\lfloor b
\right\rfloor] \mid (a, b) \in \realintervalset_2 \}$;

\item \label{step:intersect-with-precondition}

Let $\natintervalset = \{ I \cap S \mid I \in \realintervalset_3 \}$;

\item return~$\natintervalset$.

\end{enumerate}
\caption{A function for comparing two single-variable cost functions.}  
\label{func:less-than-intervals}
\figrule
\end{figure}

As already explained, given the input-size set $S$ of a call-resource
pair in an 
intended semantics, which can also express data
size intervals in the precondition,
our goal is to partition $S$ in a number of $n_S$ subsets $S_j$ such
that for any $S_j$, $1 \leq j \leq n_S$, either sufficient
condition~\ref{suffcond1} or~\ref{suffcond2} of
Corollary~\ref{func-comparison-corollary} holds.
This can be done by using the comparison operators $\ltintervals$ and
$\leqintervals$ described above, with the appropriate values for
$\Psi_{1}$ and $\Psi_{2}$, and performing intersections or unions of
the resulting intervals, depending on whether the condition is a
conjunction or disjunction respectively.

Consider again Step~\ref{step:find-roots} of
Fig.~\ref{func:less-than-intervals}. If $f(x)$ is a polynomial
function, then there exist efficient algorithms for obtaining its
roots. For the other functions (e.g., exponential, logarithmic or
summation), we have to approximate them using polynomials.  We discuss
this in the following sections, including a detailed description of
the concept of ``safety'' of such approximations in
Section~\ref{sec:safety-approx}.

\subsection{Finding Roots of Polynomial Functions}
\label{sec:comparing-polynomial}

According to the fundamental theorem of algebra, a polynomial equation
of order $m$ has $m$ roots, whether real or complex numbers.
General methods exist that allow computing all these roots, although
in our approach we discard complex roots and negative real roots 
since they are not needed.
All the roots of a polynomial equation can be obtained analytically
until polynomial order four. 
Numerical methods must be used
for polynomial orders greater than four.
In our implementation we have used the GNU Scientific
Library~\cite{gsl} for this purpose. This library offers specific
polynomial function 
root finding methods that are analytical or numerical depending on the
polynomial order, as mentioned above.

\subsection{Finding Roots of Non-Polynomial Functions}
\label{sec:approx-non-polynomial}
Two non-polynomial cost
function classes that the \ciaopp~analyses can infer are exponential
and logarithmic. We approximate exponential functions with Taylor
polynomials
and for
approximating logarithmic functions we replace them with other
functions that bound them from above or below.  After finding the
roots of the approximant polynomials by using the method described
above,
we apply a post-process for checking whether the original functions
have additional roots, which is described in
Sect.~\ref{sec:correct-non-polynomial-roots}.

\paragraph{\bf Exponential function approximation using polynomials} 

This approximation is carried out using these formulae:
\[e^x = \Sigma_{n=0}^\infty \frac{x^n}{n!}=1+x+\frac{x^2}{2!}+
\frac{x^3}{3!}+\ldots \qquad for\ all\ x\]
\[a^x=e^{x\ ln\ a} = 1+x\ ln\ a+\frac{(x\ ln\ a)^2}{2!}+ 
\frac{(x\ ln\ a)^3}{3!}+\ldots \qquad for\ all\ x\]

\noindent
Our experiments show that in practice these series can typically be
limited to order 8, since higher orders do not bring significant
differences.
Also, in the implementation, the computation of the factorials
is done separately and the results are memoized
in order to reuse them.

\begin{small}
\begin{figure}[t]
\centering
\prettylstciao
\begin{lstlisting}
hanoi(N,A,_B,C) :- N=1, print(A,C).
hanoi(N,A,B,C) :-
   N > 1,
   N1 is N - 1,
   hanoi(N1,A,C,B),
   print(A,C),
   hanoi(N1,B,A,C).
\end{lstlisting}
\vspace*{-4mm}
  \caption{A ``Towers of Hanoi'' program.}
  \label{fig:hanoi}
\end{figure}
\end{small}

\begin{examplebox}
\label{ex:hanoi}
Consider the program in Fig.~\ref{fig:hanoi} which prints the
shortest sequence of moves
to solve the ``Towers of Hanoi'' problem with $N$ disks. The first
argument of \texttt{hanoi/4} represents the number of disks to move,
while the remaining ones represent the
peg where the disks are, the auxiliary peg and the target peg, in that
order.

Consider the following assertion:

\begin{center}
\begin{small}
\begin{tabbing}
\texttt{:- \textcolor{check}{check}} \= \texttt{hanoi(N,\_,\_,\_)}  \\
  \> \texttt{: intervals(\nmetric(N),[i(1,inf)])} \\
  \> \texttt{+} \texttt{costb(steps,2**(\nmetric(N)-3) + 2, 2**(\nmetric(N)-3) + 30).}
\end{tabbing}
\end{small}
\end{center}

\noindent
which expresses that for any call to \texttt{hanoi(N,T1,T2,T3)}, a
lower (resp.\ upper) bound on the number of resolution \texttt{steps}
performed by the computation is $2^{n-3} + 2$ (resp. $2^{n-3} + 30$),
where $n = \mathtt{\nmetric(N)}$.

The analysis infers $2^{n + 1} - 2$ as both upper and
lower bound cost function for $n \geq 1$. The output of the assertion checking
considering this result is (see Fig.~\ref{fig:hanoi-plot}):
\begin{tabbing}
\texttt{:- \textcolor{false}{false}} \= \texttt{pred hanoi(N,\_,\_,\_)} \\
     \>  \texttt{: intervals(\nmetric(N),[i(1,1),i(5,inf)])} \\
     \>  \texttt{+ } \= \texttt{costb(steps,2**(\nmetric(N)-3) + 2, 2**(\nmetric(N)-3) + 30).}\\ 
\ \\
\texttt{:- \textcolor{checked}{checked} pred hanoi(N,\_,\_,\_)} \\
     \>  \texttt{: intervals(\nmetric(N),[i(2,4)])} \\
     \>  \texttt{+ } \= \texttt{costb(steps,2**(\nmetric(N)-3) + 2, 2**(\nmetric(N)-3) + 30).}\\ 
\end{tabbing}
\noindent
which express that for $n \in [2,4]$, the specification given by the
assertion is met, while for $n \in [1,1] \cup [5,\infty]$ it is never
met.  The real interval verifying $2^{n-3} + 2 \leq 2^{n+1} - 2
\leq 2^{n-3} + 30$ is approximately $[1.09311,4.09311]$, and the
largest 
interval of natural numbers included in it, and in the interval
expressed in the precondition of the specification, is $[\left\lceil
  1.09311 \right\rceil,\left\lfloor 4.09311 \right\rfloor] = [2,4]$.
Therefore the result obtained from the comparison is exact, in the
context of the specification and the $\sizdomsingle$ domain.
\begin{figure}[t]
\begin{tikzpicture}[scale=0.9]
  \begin{axis}[
    axis x line*=bottom,
    axis y line*=left,
    x axis line style=->,
    y axis line style=->,
    no markers,
    legend style={legend pos=north west,font=\small},
    grid style={line width=.1pt, draw=gray!10},
    major grid style={line width=.1pt,draw=gray!50},
    xlabel=$\mathtt{nat(N)}$,
    xmin = 1,
    ymin = 0,
    ymax=60,
    xmax=10,
    samples at = {1,2,3,4,5,6,7,8,9,10},
    ylabel={steps},
    tick label style={font=\small},
    label style={font=\small},
    line width=1.0pt,
    smooth,    
    width=\linewidth,
  ] 
    \addplot [dashdotted,ultra thick] {2^(x-3) + 30}; 
    \addplot [ultra thick] {2^(x+1) - 2}; 
    \addplot [dashed,ultra thick] {2^(x-3) + 2};

    \addplot +[gray,dashed,thin, forget plot] coordinates {(2,0) (2,60)};
    \addplot +[gray,dashed,thin, forget plot] coordinates {(4,0) (4,60)};
    \addplot +[gray,dashed,thin, forget plot] coordinates {(1,0) (1,60)};
    \addplot +[gray,dashed,thin, forget plot] coordinates {(5,0) (5,60)};
    \legend{upper-bound specification,upper/lower bound analysis,lower-bound specification,\texttt{checked} interval, \texttt{false} interval};
  \end{axis}
  \draw [thick,decoration={brace,mirror},decorate] (1.3,-0.5) -- (4,-0.5) node[midway,below,yshift=-.15cm] {\texttt{checked}};
  \draw [thick,decoration={brace,mirror},decorate] (5.3,-0.5) -- (11.9,-0.5) node[midway,below,yshift=-.15cm] {\texttt{false}};
\draw[thick, ->] (0,-0.7) -- (0,-0.4) node[midway,below,yshift=-.1cm] {\texttt{false}};

\end{tikzpicture}


\caption{Resource usage functions for \texttt{hanoi}:
  specification and analysis results.}
\label{fig:hanoi-plot}
\end{figure}
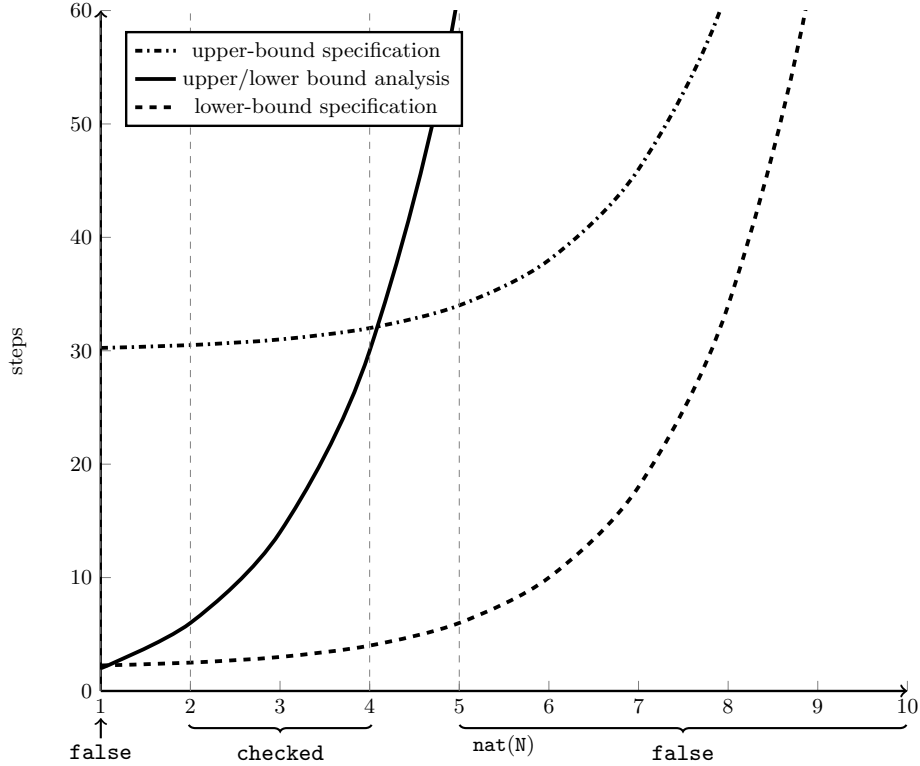
\end{examplebox}

\paragraph{\bf  Logarithmic function approximation}
Assume that we have to perform the comparison $f(x) \le g(x)$, where
any of the two functions $f$ or $g$ is logarithmic.  In this case, by
Lemma~\ref{lemma:basic-safety-comparison}, we can replace such
functions by upper or lower bounds on them, depending on the case, to
obtain sufficient conditions. For example, given the logarithmic
function $log(h(x))$, our approach will use $h(x)$ as an upper bound
on it.

Thus, $log(h(x)) \le g(x)$ would be replaced by the sufficient
condition $h(x) \le g(x)$.

\begin{figure}[t]
\centering
\prettylstciao
\begin{lstlisting}
simple_log(N, N) :-
   N=<1,!.
simple_log(N, S) :-
   N>1,
   N1 is N/2,
   simple_log(N1,S1),
   S is S1 + N.
\end{lstlisting}
\vspace*{-4mm}
  \caption{A simple example with logarithmic cost.}
  \label{fig:simplelog}
\end{figure}

\begin{examplebox}
\label{ex:simplelog}
Consider the program in Fig.~\ref{fig:simplelog} which calculates
the sum $N + N/2 + N/2^2 + \ldots + 1$, given $N$ as input. Consider
the following assertion:

\begin{center}
\begin{tabbing}
\=\kill
\texttt{:- \textcolor{check}{check} pred simple\_log(N,\_) + costb(steps, 0, 3000).} 
\end{tabbing}
\end{center}

\noindent
in order to find intervals of possible sizes of $N$ for which the
number of resolution steps of any call to \texttt{simple\_log(N,\_)}
will be less or equal than $3000$. Let $n = \mathtt{\nmetric(N)}$, the
analysis infers that the cost of a call to this predicate will be
upper/lower bounded by $log_2(\frac{n}{8}) + 4$. With this
information, the assertion checking process returns the following two
assertions:
\vspace*{-2mm}
\begin{tabbing}
\texttt{:- \textcolor{check}{check}} \= \texttt{pred simple\_log(N,\_)} \\
         \> \texttt{: intervals(\nmetric(N),[i(23969,inf)])} \\
         \> \texttt{+ costb(steps,0,3000).} \\
\ \\ [-2mm]
\texttt{:- \textcolor{checked}{checked} pred simple\_log(N,\_)} \\
         \> \texttt{: intervals(\nmetric(N),[i(0,23968)])} \\
        \> \texttt{+ costb(steps,0,3000).} 
\end{tabbing}

\noindent
which expresses that for $n \leq 23968$ the specification given by
the assertion is met, while for $n > 23969$ the assertion cannot
be proved nor disproved. This result is correct but obviously it is an
approximation. 
\end{examplebox}

\secbeg
\subsection{Checking Additional Roots for Non-polynomial Functions}
\label{sec:correct-non-polynomial-roots}

In this section we describe a post-process that ensures the
correctness of the function comparison approach that we have presented
so far, for the cases in which there are functions that have been
approximated by polynomials, e.g., exponential functions, for which
generally the number of roots is unknown.

Consider the comparison operator $\ltintervals$ described in
Sect.~\ref{sec:single-var-fun-comparison}, in particular
Step~\ref{step:substract-functions} of
Fig.~\ref{func:less-than-intervals} where we define $f(x) =
\Psi_{2}(x)-\Psi_{1}(x)$.
Assume that we approximate $f(x)$ by a polynomial $P(x)$ and find the
non-negative real roots of $P(x)$, say $x'_1, \ldots, x'_k$.  Then
$x'_1, \ldots, x'_k$ might not include all the non-negative real roots
of $f(x)$, denoted $x_1, \ldots, x_m$ in Step~\ref{step:find-roots}.

To ensure that there is no other root of $f(x)$ inside any of the
computed
intervals for $P(x)$, i.e., $[0, x'_1), (x'_1, x'_2), \ldots,
  (x'_{k-1}, x'_k), (x'_k, \infty)$, we proceed as follows.  We first
  consider all the intervals but the last one $(x'_k, \infty)$, i.e.,
  let $\realintervalset' = \{ [0, x'_1), (x'_1, x'_2), \ldots,
    (x'_{k-1}, x'_k)\}$, and $\realintervalset'' = \{ [\left\lceil a
      \right\rceil, \left\lfloor b \right\rfloor] \mid (a, b) \in
    \realintervalset'\}$. First, we check that:
$$\forall \realinterval \in \realintervalset'': (\forall n \in
    \realinterval: f(n) > 0)$$

\noindent
by enumerating the finite number of values, i.e., natural numbers, in
each interval $\realinterval$. It is always possible of course to give
up and return \emph{unknown} if this number is above a certain
threshold, or use the procedure below.

However, in the last interval $(x'_k, \infty)$ we obviously have to
use a different procedure to ensure whether a function is indeed
\emph{always} bigger than the other. Our procedure uses a set of
syntactic rules to compare the two functions $\Psi_{1}(x)$ and
$\Psi_{2}(x)$ together with a constraint $x > x'_k$, which expresses
that the comparison only holds from the largest root to infinity.
More specifically, we have implemented a modification of the
comparison algorithm in~\cite{AlbertAGHP09,Albert2015-cost-func-comp}.
Note that we only use such comparison algorithm
for this very particular case, since it can be given constraints of
the form $x > c$, where $c$ is a constant, which represents the
interval $(c, \infty)$ in our approach. If such comparison returns
\emph{true}, then it is ensured that one of the functions to compare
is greater than the other, in the context of the given constraints;
otherwise, nothing can be ensured. Thus, such a comparison is
complementary to ours for this particular case, i.e., checking the last
interval already computed by our approach, when non-polynomial
functions are approximated by polynomials.
However, we do not use it for anything else, since, among other
things, it cannot infer preconditions involving intervals for which one
function is greater or smaller than the other, as our approach does.

In addition, we also use the derivatives of the functions, which tend
to be simpler and easier to verify. In particular, we exploit the fact
that if $\Psi_{1}(x) < \Psi_{2}(x)$ on $x=a$, then such functions will
never intersect for all $x > a$ as long as their derivatives
satisfy $\Psi'_{1}(x) < \Psi'_{2}(x)$ for all $x > a$.  

Although our algorithm is not complete, it is correct in the sense
that when checking $\Psi_{1}(x) < \Psi_{2}(x)$, if the algorithm
returns \emph{true}, then for some $x'_k$, such inequality holds for
all $x \in (x'_k, \infty)$. If this cannot be ensured by our
algorithm, then the algorithm returns \emph{unknown}.

\subsection{Safety of the Approximation}
\label{sec:safety-approx}

The roots obtained for function comparison are in some cases
approximations of the actual
roots. The errors in approximations come from two sources: a) the
numerical method for root calculation of polynomials, and b) the
difference between the original non-polynomial function and its
polynomial approximant.
In any case, we must guarantee that their values are safe, in the
sense that they can be used for verification purposes,
in particular for proving sufficient conditions~\ref{suffcond1}
and~\ref{suffcond2} in Corollary~\ref{func-comparison-corollary}.  In
turn, such conditions depend on the comparison operators
$\ltintervals$ and $\leqintervals$ already described.  To this end, the concept of
\emph{safety} of the roots is meaningful in the context of a given
comparison operator. Consider for example operator $\ltintervals$, and
Steps~\ref{step:substract-functions}-\ref{step:find-roots} of its
definition in Fig.~\ref{func:less-than-intervals}, assuming that $x_1,
\ldots, x_m$ are exact roots of function $f(x)$.

\begin{definition}
\label{def:safe-root-approx}
Let $f(x)$ be a continuous function such that $Dom(f) \subseteq \rsd$,
and let $X =\{x_1, \ldots, x_m\}$ be the set of its exact non-negative
real roots. Let $\realintervalset = \{ [0, x_1), (x_1, x_2), \ldots,
  (x_{m-1}, x_m), (x_m, \infty)\}$. Then, for any root $x \in X$ and
  for any interval $I \in \realintervalset$, we say that $x'$ is a
  safe approximation of $x$ for $\realinterval$ if:
$$
((\exists a: (a, x) = \realinterval)  \rightarrow  x' \leq x) \  \land \
((\exists b: (x, b) = \realinterval)  \rightarrow  x \leq x') 
$$
$\Box$

\end{definition}

In the context of this definition, given any interval $\realinterval$
such that $\forall x \in \realinterval: f(x) > 0$, it is clear that if
we replace any endpoint (or both) of $\realinterval$ by safe
approximations for $\realinterval$, obtaining $\realinterval'$, then,
it holds that $\forall x \in \realinterval': f(x) > 0$.

For example, in Step~\ref{step:filter-intervals} of
Fig.~\ref{func:less-than-intervals}, it holds that $\forall
\realinterval \in \natintervalset_2: (\forall x \in \realinterval:
f(x) > 0)$, which implies that $\forall \realinterval \in
\natintervalset_2: (\forall x \in \realinterval: \Psi_{1}(x) <
\Psi_{2}(x))$.  Thus, if we replace the endpoints of the intervals in
$\natintervalset_2$ by safe approximated roots for them,
we can ensure that, if $\natintervalset$ is the result of
$\ltintervals(\Psi_{1}, \Psi_{2}, S)$, then $\forall \natinterval \in
\natintervalset: (\forall n \in \natinterval: (\Psi_{1}(n) <
\Psi_{2}(n)))$.  A similar reasoning can be done for operator
$\leqintervals$.

When we say that we safely check a given condition, we mean that we
possibly use safe approximated roots for building intervals for which
our algorithm says that the condition holds, and thus such intervals
may be smaller than the ones for which the condition actually
holds. In addition, our verification approach works with
approximations of the concrete semantics and safely checks sufficient
conditions to prove or disprove program partial correctness and
incorrectness. This implies that our approach may infer stronger
sufficient conditions.

Assume for example that we want to 
check whether $\forall x \in S: \Phi^{u}(x) \leq \Phi_{I}^{u}(x)$,
where $\Phi^{u}$ and $\Phi_{I}^{u}$ are resource usage bound
functions, the former is part of the result of program analysis and the
latter appears in an assertion declared in the program. This check is part of the
sufficient condition~\ref{suffcond1} in
Corollary~\ref{func-comparison-corollary}.  In this case, we can use
the operator $\leqintervals(\Phi^{u}, \Phi_{I}^{u}, S)$, which defines
$f(x)=\Phi_{I}^{u}(x)-\Phi^{u}(x)$. Assume that $\forall x \in S: f(x)
\geq 0$. Then it holds that $\forall x \in S: \Phi^{u}(x) \leq
\Phi_{I}^{u}(x)$. Since $\leqintervals$ may use safe approximated
roots, it may return a set $S'$ smaller than $S$, i.e., $S' \subset
S$.  Assume also that $\Phi_{I}^{l}$ is not given in the assertion,
meaning that the specification does not state any lower bound for the
resource usage, i.e., the lower endpoint of any resource usage
interval is $0$,
which means that $\forall x \in S: \Phi_{I}^{l}(x) \leq \Phi^{l}(x)$ is 
true. Thus, if $\forall x \in S: f(x) \geq 0$ we can 
state that sufficient condition~\ref{suffcond1} of
Corollary~\ref{func-comparison-corollary} holds.
Similarly, assume that we use
$\ltintervals(\Phi_{I}^{u}, \Phi^{l}, S)$, which defines
$f(x)=\Phi^{l}(x)-\Phi_{I}^{u}(x)$.
Then we can 
say that $\forall x \in S: \Phi_{I}^{u}(x) < \Phi^{l}(x)$ if $\forall
x \in S: f(x)>0$, proving that sufficient condition~\ref{suffcond2} of
Corollary~\ref{func-comparison-corollary} holds.
We can reason
similarly in the comparisons involving a lower bound in the assertion,
i.e., $\Phi_{I}^{l}$.  Thus, we focus exclusively on
checking that $\forall x \in S: f(x)>0$ or $\forall x \in S: f(x) \geq
0$, where $f(x)$ is conveniently defined in each case.

We now focus on a method we propose for obtaining safe approximated
roots.  Assume that the exact roots of function $f(x)$ are
$x_1,...,x_m$, and that $x'_1,...,x'_m$ are approximated roots
obtained by using the techniques already explained, so that for each
approximated root $x'_i$, $1 \leq i \leq m$, there is a value
$\varepsilon$ such that
$x_i \in [x'_i-\varepsilon, x'_i+\varepsilon]$.
Consider an interval $\realinterval$ for which we need to ensure that
$\forall x \in \realinterval: f(x) > 0$. Assume that $\realinterval =
(x'_i, b)$ for some $1 \leq i \leq n$ and some endpoint $b$.  In this
case, the condition for $x'_i$ to be a safe root of $x_i$ for
$\realinterval$ is $x_i \leq x'_i$. Then, we first determine the
actual relative position of $x'_i$ and $x_i$, and, if it is not
compatible with condition $x_i \leq x'_i$, i.e., if $x'_i$ is ``to the
left'' of $x_i$, then we start an iterative process that increments
$x'_i$ by some $0 < \delta < 1$ so that after $m$ iterations we have
that $x''_i = x'_i + m \ \delta$, and $x''_i$ is a safe root of $x_i$
for $\realinterval$.  We can reason similarly for the case in which
$\realinterval = (b, x'_i)$.  In this case, if $x'_i$ is ``to the
right'' of $x_i$, then we start an iterative process that increments
$x'_i$ by some $-1 < \delta < 0$, so that $x''_i$ is a safe root of
$x_i$. This is explained in more detail in the rest of this section.

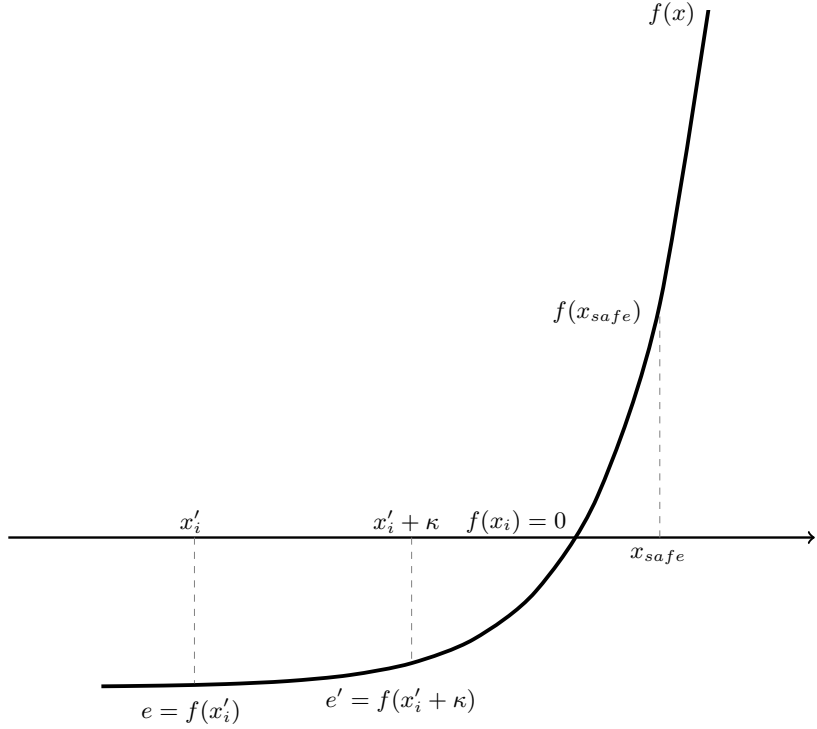
\begin{figure}[t]
\centering
\begin{tikzpicture}[scale=0.9]
  \node[draw=none] at (9.8,10) {$f(x)$};
  \node[draw=none] at (8.7,5.6) {$f(x_{safe})$};
  \node[draw=none] at (9.6,2) {$x_{safe}$};
  \node[draw=none] at (7.5,2.5) {$f(x_i) = 0$};
  \node[draw=none] at (5.88,2.5) {$x'_i + \kappa$};
  \node[draw=none] at (5.8,-0.15) {$e'=f(x'_i + \kappa)$};
  \node[draw=none] at (2.7,2.5) {$x'_i$};
  \node[draw=none] at (2.7,-0.3) {$e=f(x'_i)$};
  \begin{axis}[
    hide y axis,
    axis x line=middle,
    axis y line=middle,
    x axis line style=->,
    y axis line style={draw=none},
    no markers,
    legend style={legend pos=north west,font=\small},
    enlarge x limits=0.15,
    enlarge y limits=0.15,
    every axis x label/.style={at={(current axis.right of origin)},anchor=north west},
    every axis y label/.style={at={(current axis.above origin)},anchor=north east},
    grid style={line width=.1pt, draw=gray!10},
    major grid style={line width=.1pt,draw=gray!50},
    xmin = 0,
    ymin = -10,
    ymax=60,
    xmax=10,
    samples at = {0,1,2,3,4,5,6,7,8,9,10},
    ticks=none,
    tick label style={font=\small},
    label style={font=\small},
    line width=1.0pt,
    smooth,    
    width=\linewidth,
    ]
    
    \addplot+[black,ultra thick] {2^(x)/10 -20}; 

    \addplot +[gray,dashed,thin, forget plot] coordinates {(9,0) (9,30)};
    \addplot +[gray,dashed,thin, forget plot] coordinates {(5,0) (5,-17)};
    \addplot +[gray,dashed,thin, forget plot] coordinates {(1.5,0) (1.5,-19.5)};
  \end{axis}

\end{tikzpicture}

\caption{Case 1. $x_i>x'_i$ (since $e'>e$). $x_{safe}$ is a safe approximated root
  of $x_i$.}
\label{fig:f1}
\end{figure}

\secbeg
\paragraph{\bf Determining the relative position of the exact root}
To determine the relative position of the exact root $x_i$ and its
approximated value $x'_i$ we use the gradient of $f(x)$ around $x=x'_i$.  For
determining the gradient we use the values of $e=f(x'_i)$ and
$e'=f(x'_i+ \kappa)$,
with 
$\kappa > 0$ a relatively small number.  Whether the approximated root
is greater or smaller than the exact root depends on the following
conditions:

\begin{enumerate}
\item if $e<0$ and $e'>e$ then $x_i>x'_i$
\item if $e>0$ and $e'>e$ then $x_i<x'_i$
\item if $e>0$ and $e'<e$ then $x_i>x'_i$
\item if $e<0$ and $e'<e$ then $x_i<x'_i$
\end{enumerate}

From Fig.~\ref{fig:f1} we can see the rationale behind the first case.
If $e'>e$ then $f(x)$ is increasing, but, since $e<0$, then $f(x)>0$
can only occur for values of $x$ greater than $x'_i$.
The other cases follow an analogous reasoning.

\paragraph{\bf Iterative process for computing the safe root}
Once we have determined the relative position of the exact root $x_i$
and its approximated value $x'_i$, we set up an appropriate value for
$\delta$. If we have to ensure that $x_i \leq x'_i$ but it actually
holds that $x_i > x'_i$, then we take $0 < \delta < 1$ so that we
iterate on the addition \(x''_i = x'_i+ \delta\) until
\(f(x''_i)>0\). In this case, the iteration goes to the right.
Such an iteration is apparent in the following pseudo-code:

\begin{algorithmic}[1]
\STATE \assign{$x_{safe}$}{$x'_i$} 
\WHILE{$f(x_{safe})<0$}
\assign{$x_{safe}$}{$x_{safe} + \delta$}
\ENDWHILE
\STATE \return~ $x_{safe}$

\end{algorithmic}

Conversely, if we have to ensure that $x'_i \leq x_i$ but it actually
holds that $x'_i > x_i$, then we take $-1 < \delta < 0$ so that the
iteration goes to the left.

Our approach ensures that there are no other roots of $f(x)$ between
$x'_i$ and $x_{safe}$. As already said, we approximate $f(x)$ by a
polynomial $P(x)$, and the techniques we use can find all the roots of
polynomials. If $f(x)$ is not a polynomial, then $f(x)$ can have more
roots than $P(x)$, but we use the techniques described in
Section~\ref{sec:correct-non-polynomial-roots} to deal with this
possible case and ensure that there are no additional roots inside the
inferred intervals. In addition, as already said, based on the sign of
the gradient, we infer whether $f(x)$ is increasing or decreasing. But
we also check this after computing $x_{safe}$: if the derivative of
$f(x)$ is positive (resp. negative) between $x'_i$ and $x_{safe}$ then
$f(x)$ is increasing (resp. decreasing) between $x'_i$ and $x_{safe}$,
which implies that there are no other roots of $f(x)$ between $x'_i$
and $x_{safe}$.

\begin{examplebox}
  Consider the following assertion 
  for the classical \texttt{fibonacci} program:\\
\noindent 
\texttt{
  :- \textcolor{check}{check} pred fib(N,F) : (\nmetric(N), var(F)) \\
  \hspace*{41mm}     + cost(ub, steps, exp(2, \nmetric(N))-1000 ).
}\\
\noindent 
which expresses that for any call to \texttt{fib(N,F)} with the first
argument bound to 
a natural number
and the second one a free variable, an
expected upper bound on the number of resolution \texttt{steps}
performed by its whole computation is given by the function
$\Phi_{I}^{u}(x) = 2^x-1000$, where $x$ is the size of the first
argument \texttt{N}. Since such argument is 
a natural number the size
metric used for it is its 
value.

The lower bound inferred by the static analysis is $\Phi^l(x) =$ 1.45
$\times$ 1.62$^x - 1$.
The intersection of $\Phi^l(x)$ and  $\Phi_{I}^{u}(x)$ occurs at $x\approx$10.22. 
However, the root obtained by our root finding algorithm is $x\approx$
10.89.
By doing an iterative approximation from 10.89 to the left, we finally
obtain a safe approximate root of $x\approx$10.18.

As already said, and this example illustrates, usually cost functions
depend on variables which range over natural
numbers. For this reason, in this case, we will take the closest
natural number to the left or right of the safely approximated root
computed by the iterative algorithm described above, depending on the
gradient, to obtain a safe value in the domain of the resource usage
function. Thus, in this example, we will take the value 10 for $x$.

It turns out that the analysis also infers the same cost function as
both a lower and upper bound (i.e., it infers the exact
function). Thus, the upper bound cost function is given by $\Phi^u(x)
=$ 1.45 $\times$ 1.62$^x - 1$.

Once the interval endpoints have been computed, we can reason as
follows:  to the left of the safe root $x=10$, the cost upper bound
declared in the specification given by the check assertion 
is less than the (safe) lower bound inferred by the analysis, therefore the
assertion is false in the interval $[0,10]$. Since in this example we
are dealing with exponential functions, we also have to verify every
point in such interval, as already explained in
Sect.~\ref{sec:correct-non-polynomial-roots}. Moreover, to the right
of the safe root $x=10$, the cost upper bound declared in the
specification is greater than the (safe) upper bound inferred by the
analysis, and therefore the assertion is true in the interval
$[11,\infty]$. Our algorithm from
Sect.~\ref{sec:correct-non-polynomial-roots} also verifies that the
functions never intersect in such interval, and thus we can ensure that
the specification is met in it.  Finally, the output of our assertion
checking algorithm for the \texttt{fibonacci} program is:
\begin{tabbing}
\texttt{:- \textcolor{false}{false} pred fib(N,F)} \=  \texttt{: intervals(\nmetric(N), [i(0,10)])} \\
        \> \texttt{+ cost(ub, steps, exp(2,\nmetric(N))-1000 ).}\\
\texttt{:- \textcolor{checked}{checked} pred fib(N,F)} \= \texttt{: intervals(\nmetric(N), [i(11,+inf)])}\\
        \> \texttt{+ cost(ub, steps exp(2,\nmetric(N))-1000 ).}
\end{tabbing}
meaning that the system has proved that the assertion is false for
values of the input argument \texttt{N} in the interval $[0, 10]$, and
true for \texttt{N} in the interval $[11, \infty)$. Thus, showing
  extra conditions (an interval of natural numbers) on which the
  assertion can be proved false, on one hand, and another condition
  (the rest of the range of the natural numbers) on which it can be
  proved true, on the other hand.
\end{examplebox}

\subsection{Comparing Summation Functions} 
\label{sec:comparing-summation}

Dealing with summation functions can be important in the analysis of
recursive programs, and hence of imperative programs that contain
loops. However, the function comparison operation is not
straightforward when at least one of the operands contains a summation
function, even in the case in which other operands are just simple
arithmetic functions.

A summation cost function $C$ is an expression of the form $C(n) =
\summation{i=a}{n}{f(i)}$, where $a, n \in {\mathbb N}$, and $f$ is a
cost function. Our approach consists in transforming it into an
equivalent
closed form
function $C^{t}$, i.e., an expression that does not contain any
subexpressions built by using the $\sum$ operand.
Instead, $C^{t}$ is built
by using only
elementary arithmetic functions, e.g., constants, addition,
subtraction, multiplication, division, exponential, or even factorial
functions.
Such transformation is based on
\emph{\finitecalculus}~\cite{Gleich-2005-finite-calculus}.
The closed form function $C^{t}$ can be a polynomial, but also other
non-polynomial function. Thus, the set of functions that can be
represented as summation expressions is a superset of the functions
that can be represented as polynomials.  Finally, we replace the
summation cost function $C$ by its closed form transformation $C^{t}$,
and use the function comparison techniques explained in the previous
sections.

Prior to explaining our algorithm for obtaining $C^{t}$, we provide some
necessary background.
We start by recalling the relation between 
\infinitecalculus\ 
 and \finitecalculus, 
focusing on the concepts of \emph{derivative} and
\emph{antiderivative} functions.

\paragraph{\bf Relating finite and \infinitecalculus.}

In \infinitecalculus, the \emph{derivative} of a function $f(x)$,
denoted $\frac{\mathrm{d}}{\mathrm{d}x} f(x)$ or $f'(x)$, is defined
as $\frac{\mathrm{d}}{\mathrm{d}x} f(x) =
\lim_{h\to0}\frac{f(x+h)-f(x)}{h}$.  A similar concept is defined in
\finitecalculus for a \emph{discrete} function $f(x)$, the
\emph{discrete derivative}, denoted $\Delta f(x)$, by assuming
discrete increments $h$ for variable $x$. Since the closest we can get
to $0$ is $1$, in the limit, i.e., $h=1$, we obtain the following definition:

\begin{definition}
\label{def:discrete-derivative}
The \emph{discrete derivative} of function $f(x)$ is $\Delta f(x)=
f(x+1)-f(x)$
\end{definition}

In \infinitecalculus, if $\frac{\mathrm{d}}{\mathrm{d}x} F(x) = f(x)$,
then we say that $F(x)$ is an \emph{antiderivative} function of
$f(x)$.  
For any constant $c$, $F(x) + c$, is also an \emph{antiderivative} of
$f(x)$. Since the number of \emph{antiderivatives} of $f(x)$ is
infinite, we denote the class of such antiderivatives $F(x) + c$ as
$\int f(x) \ \mathrm{d}x$, which is also called the \emph{indefinite
  integral} of $f(x)$.  Also, the \emph{definite integral} of $f(x)$
over the interval $[a, b]$ is denoted as $\int_a^b f(x)
\ \mathrm{d}x$. According to the fundamental theorem of calculus, if
$f(x)$ is a real-valued continuous function on $[a,b]$ and $F(x)$ is
an antiderivative of $f(x)$ in $[a,b]$, then $\int_a^b f(x)
\ \mathrm{d}x = F(x)|_a^b = F(b) - F(a)$. Similarly, in
\finitecalculus, if $\Delta F(x) = f(x)$, then $F(x)$ is a
\emph{discrete antiderivative} 
of $f(x)$, and $\sum f(x) \ \mathrm{d}x$ denotes the \emph{discrete
  indefinite integral}
of $f(x)$, i.e., $F(x) + c$, where $c$ is an arbitrary constant. The
following definition allows extending the analogy. \\

\begin{definition}
\label{def:discrete-definite-integral}
The \emph{discrete definite integral} of $f(x)$ over the discrete
interval $[a, b]$, denoted as $\summation{a}{b}{f(x) \ \mathrm{d}x}$,
is defined as:
$$\summation{a}{b}{f(x) \ \mathrm{d}x}= F(x)|_a^b=F(b)-F(a)$$
\end{definition}

\noindent
where $F(x)$ is a \emph{discrete indefinite integral} of $f(x)$, i.e.,
$F(x) = \sum f(x) \ \mathrm{d}x$. Then, we get the following result,
which makes it possible to transform a summation into a \emph{definite
  integral}, and further into a closed form function.

\begin{theorem}
\label{th:fundamental-finite-calculus}
The fundamental theorem of \finitecalculus is:
$$\summation{x=a}{b}{f(x)} = \summation{a}{b+1}{f(x)} \ {\mathrm{d}x}$$
\end{theorem}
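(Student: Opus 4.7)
The plan is to prove the equality by unfolding the right-hand side via the definition of the discrete definite integral and recognizing the left-hand side as a telescoping sum that collapses to the same expression.

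First, I would unfold the right-hand side using Definition~\ref{def:discrete-definite-integral}: pick any discrete antiderivative $F$ of $f$ (that is, $F$ with $\Delta F(x) = f(x)$), and observe that
$$\summation{a}{b+1}{f(x) \ \mathrm{d}x} \;=\; F(x)|_a^{b+1} \;=\; F(b+1) - F(a).$$
Note that the value is independent of the particular antiderivative chosen, since any two discrete antiderivatives differ by a constant which cancels in the difference $F(b+1)-F(a)$.

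Next, I would rewrite the left-hand side using Definition~\ref{def:discrete-derivative}, which gives $f(x) = F(x+1) - F(x)$ for every $x$. Substituting termwise yields
$$\summation{x=a}{b}{f(x)} \;=\; \sum_{x=a}^{b}\bigl(F(x+1) - F(x)\bigr),$$
which is a telescoping sum: each $F(x+1)$ for $a \le x \le b-1$ cancels with the corresponding $F(x)$ one index later, leaving only the endpoint contributions $F(b+1) - F(a)$. This matches the expression obtained for the right-hand side, proving the equality. A small technical point to handle is the degenerate case $b < a$, where both sides should reduce to $0$ (or, under the convention $F(a)-F(a)$, give the same value on both sides); this can be stated explicitly to avoid ambiguity.

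The main obstacle here is not really a mathematical difficulty but a bookkeeping one: one must verify that the index range on the telescoping sum lines up correctly with the upper endpoint $b+1$ rather than $b$, which is precisely the ``off-by-one'' shift that distinguishes finite calculus from the continuous case (where $\int_a^b$ matches $F(b)-F(a)$). Once this alignment is made explicit using Definition~\ref{def:discrete-derivative}, the rest of the argument is a one-line telescoping collapse, so no further machinery beyond Definitions~\ref{def:discrete-derivative} and~\ref{def:discrete-definite-integral} is required.
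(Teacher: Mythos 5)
Your proposal is correct and follows essentially the same route as the paper's proof: pick a discrete antiderivative $F$ with $\Delta F(x) = f(x)$, telescope the sum $\sum_{x=a}^{b}\bigl(F(x+1)-F(x)\bigr)$ down to $F(b+1)-F(a)$, and identify this with the discrete definite integral via Definition~\ref{def:discrete-definite-integral}. Your added remarks on the independence of the choice of antiderivative and the degenerate case $b<a$ are fine refinements the paper leaves implicit, but they do not change the argument.
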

\begin{proof}
Let $F(x)$ be a \emph{discrete indefinite integral} 
of $f(x)$, i.e., $\Delta F(x) = f(x)$. According to
Definition~\ref{def:discrete-derivative}, we have that $\Delta F(x) =
F(x+1) - F(x) = f(x)$. Then:

$
\begin{array}{l}  
\summation{x=a}{b}{f(x)} = \summation{x=a}{b}{(F(x+1) - F(x))}  \\
  = F(a+1) - F(a) + F(a+2) - F(a+1) + \cdots + F(b) - F(b-1) + F(b+1) - F(b) \\
  = F(b+1) - F(a) \\
  = \summation{a}{b+1}{f(x) \ \mathrm{d}x} \  \text{(according to Definition~\ref{def:discrete-definite-integral})} 
\end{array}
$
\end{proof}

The \emph{falling power} in \finitecalculus is defined as:

$
    \begin{array}{lll}
      x^{\underline{0}} & = & 1 \\ [-1mm]
      x^{\underline{m}} & = & (x-(m-1)) \  x^{\underline{m - 1}} \ \ \text{ if } m > 0 
    \end{array}
$

Equivalently, if $m>0$, then $x^{\underline{m}} = x \ (x-1) \ (x-2)
\cdots (x-(m-1))$.  For example: $x^{\underline{1}}=x$,
$x^{\underline{2}}=x \ (x-1)$, $x^{\underline{3}}= x \ (x-1) \ (x-2)$,
and so on.

The use of the \emph{falling power} allows to define derivative and
integration rules
in \finitecalculus that are analogous to the corresponding ones in
\infinitecalculus. For example in \infinitecalculus, given
the function $f(x) = x^m$, its derivative is given by
$\frac{\mathrm{d}}{\mathrm{d}x} \ f(x) = m \ x^{m-1}$, and its
\emph{indefinite integral} is $\int f(x) \ \mathrm{d}x = \frac{1}{m+1}
\ x^{m+1} + c$, where $c$ is an arbitrary constant.  The rules for the
\emph{falling power} in \finitecalculus are analogous: given a
discrete function $f(x) = x^{\underline{m}}$, its derivative is given
by $\Delta f(x) = m \ x^{\underline{m-1}}$, and its \emph{discrete
  indefinite integral} is $\sum f(x) \ \mathrm{d}x = \frac{1}{m+1}
\ x^{\underline{m+1}} + c$.

Table~\ref{table:finite-calculus} provides a set of rules for
computing integrals and derivatives in \finitecalculus, including the
ones already seen for the falling power.

We can perform a translation from regular powers into falling powers,
which is needed prior to applying some rules in
Table~\ref{table:finite-calculus}, by using the following theorem:

\begin{equation}
\label{eq:power-to-falling-power}
x^m = \summation{k=0}{m}{\stirling{m}{k} \ x^{\underline{k}}}
\end{equation}

\noindent 
where $\stirling{m}{k}$ is a Stirling number of the second kind, which
represents the number of ways of partitioning $n$ distinct objects
into $k$ non-empty sets~\cite{Gleich-2005-finite-calculus}.
For example: 

\medskip
$
\begin{array}{l}
x^0=x^{\underline{0}} \ \text{since by definition} \ x^0 = 1 \ \text{and} \ x^{\underline{0}} = 1 \text{, but also:} \\
x^0 = \stirling{0}{0} \ x^{\underline{0}} = 1 \ x^{\underline{0}} \\
x^1 =  \stirling{1}{0} \ x^{\underline{0}} + \stirling{1}{1} \ x^{\underline{1}} = 0 \ x^{\underline{0}} + 1 \ x^{\underline{1}} = x^{\underline{1}}  \\
x^2 = \stirling{2}{0} \ x^{\underline{0}} + \stirling{2}{1} \ x^{\underline{1}} + \stirling{2}{2} \ x^{\underline{2}} = x^{\underline{2}}+x^{\underline{1}} \\
x^3=  \stirling{3}{0} \ x^{\underline{0}} + \stirling{3}{1} \ x^{\underline{1}} + \stirling{3}{2} \ x^{\underline{2}} + \stirling{3}{3} \ x^{\underline{3}} =x^{\underline{3}}+3x^{\underline{2}}+x^{\underline{1}} \\
\end{array}
$
\medskip

Thus, the $\Delta f(x)$ and $\sum f(x) \ \mathrm{d}x$ functions in
\finitecalculus are analogous to the \emph{derivative}
($\frac{\mathrm{d}}{\mathrm{d}x} f(x)$) and \emph{antiderivative}
($\int f(x) \ \mathrm{d}x$) functions in 
\infinitecalculus respectively.  Note also, that the integer number
$2$ in \finitecalculus is analogous to Euler's number $e$ in 
\infinitecalculus, in the sense that $\Delta 2^x = 2^x$ and
$\frac{\mathrm{d}}{\mathrm{d}x} e^x = e^x$, as well as $\sum 2^x
\ \mathrm{d}x = 2^x + c$ and $\int e^x \ \mathrm{d}x = e^x + c$.

\paragraph{\bf Our algorithm for rewriting summations.}

Based on Theorem~\ref{th:fundamental-finite-calculus} and
Definition~\ref{def:discrete-definite-integral}, given a summation of
the form $\summation{x=a}{b}{f(x)}$, where $a, b \in {\mathbb N}$, we
rewrite it as a \emph{definite integral} in \finitecalculus:\footnote{For simplicity of exposition we assume that $a, b
  \in {\mathbb N}$, but our algorithm can be also applied even when $a$
  and $b$ are arithmetic expressions, i.e., functions $a, b: {\mathbb
    N} \rightarrow {\mathbb N}$.}

\begin{equation}
\label{eq:sum2integral}
\summation{x=a}{b}{f(x)} = \summation{a}{b+1}{f(x)}
\ {\mathrm{d}x} = F(b+1) - F(a)
\end{equation}

\noindent where $F(x)$ is the \emph{indefinite integral} function of
$f(x)$, i.e., $F(x) = \sum f(x) \ \mathrm{d}x$, and is obtained by
using the integration rules provided in the fourth column of
Table~\ref{table:finite-calculus} for different classes of functions
$f(x)$, specified in the second column of the table. The third column
of the table shows some rules for obtaining the derivatives of the
functions in the second column, which are needed for the application
of the integration rule 8 provided in the fourth column, row 8.

\begin{table}
\centering
\begin{tabular}{l|l|l|l}\hline \hline
\#Rule & $f(x)$ & $\Delta f(x)$ & $\Sigma f(x) \mathrm{d}x$ \\ \hline
1 & $x^{\underline{m}}$ & $m \ x^{\underline{m-1}}$  & 
$\frac{1}{m+1} \ x^{\underline{m+1}} + c$\\
2 & $2^x$ & $2^x$ & 
$2^x + c$ \\
3 & $a^x$ & $(a-1) \ a^x$ & 
$\frac{1}{a-1} \ a^x + c$ \\
4 & $a^{mx+n}$ & $(a^m-1) \ a^{mx+n}$ & 
$\frac{1}{a^m-1} \ a^{mx+n} + c$ \\
5 & $u(x)+v(x)$ & $\Delta u(x) + \Delta v(x)$ & $\Sigma u(x) \ \mathrm{d}x + \Sigma v(x) \ \mathrm{d}x + c$ \\
6 & $k \ u(x)$ 
& $k \ \Delta u(x)$ &  $k \  \Sigma u(x) \ \mathrm{d}x + c$ \\
7 & $u(x) \ v(x)$ & $v(x+1) \ \Delta u(x) + u(x) \ \Delta v(x)$ &  \\
8 &$u(x) \ \Delta v(x)$ & & $u(x) \ v(x) - \Sigma v(x+1) \ \Delta u(x) \ \mathrm{d}x + c$   \\ \hline \hline
\end{tabular}
\caption{A set of \finitecalculus rules 
used in our approach.}

\label{table:finite-calculus}
\end{table}

The rules in Table~\ref{table:finite-calculus} are applied to the
resulting expression until it does not contain any integral nor
summation. Note that $u-v$ and $\frac{u}{v}$ can rewritten as $u +
(-v)$ and $u \ \frac{1}{v}$ respectively. However, we use the
corresponding specialized rules for the subtraction and division.

For illustration purposes, we include here a simple and a more complex
example of the application of such rules.

\begin{example}
In order to find a closed form of $\summation{x=1}{a}{2^x}$, we proceed as follows:
\begin{enumerate}
\item Rewrite it as $\summation{1}{a+1}{2^x \ \mathrm{d}x}$, according
  to Theorem~\ref{th:fundamental-finite-calculus}.
\item 
\label{point:indef-integral}
Compute the corresponding discrete indefinite integral $\sum 2^x
  \ \mathrm{d}x$. This is done by using integration rule 2, so that
  $\sum 2^x \ \mathrm{d}x = 2^x$. Note that we omit the constant $c$
  that appears in the rules of Table~\ref{table:finite-calculus} since
  it is not relevant for the final result.

\item By using Definition~\ref{def:discrete-definite-integral} and the
  previous results, we have that:

$
\begin{array}{l}
\summation{x=1}{a}{2^x} = \summation{1}{a+1}{2^x \ \mathrm{d}x} \ \text{(Theorem~\ref{th:fundamental-finite-calculus})} \\
= 2^x  | ^{a+1}_{1} \ \text{(Definition~\ref{def:discrete-definite-integral} and integration rule 2)} \\
= 2^{a+1}-2^1 = 2^{a+1}-2
\end{array}
$
\end{enumerate}

\end{example}

\begin{example}
A closed form of $\summation{x=1}{a}{x \ 2^{a-x}}$ is obtained as
follows:
\begin{enumerate}
\item Rewrite it as $\summation{1}{a+1}{x \ 2^{a-x} \ \mathrm{d}x}$
(Theorem~\ref{th:fundamental-finite-calculus}). 

\item Compute the corresponding discrete indefinite integral $\sum x
  \ 2^{a-x} \ \mathrm{d}x$ by using integration by parts rule 8,
  making $u(x) = x$ and $\Delta v(x) = 2^{a-x} \ \mathrm{d}x$. Thus,
  $\Delta u(x) = 1 \ x^{\underline{0}} \ \mathrm{d}x = \mathrm{d}x$
  (derivative rule 1), and $v(x) = \sum \ 2^{a-x} \ \mathrm{d}x =
  \frac{1}{2^{-1}-1} \ 2^{a-x} = -2 \ 2^{a-x}$ (integration rule
  4). Now, we have:

$
\begin{array}{l}
 \sum x \ 2^{a-x} \ \mathrm{d}x = x \ (-2 \ 2^{a-x}) - \sum -2 \ 2^{a-(x+1)} \ \mathrm{d}x \\
 = - x \ 2^{a+1-x} - \sum -2^{a-x} \mathrm{d}x = - x \ 2^{a+1-x} + \sum  2^{a-x} \mathrm{d}x  \\
 = - x \ 2^{a+1-x} + (-2 \ 2^{a-x}) \ \text{(integration rule 4, as before)}  \\
 = - x \ 2^{a+1-x} - 2^{a+1-x} = -2^{a+1-x} \ (x+1) 
\end{array}
$

\item 
By Definition~\ref{def:discrete-definite-integral} and the previous
result, we have that:

$
\begin{array}{l}
\summation{1}{a+1}{x \ 2^{a-x} \ \mathrm{d}x} = -2^{a+1-x} \ (x+1) | ^{a+1}_{1}  \\
 = - 2^{a+1-(a+1)} \ ((a+1)+1) + 2^{a+1-1} \ (1+1) = - 2^0 \ (a+2) + 2^{a} \ 2  \\
 = 2^{a+1}-a-2
\end{array}
$

\item
Thus, $\summation{x=1}{a}{x \ 2^{a-x}} = \summation{1}{a+1}{x \ 2^{a-x} \ \mathrm{d}x} = 2^{a+1}-a-2$. 

\end{enumerate}

\end{example}

\paragraph{\bf Termination of the algorithm.}

The proof of termination of the recursive application of the rules of
Table~\ref{table:finite-calculus} is based on: a) in any of the
derivative rules (third column), the depth of the resulting
expression, with respect to the derivative operator $\Delta$, is
always $0$ (rules 1 to 4) or decreases by $1$ (rules 5 to 7); and b)
in any of the integration rules (fourth column), the depth of the
resulting expression, with respect to the integral operator
$\intoperator$, is always $0$ (rules 1 to 4) or decreases by $1$
(rules 5, 6 and 8). In addition, in integration rule 8, we apply the
derivative rules to the polynomial part, so that eventually, the depth
of the resulting expression will shrink down to a constant.

 Finally, as already said at the beginning of this section, our
approach for comparing summation functions consists in transforming
any summation cost function $C$ into an equivalent closed form cost
function $C^{t}$ that does not contain any summation subexpressions,
and then applying the comparison techniques explained in the previous
sections to the resulting closed form functions.  In general, such
transformation
is an \emph{undecidable} problem. However,
Table~\ref{table:finite-calculus} provides a decidable fragment of
summation expressions, which cover a large class of the functions that
are produced by the analysis that we use.  In addition, we detect
functions that are not covered by our approach and report them to the
user.

\subsection{Multiple Variable Cost Function Comparison}
\label{sec:multi-var-function-comparison}

Given two resource usage functions $\Psi_{1}(\bar n)$ and
$\Psi_{2}(\bar n)$, where $\bar n$ is the abbreviation of $k$
variables $n_1 \dots n_k$ representing input data sizes, we want to
know which values of $\bar n$ meet the constraint $\Psi_{1}(\bar n)
\leq \Psi_{2}(\bar n)$, so that we can view this problem as a
constraint satisfaction problem (CSP).

If the functions involved are \emph{linear functions} the problem can
be solved by using standard constraint programming techniques.  In our
implementation we use the Parma Polyhedra Library (PPL) to compute the
solutions in this case.
However, constraint programming cannot solve
the problem for \emph{polynomial functions} in general.

Unlike the case of single-variable cost functions, where we have
numerically bounded intervals as (input data size) preconditions, in
case of multiple-variable cost functions we need to be able to express
relations between variables as preconditions.
For example, given a function $x + y - 10 \leq 0$, all combinations of values
for $x$ and $y$ that satisfy the inequality can not be concisely represented
as intervals in the preconditions.
Therefore, instead of
using only intervals represented as pairs of numbers, we use
arithmetic expressions that represent more general
\emph{size constraints}. Table~\ref{tab:rules-function-interval}
summarizes the sufficient conditions used by our general verification
process, which can be applied to both multi- and single-variable cost
functions, showing the size constraints 
that need to be checked for different cases, depending on whether the
specification provides an \textbf{Upper Bound} cost function (denoted
as $S_{ub}$), a \textbf{Lower Bound} cost function ($S_{lb}$), or both
(columns 2 to 4 respectively). A symbol representing the result of the
verification process ($T$, $F$ or $C$), when such size constraints are
true, is shown at the right hand side of the implication symbol
($\rightarrow$), meaning that the specification has been verified
($T$), is false ($F$), or it cannot be proved whether the
specification 
is true or false.  Short names for the size constraints ($c_1$ to
$c_4$) are also used in order to achieve a compact representation. The
first column (\textbf{Analysis}) divides the table into three
different scenarios, each one corresponding to a row, depending on
whether the available analysis is able to infer upper-bound cost
functions, lower bounds, or both. As already explained, in this work
we use the parametric resource analysis integrated in \ciaopp
(see~\cite{resource-iclp07,plai-resources-iclp14} and its references),
which infers both upper and lower bounds.  
Note that the conditions $c_1 \wedge c_4$ and $c_2 \vee c_3$ given in
the last column and row of Table~\ref{tab:rules-function-interval},
correspond to sufficient conditions~\ref{suffcond1} and
~\ref{suffcond2} of Corollary~\ref{func-comparison-corollary}
respectively. Such conditions assume that both lower- and upper-bound
cost functions are available for both analysis and specification.
Either condition $c_1$ or $c_4$ in isolation is also equivalent to
sufficient condition~\ref{suffcond1} of
Corollary~\ref{func-comparison-corollary} if default, safe values for
the corresponding missing bounds are assumed. The same applies to
conditions $c_2$ and $c_3$, which are equivalent to sufficient
condition~\ref{suffcond2} of
Corollary~\ref{func-comparison-corollary}.

\begin{table}[t]
  \begin{center}
    \renewcommand\arraystretch{1.25} 
    \begin{tabular}{c l|l|l|l}\hline\hline
      \renewcommand\arraystretch{2} 
      & & \multicolumn{3}{c} {\bf Specification } \\ \cline{3-5}
      \renewcommand\arraystretch{1.5} 
      & & {\bf Upper}           & {\bf Lower }           & {\bf Upper \& Lower } \\
      & & {\bf Bound ($S_{ub})$} & {\bf Bound ($S_{lb}$) } & {\bf Bound} \\ \cline{1-5}
      \multicolumn{1}{c|}{\multirow{9}{*}{\bf\begin{turn}{90}Analysis\end{turn}}} & {\bf Upper} &
      $c_1 \rightarrow T$, where  & $c_3 \rightarrow F$, where  & 
$c_3 \rightarrow F$ 
\\ 
      \multicolumn{1}{c|}{} & {\bf Bound ($A_{ub}$)} & $c_1 \equiv S_{ub} \geq A_{ub}$  & $c_3 \equiv S_{lb}> A_{ub}$ & 
$\neg c_3 \rightarrow C$ 
\\ \cline{2-5}
      \multicolumn{1}{c|}{}& {\bf Lower} &
      $c_2 \rightarrow F$, where & $c_4 \rightarrow T$ & 
 $c_2 \rightarrow F$
\\ 
      \multicolumn{1}{c|}{}& {\bf Bound ($A_{lb}$)} & $c_2 \equiv S_{ub} < A_{lb}$  & $c_4 \equiv S_{lb} \leq A_{lb}$ & 
$\neg c_2 \rightarrow C$
      \\ \cline{2-5}
      \multicolumn{1}{c|}{}& {\bf Upper \& }& $c_1 \rightarrow T$ & $c_4 \rightarrow T$ & $c_1 \wedge c_4 \rightarrow T$
      \\
      \multicolumn{1}{c|}{}& {\bf Lower} & $c_2 \rightarrow F$ & $c_3 \rightarrow F$ & $c_2 \vee c_3 \rightarrow F$
      \\
      \multicolumn{1}{c|}{}& {\bf Bound} & $\neg c_1 \wedge \neg c_2 \rightarrow C$ & 
      $\neg c_3 \wedge \neg c_4 \rightarrow C$ & $\neg(c_1 \wedge c_4) \wedge \neg(c_2 \vee c_3) \rightarrow C$
      \\
      \hline\hline
    \end{tabular}
    \renewcommand\arraystretch{1} 

  \end{center}

 \caption{Sufficient conditions checked by our general verification
   process for different scenarios depending on the available bounds.}

  \label{tab:rules-function-interval}
\end{table}

\begin{examplebox}

Consider the \texttt{inc\_append/3} predicate in
Fig.~\ref{fig:inc_append_example}, which is an extension of the
classical \texttt{append/3}, also concatenating two 
lists of numbers, $A$ and $B$, 
but which also increments by 1 all the elements of the second list ($B$)
beforehand. The user assertion specifies that the upper bound on the
cost of the program, in terms of the number of resolution steps, is
$2*length(A)-10$ where $A$ is the first list to append.
\prettylstciao
\begin{figure}[t]
\prettylstciao
\begin{lstlisting}
:- check pred inc_append(A,B,C) + (cost(ub, steps, 2*length(A)-10)).

inc_append(A, B, C) :-
    inc_list(B, B1),
    append(A, B1, C).

inc_list([], []).
inc_list([E|R], [E1|T]) :-
    E1 is E + 1,
    inc_list(R, T).

append([],L,L).
append([A|R],S,[A|L]) :-
    append(R,S,L).
\end{lstlisting}
\caption{Append with increment example.}
\label{fig:inc_append_example}
\end{figure}
The analysis infers both an upper and a lower bound cost function,
which in this case both bounds coincide, namely
$length(B)+length(A)+3$. The output of the assertion checking is:
\begin{tabbing}
\texttt{:- \textcolor{false}{false}} \= \texttt{pred inc\_append(A,B,C)}\\
         \> \texttt{: intervals([[lt(-13,-length(A)+length(B))]])}\\
         \> \texttt{+ cost(ub,steps,2*length(A)-10).}\\

\texttt{:- \textcolor{checked}{checked}} \= \texttt{pred inc\_append(A,B,C)}\\
         \> \texttt{: intervals([[leq(13,length(A)-length(B))]])}\\
         \> \texttt{+ cost(ub,steps,2*length(A)-10).}
\end{tabbing}
meaning that when $-13<-length(A)+length(B)$ the assertion is false,
and when $13\leq length(A)-length(B)$ the assertion is correct.
\end{examplebox}

\section{Generic Implementation and Experimental Results}
\label{sec:implementation}

In order to assess the accuracy and efficiency (as well as the
scalability) of the resource usage verification techniques presented,
we have implemented
and integrated them in 
by extending the function comparison capabilities of the 
\ciao/\ciaopp~framework.

\renewcommand\arraystretch{1.1}
\begin{small}
\begin{table}
\footnotesize
\centering
\setlength\tabcolsep{2pt}
\begin{tabular}{|l|l|l|l|c|c|} \cline{1-6} 
  \multirow{2}{*}{
  \centering
  \parbox[l]{4cm}{
     \textbf{Program}$+$ \\
     \textbf{Analysis Info} $+$ \textbf{AvT}
  }}
  & \multirow{2}{*}{\textbf{ID}} & \multirow{2}{*}{\textbf{Assertion}} &
\multirow{2}{*}{\textbf{Verif. Result}} & \multicolumn{2}{c|}{\textbf{Time} (ms)}\\ \cline{5-6}
& & & & \textbf{Tot}&\textbf{Avg} \\ [1ex]
 \cline{1-6}

\underline{\emph{Fibonacci}} & \emph{A1} &:- pred fib(N,R)                       & F in $[0,10]$&  \multirow{14}{*}{106.4}& \multirow{14}{*}{35.4}\\
\textbf{lb,ub:} $1.45 * 1.62^x $ && +cost(ub,steps,    & T in $[11,\infty]$& &\\ 
\ \ \ $+ 0.55 * -0.62^x- 1$ && \quad exp(2,\nmetric(N))-1000). & & & \\
\cline{2-4}
x = \nmetric(N) &\emph{A2}& :- pred fib(N,R)            & F in $[0,10] \cup [15, \infty]$ & &\\ 
\textbf{AvT}$= \frac{1402.6 \ ms}{65 \ a}=21.5\frac{ms}{a} $&& + (cost(ub,steps,   & T in $[11,13]$ & &\\ 
                   &&      \quad exp(2,\nmetric(N))-1000),         & C in $[14,14]$ && \\
&& cost(lb,steps,    & & &\\ 
\textbf{AvT}$= \frac{\textbf{VTime}}{\textbf{\#Asser}}$                 && \quad exp(2,\nmetric(N))-10000)).                &&& \\
\cline{2-4}
                 &\emph{A3}& :- pred fib(N,R)       &F in $[1,10]$& &\\ 
                 &&   :(intervals(\nmetric(N),[i(1,12)])) & T in $[11,12]$ & &\\
                 && + (cost(ub,steps,   & & &\\ 
                 && \quad exp(2,\nmetric(N))-1000),        &&& \\
                 && cost(lb,steps,    & &&\\ 
                 && \quad exp(2,\nmetric(N))-10000)).     &&& \\  \cline{1-6} 
\underline{\emph{Naive Reverse}}    &\emph{B1}& :- pred nrev(A,B)  & F in $[0,3]$ & \multirow{7}{*}{59.1}&\multirow{7}{*}{29.5}\\
\textbf{lb,ub:} $0.5x^2+1.5x+1$ && + ( cost(lb,steps,length(A)),  & T in $[4,\infty]$ & &\\
x = length(A)    && cost(ub,steps,  &  &  &\\ 
\textbf{AvT}$= \frac{1171.5 \ ms}{54 \ a}=21.6\frac{ms}{a}$ && \quad exp(length(A),2))).              &&& \\\cline{2-4}
  &\emph{B2}&:- pred nrev(A,\_1)   & F in $[0,0] \cup [17,\infty]$ & & \\
     &&+ (cost(lb, steps, length(A)),   & T in $[1,16]$ & & \\ 
                        && cost(ub, steps, 10*length(A))).& & & \\  \cline{1-6} 

\underline{\emph{Quick Sort}}            &\emph{C1}&:- pred qsort(A,B)    & F in $[0,2]$ & \multirow{6}{*}{160.8}&\multirow{6}{*}{80.4} \\
\textbf{lb:} $x+5$           && + cost(ub, steps,  & C in $[3,\infty]$ & &\\  
\textbf{ub:} $(\sum_{j=1}^{x}j2^{x-j})+x2^{x-1}$         &&  \quad exp(length(A),2)).   &&& \\ \cline{2-4}
\ \ \ $+2*2^{x}-1$      &\emph{C2}& :- pred qsort(A,B) & C in $[0,\infty]$ &  &\\
x = length(A)       && + cost(ub, steps, &  & &\\ 
\textbf{AvT}$= \frac{1028.2 \ ms}{56 \ a}=18.3\frac{ms}{a}$    &&\quad exp(length(A),3)).  &  & &\\  [1ex] \cline{1-6} 

\underline{\emph{Client}}              &\emph{D1}&:- pred main(Op, I, B)   &C in $[1,7]$  & \multirow{11}{*}{31.8}& \multirow{11}{*}{10.6}\\
\textbf{ub:} $8x$    && + cost(ub, bits\_received,  &T in $[0,0] \cup [8,\infty]$  & &\\ 
x = length(I)        && exp(length(I),2)).  &  & &\\ \cline{2-4}
\textbf{AvT}$= \frac{1682.7 \ ms}{60 \ a}=28.04\frac{ms}{a}$          &\emph{D2}&:- pred main(Op, I, B)  & T in $[0,\infty]$ & &\\ 
         && + cost(ub, bits\_received, &  & &\\ 
             && \quad 10*length(I)).  &  & &\\ \cline{2-4}
                 &\emph{D3}&:- pred main(Op, I, B)  & T in $[1,10] \cup [100,\infty]$ &  &\\ 
                 &&: intervals(length(I),  &  &  &\\
                 && \quad [i(1,10),i(100,inf)])     & & & \\
                 && + cost(ub, bits\_received, &  & &\\ 
                 && \quad 10*length(I)).  &  & &\\  \cline{1-6} 
\underline{\emph{Reverse}}   &\emph{E1}&:- pred reverse(A, B)   & F in $[0,0]$ & \multirow{5}{*}{30.0}&\multirow{5}{*}{30.0}\\
\textbf{lb,ub:} $x+2$   && + (cost(ub, steps,  & T in $[1,\infty]$  & &\\ 
x = length(A)   && \quad 500 * length(A))). &  & &\\ 
\textbf{AvT}$= \frac{760.9 \ ms}{60 \ a}=12.6\frac{ms}{a}$    &&  &  & &\\  [1ex] \cline{1-6} 
\underline{\emph{Palindrome}} & \emph{F1}& :-  pred palindrome(X,Y)      & F in $[0,\infty]$ &\multirow{6}{*}{31.5}&\multirow{6}{*}{15.7} \\
 \textbf{lb,ub:} $x2^{x-1}+2*2^{x}-1$                   &   & + cost(ub,output\_elements, &        &   &  \\ 
x=length(X)   &   & \quad exp(length(X),2)). & & & \\ \cline{2-4}
\textbf{AvT}$= \frac{1187.1 \ ms}{52 \ a}=22.8\frac{ms}{a}$ &\emph{F2}& :-  pred palindrome(X,Y)&F in $[0,2] \cup [5,\infty]$ & & \\
                      &   & + cost(ub,output\_elements,& T in $[3,4]$ &   &  \\ 
          &&\quad exp(length(X),3)).       &&& \\  \cline{1-6} 
\underline{\emph{Powerset}} & \emph{G1} & :- pred powset(A,B)      & C in $[0,1] \cup [17, \infty]$ &\multirow{6}{*}{35.5}&\multirow{6}{*}{35.5} \\
\textbf{ub:} $0.5 * 2^{x+1}$                  &    & + cost(ub,output\_elements, &T in $[2,16]$ & & \\
x = length(A)&    & exp(length(A),4)). & & & \\ 
\textbf{AvT}$= \frac{880.9 \ ms}{49 \ a}=17.9\frac{ms}{a}$    &   &    &  & & 
  \\  [1ex] \cline{1-6} 
%
\underline{\emph{Hanoi}} & \emph{H1} & :- pred hanoi(A,B,C,D)      & F in $[0,1] \cup [5, \infty]$ &\multirow{6}{*}{121.2}&\multirow{6}{*}{121.2} \\
  \textbf{lb,ub:} $2^{x+1} - 2$                  &    & + costb(steps, exp(2,\nmetric(A)-3) + 2, &T in $[2,4]$ & & \\
x = \nmetric(A)&    & exp(2,\nmetric(A)-3) + 30). & & & \\ 
\textbf{AvT}$= \frac{1114.6 \ ms}{64 \ a}=17.41\frac{ms}{a}$    &   &    &  & & \\ [1ex] \cline{1-6} 

\end{tabular} 
\caption{Results of the interval-based static assertion checking integrated into \ciaopp.} 
\label{tab:result-eval}
\end{table}
\renewcommand\arraystretch{1}

\end{small}

\begin{table}[ht]
\centering
\begin{tabular}{|c|c|c|c|c|c|}  \cline{1-6} 
\multirow{2}{*}{\centering \textbf{ID}}  & \multirow{2}{*}{\centering \textbf{Method}} & \multicolumn{4}{c|}{\textbf{Intervals}}\\ [2ex]\cline{3-6}
  &  &         [1,12]&[1,100]&[1,1000]&[1,10000]  \\ [2ex] \cline{1-6} 
\emph{A3}  & \textbf{Root} & 58.1 & 64.6 &  71.7 &  66.5  \\ [2ex] \cline{2-6}
           & \textbf{Eval} & 257 & 256.2 & 261.1 & 262.9  \\ [2ex] \cline{1-6} 
\emph{D3}  & \textbf{Root} & 11.2 & 9 &  8.2 &  9.3  \\ [2ex] \cline{2-6}
           & \textbf{Eval} & 39.7 & 41.5 &  38.8 & 55.2  \\ [2ex] \cline{1-6} 
\end{tabular}

\caption{Comparison of assertion checking times for two methods.}
\label{tab:result-eval-bounded-domain}
\end{table}

Table~\ref{tab:result-eval} shows some experimental results obtained
with our prototype implementation on an Intel Core i5 2.5 GHz with
2 cores, 10GB 1333 MHz DDR3 of RAM, running MacOS Sierra 10.12.6.
The column labeled \textbf{Program} shows the name of
the program to be verified, the upper (\textbf{ub}) and lower
(\textbf{lb}) bound resource usage functions inferred by \ciaopp's
analyzers, the input arguments, and the size measure used.

The scalability of the different analyses required is beyond the scope
of this paper. We will just mention that in the case of the core
resource analysis, i.e., the one that processes the \hcir (to which
other languages are translated into), and infers cost functions, its
scalability follows generally from its compositional nature.
Our study focuses on the scalability of the assertion comparison
process.  To this end, we have added a total number of $390$
assertions to 
several programs that are then statically checked.  Column
\textbf{Program} shows an expression $\textbf{AvT} =
\frac{\textbf{VTime}}{\textbf{\#Asser}}$ for each program giving the
total time \textbf{VTime} in milliseconds spent by the verification of
the number assertions given by the denominator \textbf{\#Asser}, and
the resulting average time per assertion (\textbf{AvT}). A few of
those assertions are shown as examples in column \textbf{Assertion},
where \textbf{ID} is the assertion identifier.  Some assertions
specify both upper and lower bounds (e.g., \emph{A2} or \emph{A3}),
but others only specify upper bounds (e.g., \emph{A1} or \emph{C1}).
Also, some assertions include preconditions expressing intervals
within which the input data size of the program is supposed to lie
(\emph{A3} and \emph{D3}). The column \textbf{Verif. Result} shows the
result of the verification process for the assertions in column
\textbf{Assertion}, which in general express intervals of input data
sizes for which the assertion is true (\texttt{T}), false
(\texttt{F}), or it has not been possible to determine whether it is
true or false (\texttt{C}). Column \textbf{Tot} (under \textbf{Time})
shows the total time (in milliseconds) spent by the verification of
the assertions shown in column \textbf{Assertion} and \textbf{Avg}
shows the average time per assertion for these assertions.  In all the
experiments in Table~\ref{tab:result-eval}, the comparison of resource
usage functions was precise, in the sense that the input data size
intervals for which one function is greater, equal or smaller than
another were exact, i.e., coincided with the actual intervals.

Note that, as mentioned before, the system can deal with different
types of resource usage functions: polynomial functions (e.g., 
\emph{Naive Reverse}), exponential functions (e.g., \emph{Fibonacci}),
and summation functions (\emph{Quick Sort}). In general, polynomial
functions are faster to check than other functions, because they do
not need additional processing for approximation. However, the
additional time to compute approximations is very reasonable in
practice. Finally, note that the prototype was not able to determine
whether the assertion \emph{C2} in the \emph{Quick Sort} program is
true or false. 
This is because of two reasons: a) the analysis inferred an imprecise
upper-bound cost function, exponential, and b) our approach to finding
the data size intervals based on a transformation for removing
summations and an approximation by polynomials did not cover such
function. In some cases, either reason a) or b) in isolation can be
the cause for our approach to fail to prove a given assertion. Even in
the case when the cost bound function inferred by the analysis is precise,
if it is too complex, our approach may still fail to find roots and
data size intervals, and hence to prove the assertion.

Table~\ref{tab:result-eval-bounded-domain} shows assertion checking
times (in milliseconds) for different input data size intervals
(columns under \textbf{Intervals}) and for two methods: the one
described so far (referred to as \textbf{Root}), and a simple method
(\textbf{Eval}) that evaluates the resource usage functions for all
the (natural) values in a given input data size interval and compares
the results.  Column \textbf{ID} refers to the assertions in
Table~\ref{tab:result-eval}.  We can see that checking time grows
quite slowly compared to the length of the interval, which grows
exponentially.

\textbf{Root} is expected to be slower than \textbf{Eval}
in the comparison of non-polynomial functions (A3),
because \textbf{Root} must look for the functions intersections, and then
must check every value in the intervals to ensure the absence of other roots.
This behaviour is not exhibited in this experiment because the
intervals encountered by \textbf{Root} are narrow, 
and therefore the cost of checking every value in them is negligible.
On the other hand, in the last interval,
which grows wider as we increase the input data size interval,
\textbf{Eval} is penalized by the task of checking every value
in the interval, but \textbf{Root} is not penalized because
it uses syntactic comparison.

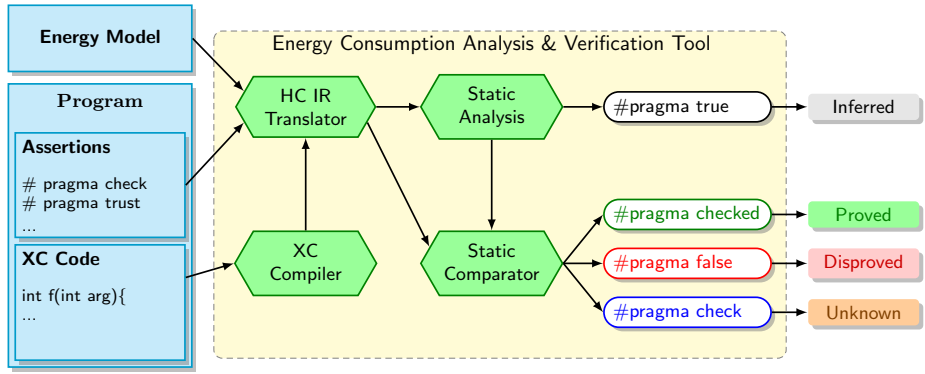
\begin{figure*}[t]
\resizebox{\textwidth}{!}{
  \centering 


\pgfdeclarelayer{background}
\pgfdeclarelayer{foreground}
\pgfsetlayers{background,main,foreground}

\tikzstyle{source}=[draw, draw=cyan!80!black!100, fill=cyan!20, text width=8em, font=\sffamily,
    thick,
    minimum height=2.5em,drop shadow]
\tikzstyle{tool}=[draw=green!50!black!100, fill=green!40, text width=5em, font=\sffamily, 
    thick,
    text centered, 
    chamfered rectangle, chamfered rectangle angle=30, chamfered rectangle xsep=2cm]
\tikzstyle{midresult}=[draw, fill=white!40, text width=7.6em, font=\sffamily,
    thick,
    rounded rectangle,
    minimum height=1em,drop shadow]
\tikzstyle{warnresult}=[color=orange!50!black!100, align=center, fill=orange!40, text width=5em, font=\sffamily, 
    thick,
    rounded corners=2pt,
    minimum height=1em,drop shadow]
\tikzstyle{errresult}=[color=red!80!black!100,  align=center, fill=red!20, text width=5em, font=\sffamily, 
    thick,
    rounded corners=2pt,
    minimum height=1em,drop shadow]
\tikzstyle{inforesult}=[align=center, fill=black!10, text width=5em, font=\sffamily, 
    thick,
    rounded corners=2pt,
    minimum height=1em,drop shadow]
\tikzstyle{okresult}=[color=green!50!black!100,  align=center, fill=green!40, text width=5em, font=\sffamily, 
    thick,
    rounded corners=2pt,
    minimum height=1em,drop shadow]
\tikzstyle{certresult}=[color=blue!50!black!100, fill=blue!40, text width=5em, font=\sffamily, 
    thick,
    rounded corners=2pt,
    minimum height=1em,drop shadow]
\tikzstyle{coderesult}=[color=blue!50!black!100, fill=blue!40, text width=5em, font=\sffamily, 
    thick,
    rounded corners=2pt,
    minimum height=1em,drop shadow]

\begin{tikzpicture}[>=latex,scale=1]
  \node (assertions) [source] {
    \textbf{Assertions}\\[2ex] \footnotesize
    \# pragma check\\
    \# pragma trust\\
    ...
  };
  \path (assertions)+(0,-6em) node (codeXC) [source] {
    \textbf{XC Code}\\[2ex]\footnotesize
     int f(int arg)\{\\
    ...\\[2ex]~
    
  };

  \path (assertions)+(0,7.5em) node (model) [source, align=center, text width=8.7em, minimum height=1.1cm ] {
    \textbf{Energy Model}\\
    };
  
\path (assertions)+(10.5em,4em) node (translator) [tool] {HC IR Translator};
\path (translator)+(0,-8em) node (compiler) [tool] {XC Compiler };
\path (translator)+(9.5em,0) node (statana) [tool] {Static Analysis};
\path[color=black] (statana)+(10em,0em) node (true) [midresult] {\#pragma true};
\path (statana)+(0em,-8em) node (comparator) [tool] {Static Comparator};
\path[color=blue] (comparator)+(10em,-2.5em) node (check) [midresult] {\#pragma check};
\path[color=red] (comparator)+(10em,0em) node (false) [midresult] {\#pragma false};
\path[color=green!50!black] (comparator)+(10em,2.5em) node (checked) [midresult] {\#pragma checked};

\path (true)+(9em,0em) node (inferred) [inforesult] {Inferred};
\path (false)+(9em,0em) node (cterror) [errresult] {Disproved};
\path (check)+(9em,0em) node (verifwarn) [warnresult] {Unknown};
\path (checked)+(9em,0) node (verified) [okresult] {Proved};

\path [draw, thick, ->] (model.east) -- node [] {} (translator.north west) ;
\path [draw, thick, ->] (assertions.east) -- node [] {} (translator.south west) ;
\path [draw, thick, ->] (codeXC) -- node [] {} (compiler.west) ;
\path [draw, thick, ->] (compiler) -- node [] {} (translator) ;
\path [draw, thick, ->] (translator.south east) -- node [] {} (comparator.north west) ;
\path [draw, thick, ->] (translator.east) -- node [] {} (statana.west) ;
\path [draw, thick, ->] (comparator.east) -- node [] {} (check.west) ;
\path [draw, thick, ->] (comparator.east) -- node [] {} (false.west) ;
\path [draw, thick, ->] (comparator.east) -- node [] {} (checked.west) ;
\path [draw, thick, ->] (statana) -- node [] {} (true) ;
\path [draw, thick, ->] (statana) -- 
(comparator) ;
\path [draw, thick, ->] (true) -- node [] {} (inferred) ;
\path [draw, thick, ->] (check) -- node [] {} (verifwarn) ;
\path [draw, thick, ->] (false) -- node [] {} (cterror) ;
\path [draw, thick, ->] (checked) -- node [] {} (verified) ;

\path [color=black] (statana.north)+(0,1.5em) node (preprocessor) {\sffamily \normalsize Energy Consumption Analysis \& Verification Tool};

\path [color=black] (assertions.north)+(0,1.5em) node (program) {\sffamily  \bf Program};

\begin{pgfonlayer}{background}
  \path (translator.north west)+(-0.5,1.0) node (g) {};
  \path (check.south east)+(0.5,-0.5) node (h) {};
  
  \path[fill=yellow!20,rounded corners, draw=black!50, densely dashed] (g) rectangle (h);
\end{pgfonlayer}

\begin{pgfonlayer}{background}
  \path (assertions.north west)+(-0.1,0.8) node (g) {};
  \path (codeXC.south east)+(0.1,-0.1) node (h) {};
  
  \path[source] (g) rectangle (h);
\end{pgfonlayer}

\end{tikzpicture}


}
\caption{Specialization of \ciaopp for energy consumption verification
  in XC programs.}
\label{fig:analysis-verif-hcir}.
\end{figure*}

\section{Application to Energy Verification of Imperative/Embedded Programs}
\label{sec:application-energy}

As an application of the techniques presented, in this section we
provide an overview of a prototype tool that we have developed for
performing \emph{static energy consumption verification} of \xc
programs running on the XMOS XS1-L architecture. The tool has been
implemented by specializing the \ciaopp general verification
framework 
to process \xc source, \llvmir~\cite{LattnerLLVM2004}, and 
\isa
code.
Fig.~\ref{fig:analysis-verif-hcir} shows an overview diagram of the
architecture of the tool. Hexagons represent different tool components and arrows
indicate the communication paths among them. The \tool takes as input
an \xc source program (left part of
Fig.~\ref{fig:analysis-verif-hcir}) that can optionally contain
assertions in a C-style syntax. As 
explained in Sect.~\ref{sec:intro}, such assertions are translated
into the \ial.

In our \tool the user can choose between performing the analysis at
the \isa or \llvmir \levels (or both).
We refer the reader to~\cite{isa-vs-llvm-fopara} for an experimental study that sheds light on the
trade-offs implied by performing the analysis at each of these two
\levels, which can help the user to choose the level that fits the
problem best.\footnote{As a brief summary of the conclusions
  of~\cite{isa-vs-llvm-fopara}, the \isa level allows somewhat tighter
  bounds when the analyzer can generate precise functions, but the
  \llvmir level allows the analyzer to produce precise functions more
  often, because more structural information is preserved at that
  level. Overall, the \llvmir level emerges as a good compromise.}

The associated \isa and/or \llvmir representations of the \xc program
are generated using the xcc compiler. Such representations include
useful metadata.
The \emph{\hcir translator} component (which will be described in
Sect.~\ref{sec:llvm-ciao-translation}) produces the internal
representation used by the tool, \hcir, which includes the program and
possibly specifications and/or trusted information (expressed in the
\ial).
The \hcir translator
performs several tasks:

\begin{enumerate}

\item Transforming the \isa and/or \llvmir into 
  \hcir. 

\item Transforming specifications (and trusted information) written as
  C-like assertions (as described in
  Sect.~\ref{sec:xc-assertion-language}) into the \ial.

\item Transforming the energy model at the \isa
  \level~\cite{Kerrison13}, expressed in JSON format, into the
  \ial. In this \specialization, such assertions express the energy
  consumed by individual \isa instruction representations, information
  which is required by the analyzer in order to propagate it during
  the static analysis of a program through code segments,
  conditionals, loops, recursions, etc., in order to infer analysis
  information (energy consumption functions) for higher-level entities
  such as procedures, functions, or loops in the
  program, as mentioned in~Example~\ref{ex:fact}.
  Fig.~\ref{fig:energymodel} shows the transformed energy
  model in the \ial. Each trust assertion provides information for one
  machine instruction.  The model of the figure is simple, providing
  just constant upper and a lower bounds (and which are the same in
  most cases), but the bounds given (model for the instruction) can be
  functions of input data to the instruction (such as operand sizes)
  or context variables (such as voltage or clock speed, previous
  instruction, pipeline state, cache state, etc.).

\item In the case that the analysis is performed at the \llvmir level,
  the \emph{\hcir translator} component produces a set of \ciao
  assertions expressing the energy consumption corresponding to
  \llvmir block representations in \hcir. Such information is produced
  from a mapping of \llvmir instructions with sequences of \isa
  instructions and the \isa-\level energy model. The mapping
  information is produced by the \emph{mapping tool} that was first
  outlined in~\cite{entra-d3.2} (Sect. 2 and Attachments D3.2.4 and
  D3.2.5) and is described in detail in~\cite{Georgiou14}.

\end{enumerate}

Then,
the \ciaopp parametric static resource usage
analyzer~\cite{resource-iclp07,NMHLFM08,plai-resources-iclp14} takes
the \hcir, together with the assertions which express the energy
consumed by \llvmir blocks and/or individual \isa instructions, and
possibly some additional (trusted) information, and processes them,
producing the analysis results, which are expressed also using \ciao
assertions.  Such results include energy usage functions (which depend
on input data sizes) for each block in the \hcir (i.e., for the whole
program and for all the procedures and functions in it.).
The procedural interpretation of the \hcir programs, coupled with the
resource-related information contained in the (\ciao) assertions,
together allow the resource analysis to infer static bounds on the
energy consumption of the \hcir programs that are applicable to the
original \llvmir and, hence, to their corresponding \xc programs.

The verification of energy specifications is performed by the
general component already described (see
Sect.~\ref{sec:intro} and
Fig.~\ref{fig:analysis-verif-hcir}), which compares the energy
specifications with the (safe) approximated information inferred by
the static resource analysis, and produces the possible verification
outcomes for different input-data size intervals.

\subsection{\Isa/\llvmir to \hcir Transformation}
\label{sec:llvm-ciao-translation}

In this section we briefly describe
the transformations into 
the \hcir representation described in
Sect.~\ref{sec:ciao-assertion-language} that we developed in order to
achieve the verification tool presented in
Sect.~\ref{sec:intro}
and depicted in
Fig.~\ref{fig:analysis-verif-hcir}. The transformation of \isa code
into \hcir was described in~\cite{isa-energy-lopstr13-final}.  
We provide herein an overview of
the \llvmir to \hcir transformation.

\llvmir programs are expressed using typed assembly-like
instructions. Each function is in SSA form, represented as a sequence
of basic blocks. Each basic block is a sequence of \llvmir
instructions that are guaranteed to be executed in the same order.
Each block ends in either a branching or a return instruction. In
order to represent each of the basic blocks of the \llvmir in the
\hcir, we follow a similar approach as in 
the \isa-\level transformation~\cite{isa-energy-lopstr13-final}.
However, the \llvmir includes an additional type transformation as
well as better memory modelling. It is explained in detail in~\cite{isa-vs-llvm-fopara}.
The main aspects of this process, are the following:

\begin{enumerate}

\item Infer input/output parameters to each block. 

\item Transform \llvmir types into \hcir types.

\item Represent each \llvmir block as an \hcir block and each
  instruction in the \llvmir block as a literal ($S_i$).
  
\item Resolve branching to multiple blocks by creating clauses with
  the same signature (i.e., the same name and arguments in the head),
where each clause denotes one of the blocks the branch may jump to.

\end{enumerate}

The translator component is also in charge of translating the \xc
assertions to \ialassertions and back.  Assuming the \ciao type of the
input and output of the function is known, the translation of
assertions from \ciao to \xc (and back) is relatively straightforward.
Assuming the schema for \texttt{pred} assertions described in Sect.~\ref{sec:ciao-assertion-language},
the \nt{Pred} field of the \ialassertion is obtained from the scope of
the \xc assertion to which an extra argument is added representing the
output of the function.  The \nt{Precond} fields are produced directly
from the type of the input arguments: for each input variable, its
regular type and its regular type size are added to the precondition,
while the added output argument is declared as a free variable.
Finally the \nt{Comp-Props} field is set to the usage of the resource
\texttt{energy} by using the \texttt{costb} property, which also includes
the lower and upper bounds from the \xc energy consumption
specification.

\begin{small}
\begin{figure}[tbp]
\begin{center}
\begin{tabular}{l r l}
\hline
\sym{assertion} & \gis & \terminal{\#pragma} \sym{status} \sym{scope} \terminal{:}  \sym{body} \\
\sym{status} & \gis & 
   \terminal{check}    \gor 
   \terminal{trust}    \gor 
   \terminal{true}     \gor  
   \terminal{checked}  \gor 
   \terminal{false} \\
\sym{scope} & \gis & \sym{identifier} \terminal{(} \terminal{)} \gor
\sym{identifier} \terminal{(} \sym{arguments} \terminal{)} \\
\sym{arguments} & \gis & \sym{identifier}  \gor \sym{arguments} \terminal{,}  \sym{identifier} \\
\sym{body} & \gis & \sym{precond}  \terminal{==>} \sym{cost\_bounds}  \gor  \sym{cost\_bounds} \\
\sym{precond} & \gis & \sym{upper\_cond} \gor \sym{lower\_cond} \gor \sym{lower\_cond}
\terminal{\&\&} \sym{upper\_cond} \\
\sym{lower\_cond} & \gis & \sym{ground\_expr} \terminal{<=} \sym{identifier} \\
\sym{upper\_cond} & \gis &  \sym{identifier} \terminal{<=} \sym{ground\_expr} \\
\sym{cost\_bounds} & \gis & \sym{lower\_bound} \gor \sym{upper\_bound}
\gor\sym{lower\_bound} \terminal{\&\&} \sym{upper\_bound} \\
\sym{lower\_bound} & \gis & \sym{expr} \terminal{<=} \terminal{energy\_nJ} \\
\sym{upper\_bound} & \gis & \terminal{energy\_nJ} \terminal{<=}  \sym{expr} \\
\sym{expr} & \gis & \sym{expr} \terminal{+} \sym{mult\_expr} \gor \sym{expr}
\terminal{-} \sym{mult\_expr} \\
\sym{mult\_expr} & \gis & \sym{mult\_expr} \terminal{*} \sym{unary\_expr} \gor
\sym{mult\_expr} \terminal{/} \sym{unary\_expr} \\
\sym{unary\_expr} & \gis & \sym{identifier} \\ 
& & \gor  \sym{integer} \\
& & \gor \terminal{sum} \terminal{(} \sym{identifier} \terminal{,} \sym{expr} \terminal{,} \sym{expr} \terminal{,} \sym{expr} \terminal{)} \\
& & \gor \terminal{prod} \terminal{(} \sym{identifier} \terminal{,} \sym{expr} \terminal{,} \sym{expr}\terminal{,} \sym{expr} \terminal{)} \\
  & & \gor \terminal{power} \terminal{(} \sym{expr} \terminal{,} \sym{expr} \terminal{)} \\
  & & \gor \terminal{log} \terminal{(} \sym{expr} \terminal{,} \sym{expr} \terminal{)} \\
  & & \gor \terminal{(} \sym{expr} \terminal{)}  \\
  & & \gor \terminal{+} \sym{unary\_expr} \\
  & & \gor \terminal{-} \sym{unary\_expr} \\
  & & \gor \terminal{min} \terminal{(} \sym{identifier} \terminal{)} \\
  & & \gor \terminal{max} \terminal{(} \sym{identifier} \terminal{)} \\
\hline
\end{tabular}
\vspace*{-6mm}
\end{center}
\caption{Syntax of the \xc Assertion Language.}
\label{fig:xc-assrt-syntax}
\end{figure}
\end{small}

\subsection{The \xc Assertion Language}
\label{sec:xc-assertion-language}

The assertions within \xc files are essentially equivalent to those of
the \ial, but written using a syntax that is closer to standard C
notation and friendlier for C developers.
These assertions are transparently translated into
\ialassertions~\cite{assert-lang-disciplbook,hermenegildo11:ciao-design-tplp}
when \xc files are loaded into the tool. The \ialassertions output by
the analysis are also translated back into \xc assertions and added
inline to a copy of the original \xc file.

More specifically, the syntax of the \xc assertions accepted by our tool
is given by the grammar in Fig.~\ref{fig:xc-assrt-syntax}, where the
non-terminal \sym{identifier} stands for a standard C identifier,
\sym{integer} stands for a standard C integer, and the
non-terminal \sym{ground\_expr} for a ground expression, i.e., an
expression of type \sym{expr} that does not contain any C identifiers
that appear in the assertion scope (the non-terminal \sym{scope}).

\xc assertions are  directives starting with the token
{\tt \#pragma} followed by the assertion {\em status}, the assertion
{\em scope}, and the assertion {\em body}.  The assertion {\em
  status} 
can take
several values, 
including \texttt{check}, \texttt{checked}, \texttt{false},
\texttt{trust} or \texttt{true}, with the same meaning as in the 
\ialassertions. 

The assertion scope identifies the function the assertion is referring
to, and provides the local names for the arguments of the function to
be used in the body of the assertion.  
For instance, the scope {\tt biquadCascade(state, xn, N)} refers to the
function {\tt biquadCascade} and binds the arguments within the body
of the assertion to the respective identifiers {\tt state}, {\tt xn},
{\tt N}.  
While the arguments do not need to be named in a consistent way
w.r.t.\ the function definition, it is highly recommended for the sake
of clarity.  
The \emph{body} of the assertion expresses bounds on the energy
consumed by the function and optionally contains preconditions (the
left hand side of the {\tt ==>} arrow) that constrain the argument
sizes.

Within the body, expressions of type \sym{expr} are built from
standard integer arithmetic functions (i.e., {\tt +}, {\tt -}, {\tt *},
{\tt /}) plus the following extra functions:
\begin{itemize}
\item {\tt power(base, exp)} is the exponentiation of {\tt base} by {\tt exp};
\item {\tt log(base, expr)} is the logarithm of  {\tt expr} in base {\tt base};
\item {\tt sum(id, lower, upper, expr)} is the summation of the sequence of the values 
   of {\tt expr} for {\tt id} ranging from {\tt lower} to {\tt upper}; 
\item {\tt prod(id, lower, upper, expr)} is the product of the sequence of the values 
   of {\tt expr} for {\tt id} ranging from {\tt lower} to {\tt upper};
 \item {\tt min(arr)} is the minimal value of the array {\tt arr};
\item {\tt max(arr)} is the maximal value of the array {\tt arr}.
\end{itemize}
Note that the argument of {\tt min} and {\tt max} must be an identifier
appearing in the assertion scope that corresponds to an array of integers
(of arbitrary dimension).


\subsection{Using the Tool for Energy Verification: Example}

In this section we illustrate the use of the tool described above for
the energy verification application, in a scenario where an embedded
software developer has to decide values for program parameters that
meet an energy budget. In particular we consider the development of an
equalizer (\xc) program using a biquad filter. In
Fig.~\ref{fig:biquadmenu} we can see what the graphical user
interface of our prototype looks like, with the code of this biquad
example ready to be verified. The purpose of an equalizer is to take a
signal, and to attenuate / amplify different frequency bands. For
example, in the case of an audio signal, this can be used to correct
for a speaker or microphone frequency response. The energy consumed by
such a program directly depends on several parameters, such as the
sample rate of the signal, and the number of banks, typically between
3 and 30 for an audio equalizer. A higher number of banks enables the
designer to create more precise frequency response curves.

\begin{figure*}[t]
  \includegraphics[width=0.8\textwidth,clip,trim=565 350 0 0]
    {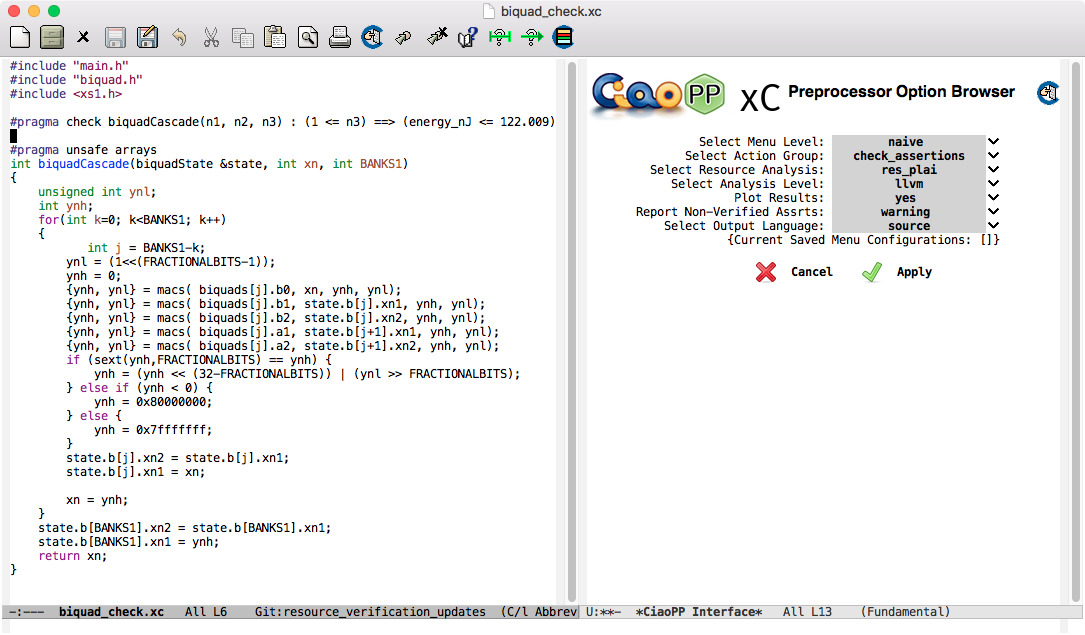}\\ [5mm]

  \includegraphics[width=0.8\textwidth,clip,trim=0 0 500 0]
    {./menu-biquad_nJ}

\caption{Graphical User Interface of the prototype with the \xc biquad program.}
\label{fig:biquadmenu}
\end{figure*}

Assume that the developer has to decide how many banks to use in order
to meet an energy budget while maximizing the precision of the frequency
response curves at the same time.  In this example, the developer
writes an \xc program where the number of banks is a variable, say
$\texttt{N}$. Assume also that the energy constraint to be met is that
an application of the biquad program should consume less or equal than
122 nJ (nanojoules).  
This constraint is expressed by the following check assertion:

\begin{tabbing}
\texttt{\#pragma} \= \texttt{\textcolor{check}{check} biquadCascade(state,xn,N) :} \\     
                  \> \texttt{(1 <= N) ==> (energy\_nJ <= 122)}
\end{tabbing}

\noindent
where the precondition $\texttt{1 <= N}$ in the assertion (left hand
side of $\texttt{==>}$) expresses that the constraint should hold when
the number of banks is greater than 1.

Then, the developer makes use of the tool by selecting the following
menu options, as shown in the right hand side of
Fig.~\ref{fig:biquadmenu}: \texttt{check\_assertions}, for
\texttt{Action Group}; \texttt{res\_plai}, for \texttt{Resource
  Analysis}; \texttt{llvm}, for \texttt{Analysis Level} (which will
tell the analysis to take the \llvmir option by compiling the source
code into \llvmir and transforming it into \hcir for analysis);
\texttt{source}, for \texttt{Output Language} (the language in which
the analysis / verification results are shown, in this case the
original XC source); and finally \texttt{yes} for \texttt{Plot
  results} (in order to obtain a graphical representation of the
results). After clicking on the \texttt{Apply} button below the menu
options, the analysis is performed, which infers a lower and an upper
bound function for the consumption of the program.  Specifically, those
bounds are represented by the following assertion, which is included
in the output of the tool:

\begin{tabbing}
\texttt{\#pragma} \= \texttt{\textcolor{true}{true} biquadCascade(A,B,C) : }\\
     \> \texttt{(16.502*C+5.445 <= energy\_nJ \&\& energy\_nJ <= 16.652*C+5.445)}
\end{tabbing}

Then, the verification of the specification, i.e., check assertion, is
performed by comparing the energy bound functions above with the upper
bound expressed in the specification, i.e.,
122 nJ, a constant value in this case, as illustrated in
Fig.~\ref{fig:plots}. Such figure has been automatically generated
by our tool
and includes the plots of both
the specification and the analysis results, which contributes to a
better understanding of the results.
The $x$ axis represents the input data size, in this case, the number
of banks given by $\mathtt{N}$, on which the cost function depends,
and the $y$ axis represents the energy consumption. The flat (blue)
region corresponds to the specification whereas the sloping
green region which lies between two red lines represents the area bounded by the cost functions
automatically inferred by the analyzer.

As a result of the comparison, the following two assertions are
produced and included in the output file of the tool:

\begin{tabbing}
\texttt{\#pragma} \= \texttt{\textcolor{checked}{checked} biquadCascade(state,xn,N) : }\\
     \> \texttt{   (1 <= N \&\& N <= 7) ==> (energy\_nJ <= 122) }\\
\texttt{\#pragma} \= \texttt{\textcolor{false}{false} biquadCascade(state,xn,N) : }\\
     \> \texttt{   (8 <= N) ==> (energy\_nJ <= 122) }
 \end{tabbing}

\begin{figure*}[t]
\centerline{\includegraphics[scale=0.5]{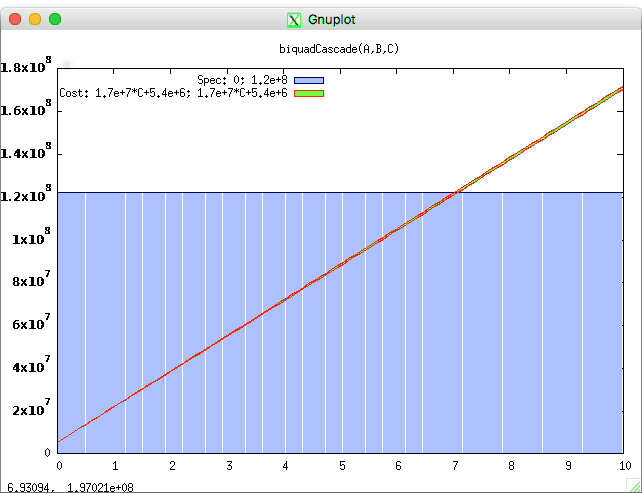}}
\caption{Visualization of analysis results and specifications in the tool.}
\label{fig:plots}
\end{figure*}

The first one expresses that the original assertion holds subject to a
precondition on the parameter $\texttt{N}$, i.e., in order to meet the
energy budget of 
122 nanojoules, the number of banks $\texttt{N}$
should be a natural number in the interval $[1,~7]$ (precondition
$\texttt{1 <= N \&\& N <= 7}$). The second one expresses that the
original specification is not met (status \texttt{false}) if the
number of banks is greater or equal to $8$. 

 Since the goal is to maximize the precision of the frequency response
 curves and to meet the energy budget at the same time, the number of
 banks should be set to 7. The developer could also be interested in
 meeting an energy budget but this time ensuring a lower bound on the
 precision of the frequency response curves. For example by ensuring that
 $\texttt{N} \geq 3$, the acceptable values for $\texttt{N}$ would be
 in the range $[3,~7]$.

 In the more general case where the energy function inferred by the
 tool depends on more than one parameter, the determination of the
 values for such parameters is reduced to a constraint solving
 problem. The advantage of this approach is that the parameters can be
 determined analytically at the program development phase, without the
 need of determining them experimentally by measuring the energy of
 expensive program runs with different input parameters, which in any
 case cannot provide hard guarantees.

\begin{figure*}[t]
\centerline{\includegraphics[scale=0.25]{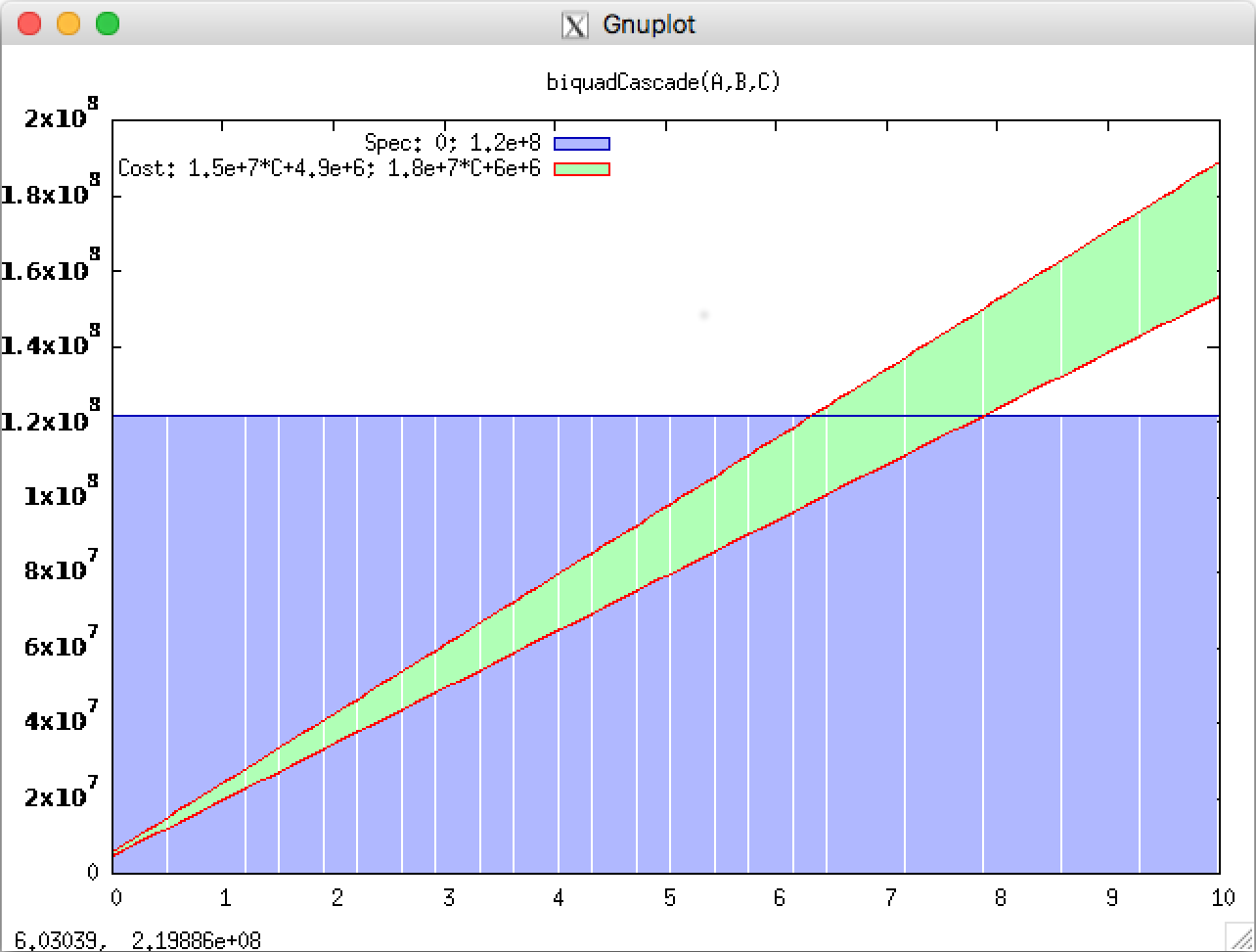}}
\caption{Visualization of analysis results and specifications, using a different
energy model.}
\label{fig:biquad-otherresult}
\end{figure*}

Our tool produces sound results, provided of course that the energy
model expresses correct information. Also, the accuracy of the bounds
obtained depends on the accuracy of the energy model.
Note that, if the objective is to choose parameters that guarantee
completely that the specifications are met, even not very tight bounds
will be better than testing/profiling, which, as mentioned before,
cannot provide hard guarantees. On the other hand, having tight bounds
is always desirable, in order to get more efficient values.

In order to illustrate this, assume that the user uses a slightly
different energy model for the verification, which considers a $10\%$
error in its energy measurements, and assume that this model,
expressed again as a set of $trust$ assertions in the \ciao~assertion
language, as in Fig.~\ref{fig:energymodel}, is contained in file
\texttt{energy\_llvm\_10}. In this case, the user needs to provide
this information to the tool as follows:

\begin{verbatim}
#pragma model <energy_llvm_10> 
\end{verbatim}

\noindent
Following the same procedure as before, after running the tool
the following results are obtained:

\begin{tabbing}
\texttt{\#pragma} \= \texttt{\textcolor{true}{true} biquadCascade(A,B,C) : }\\
     \> \texttt{(14.851*C+4.9 <= energy\_nJ \&\& energy\_nJ <= 18.317*C+5.989) }\\
\ \\
\texttt{\#pragma} \= \texttt{\textcolor{checked}{checked} biquadCascade(state,xn,N) : }\\
     \> \texttt{   (1 <= N \&\& N <= 6) ==> (energy\_nJ <= 122) }\\
\texttt{\#pragma} \= \texttt{\textcolor{check}{check} biquadCascade(state,xn,N) : }\\
     \> \texttt{   (7 <= N \&\& N <= 7) ==> (energy\_nJ <= 122) }\\
\texttt{\#pragma} \= \texttt{\textcolor{false}{false} biquadCascade(state,xn,N) : }\\
     \> \texttt{   (8 <= N) ==> (energy\_nJ <= 122) }
 \end{tabbing}

As we can see, the area delimited by the lower and upper bound
functions inferred is wider, and the verification results include an
additional \texttt{check} assertion for $N=7$. The assertion with
status \textit{check} indicates that for the value of the argument
$N=7$, the verification cannot conclude if the energy budget will be met or
not. This fact is represented in Fig.~\ref{fig:biquad-otherresult},
where the sloping/green analysis region intersects the flat/blue
specification region but is not completely included in it.

\section{Related Work}
\label{sec:relatedworks}

The closest related work we are aware of presents a method for
comparison of cost functions inferred by the COSTA system for Java
bytecode~\cite{AlbertAGHP09,Albert2015-cost-func-comp}. The method
proves whether a cost function is smaller than another one \emph{for
  all the values} of a given initial set of input data sizes.  The
result of this comparison is a boolean value. However, as mentioned
before, in our approach the result is in general a set of
intervals in which the initial set of input data sizes is
partitioned, so that the result of the comparison is different for
each subset.  Also, \cite{AlbertAGHP09} differs in that comparison is
syntactic, using a method similar to what was already being done in
the~\ciaopp~system: performing a function normalization and then using
some syntactic comparison rules.
In this work we go beyond these syntactic comparison
rules.
Note also that, although we have presented our work applied to Horn
clause programs and XC programs, the \ciaopp system can also deal with
Java bytecode~\cite{resources-bytecode09,decomp-oo-prolog-lopstr07}.

In a more general context, using abstract interpretation in
verification, debugging, and related tasks has now become well
established. To cite some early work, abstractions were used in the
context of algorithmic debugging in~\cite{Lichtenstein88}.  Abstract
interpretation has been applied by Bourdoncle~\cite{Bourd} to
debugging of imperative programs and by Comini et al.\ to the
algorithmic debugging of logic programs~\cite{CominiDD95} (making use
of partial specifications in~\cite{abs-diag-jlp}), and by
P.\ Cousot~\cite{Cousot-VMCAI03} to verification, among others.  The
\ciaopp~framework~\cite{aadebug97-informal,prog-glob-an,ciaopp-sas03-journal-scp}
was pioneering, offering an integrated approach combining
abstraction-based verification, debugging, and run-time checking with
an assertion language.  This approach has recently also been applied
in a number of contract-based
systems~\cite{clousot-2010-short,DBLP:conf/oopsla/Tobin-HochstadtH12-short,DBLP:conf/pldi/NguyenH15-short},

Horn clauses are used in many different applications nowadays as compilation
targets or intermediate representations in analysis and verification
tools~\cite{resources-bytecode09,decomp-oo-prolog-lopstr07,DBLP:conf/tacas/GrebenshchikovGLPR12,DBLP:conf/fm/HojjatKGIKR12-short,z3,hcvs14,kafle-cav2016}.

\secbeg
\section{Conclusions}
\secend
\label{sec:conclusions}

Taking as starting point our configurable framework
for 
static resource usage verification
where specifications can include both lower and upper bound, data
size-dependent resource usage functions, we have reviewed how this
framework supports different programming languages (both declarative
and imperative) as well as different compiler representations. This is
achieved by a translation of the corresponding \inputlanguage to an
internal representation based on Horn clauses (\hcir). The framework
is architecture independent, since we use low-level \resourceusage
models that are specific for each architecture, describing the
\resourceusage of basic elements and operations.

We have also generalized the assertions supported to include
preconditions expressing
intervals within which the input data size of a program is supposed to
lie (i.e., intervals for which each assertion is applicable). 
These extended assertions can be used both in specifications and in
the output of the analyzers. 
In addition, we have provided a formalization of how
the traditional framework is extended for the data size
interval-dependent verification of resource usage properties.

Our framework can deal with different types of
\resourceusage functions (e.g., polynomial, exponential, summation or
logarithmic functions), in the sense that the analysis can infer them,
and the specifications can involve them.

A key aspect of the framework is to be able to compare these
mathematical functions.  We have
proposed methods for function comparison that are safe/sound, in the
sense that the results of verification
either give a valid answer (true or false) or return ``unknown.''
In the case where the resource usage functions being compared depend
on one variable (which represents some input argument size) our method
reveals particular numerical intervals for such variable, if they
exist, which might result in different answers to the verification
problem: a given specification might be proved for some intervals but
disproved for others.  Our current method 
computes such intervals with precision for polynomial and exponential
resource usage functions, and in general for functions that can be
accurately approximated by polynomials near the point $x=0$.
Moreover, we have proposed an iterative
post-process to safely tune up the interval bounds by taking as
starting values the previously computed roots of the polynomials.

We have also reported on a prototype implementation of the proposed
general framework for resource usage verification and provided
experimental results, which are encouraging, suggesting that our
techniques are feasible and accurate in practice.  We have also
specialized such implementation for verifying energy consumption
specifications of imperative/embedded programs.
Finally, we have shown through an example, and using the prototype
implementation
for the \xc language and XS1-L architecture, how our verification
system can prove whether 
energy consumption specifications are met or not, or infer particular
conditions under which the specifications hold. 
We have illustrated through this example how embedded software
developers can use this tool, in particular for determining values
for program parameters that ensure meeting a given energy budget while
minimizing the loss in quality of service.

\begin{small}

\end{small}

\end{document}